\newtheorem{theorem}{Theorem}
\DeclareMathAlphabet{\mathcal}{OMS}{cmsy}{m}{n}
\newcommand{\nop}[1]{}
\newtheorem{example}{Example}
\newtheorem{lemma}{Lemma}
\newtheorem{corollary}{Corollary}
\newtheorem{fact}{Fact}
\begin{document}
\begin{sloppy}
%
\setcopyright{acmcopyright}
\title{Tracking Influential Nodes in Dynamic Networks}

\numberofauthors{1} 
\author{
\alignauthor
Yu Yang$^{\dag}$, Zhefeng Wang$^{\ddag}$, Jian Pei$^{\dag}$ and Enhong Chen$^{\ddag}$\\
       \affaddr{$^\dag$Simon Fraser University, Burnaby, Canada}\\
       \affaddr{$^\ddag$University of Science and Technology of China,Hefei, China}\\
       \email{yya119@sfu.ca, zhefwang@mail.ustc.edu.cn, jpei@cs.sfu.ca, cheneh@ustc.edu.cn}       
}

\maketitle
\begin{abstract}
In this paper, we tackle a challenging problem inherent in a series of applications: tracking the influential nodes in dynamic networks.  Specifically, we model a dynamic network as a stream of edge weight updates. This general model embraces many practical scenarios as special cases, such as edge and node insertions, deletions as well as evolving weighted graphs.  Under the popularly adopted linear threshold model and independent cascade model, we consider two essential versions of the problem: finding the nodes whose influences passing a user specified threshold and finding the top-$k$ most influential nodes.  Our key idea is to use the polling-based methods and maintain a sample of random RR sets so that we can approximate the influence of nodes with provable quality guarantees.  We develop an efficient algorithm that incrementally updates the sample random RR sets against network changes. We also design methods to determine the proper sample sizes for the two versions of the problem so that we can provide strong quality guarantees and, at the same time, be efficient in both space and time.  In addition to the thorough theoretical results, our experimental results on $5$ real network data sets clearly demonstrate the effectiveness and efficiency of our algorithms.
\end{abstract}

\terms{Algorithms}

\section{Introduction}\label{sec:intro}

More and more applications are built on dynamic networks and need to track influential nodes.  For example, consider cold-start recommendation in a dynamic social network -- we want to recommend to a new comer some existing users in a social network.  A new user may want to subscribe to the posts from some users in order to obtain hot posts (posts that are widely spread in the social network) at the earliest time. Clearly for such a new user we should recommend her some influential users in the current network. Traditional Influence Maximization cannot find those influential users we want here because it is for marketing in which all seed users have to be synchronized to spread the same content, while in reality online influential individuals often produce and spread their own contents in an asynchronized manner. The influential users we want are those who have high individual influence. 

More often than not, the underlying network is highly dynamic, where each node is a user and an edge captures the interaction from a user to another.  User interactions evolve continuously over time.  In an active social network, such as Twitter, Facebook, LinkedIn, Tencent WeChat, and Sina Weibo, the evolving dynamics, such as rich user interactions over time, is the most important value. It is critical to capture the most influential users in an online manner.  To address the needs, we have to tackle two challenges at the same time, influence computation and dynamics in networks.

Influence computation is very costly, technically \#P-hard under most influence models. Most existing studies have to compromise and consider the influence maximization problem only on a static network. Here, influence maximization in a network is to find a set of vertices $S$ such that the combined influence of the nodes in the set is maximized and $S$ satisfies some constraints such as the size of $S$ is within a budget.  The incapability of handling dynamics in large evolving networks seriously deprives many opportunities and potentials in applications.  Also note that influence maximization is very different from finding influential individuals, for the reason that the best $k$-vertices set $S$ does not consist of the $k$ most influential individual nodes because influence spreads of different individuals may overlap.

Although influence maximization and finding influential nodes are highly related since they both need to compute influence in one way or another, these two problems serve very different application scenarios and face different technical challenges.  For example, influence maximization is a core technique in viral marketing~\cite{domingos2001mining}. At the same time, influence maximization is not useful in the cold-start recommendation scenario discussed above, since a user is interested in being connected with individual users of great potential influence and may follow them in interaction.

To the best of our knowledge, our study is the first to tackle the problem of tracking influential nodes in dynamic networks. Please note that finding influential nodes is different from influence maximization.  Specifically, we model a dynamic network as a stream of edge weight updates. Our model is general and embraces many practical scenarios as special cases.  Under the popularly adopted linear threshold model and independent cascade model, we consider two essential versions of the problem: (1) finding the nodes whose influences passing a user specified threshold; and (2) finding the top-$k$ most influential nodes.  Our key idea is to use the polling-based methods and maintain a sample of random RR sets so that we can approximate the influence of nodes with provable quality guarantees.  

Recently, there is encouraging progress in influence maximization on dynamic networks~\cite{chen2015influential,aggarwal2012influential,ohsaka2016dynamic}.  Due to the difference between influence maximization and finding influential nodes, the methods in those studies~\cite{chen2015influential,aggarwal2012influential,ohsaka2016dynamic} cannot be applied directly to find influential nodes.  Moreover, in terms of specific techniques, our study is also very different from~\cite{chen2015influential,aggarwal2012influential}.  Most importantly, the methods in~\cite{chen2015influential,aggarwal2012influential} are heuristic, and do not provide any provable quality guarantee. Although authors of~\cite{ohsaka2016dynamic} claim that the algorithm in~\cite{ohsaka2016dynamic} has theoretical guarantees, in experiments reported, a key parameter is empirically set and makes the error rate $\epsilon$ even greater than 1. The reason that the algorithm in~\cite{ohsaka2016dynamic} cannot be implemented with small error rate is that the constant factor in its complexity is too large to be practical in use.  In addition, the influence model considered in~\cite{chen2015influential,ohsaka2016dynamic} is the Independent Cascade model. The one in~\cite{aggarwal2012influential} is a non-linear system.  We address both the Linear Threshold model and the Independent Cascade model in this study. To the best of our knowledge, we are the first to tackle influence computation with provable quality guarantee and report experiment results where algorithms are implemented strictly to fulfill the theoretical guarantee under the two most widely adopted influence models on dynamic networks.

To tackle the novel and challenging problem of finding influential nodes in dynamic networks, we make several technical contributions.  We develop an efficient algorithm that incrementally updates the sample random RR sets against network changes.  We also design methods to determine the proper sample sizes for the two versions of the problem so that we can provide strong quality guarantees and at the same time be efficient in both space and time.  In addition to the thorough theoretical results, our experimental results on $5$ real data sets clearly demonstrate the effectiveness and efficiency of our algorithms.  The largest data set used contains over $41$ million nodes, $1.5$ billion edges and $0.3$ billion edge updates.

The rest of the paper is organized as follows. We review the related work in Section~\ref{sec:related}. In Section~\ref{sec:pre}, we recall the Linear Threshold model and the Independent cascade model, review the polling-based method for computing influence spread, and formulate influence in dynamic networks. In Section~\ref{sec:update}, we present methods updating random RR sets over a stream of edge weight updates. In Section~\ref{sec:threshold}, we tackle the problem of tracking nodes whose influence spreads pass a user-defined threshold. In Section~\ref{sec:topk}, the problem of finding the top-$k$ influential nodes is settled. We report the experimental results in Section~\ref{sec:exp}. We conclude the paper in Section~\ref{sec:con}.

\section{Related Work}\label{sec:related}
Domingos~\textit{et~al.}~\cite{domingos2001mining} proposed to take advantage of peer influence between users in social networks for marketing. Kempe~\textit{et~al.}~\cite{kempe2003maximizing} formulated the problem using two discrete influence models, namely Independent Cascade model and Linear Threshold model. Since then, influence computation, especially influence maximization, has drawn much attention from both academia and industry~\cite{chen2013information, du2013scalable, borgs2014maximizing, tang2015influence,  chen2010scalableLT, goyal2011simpath, lucier2015influence, rossi2015spread}. Some heuristic methods were designed for computing influence spread under the Linear Threshold model~\cite{goyal2011simpath, chen2010scalableLT}. For the Independent Cascade Model, ~\cite{cohen2014sketch, lucier2015influence} proposed approximations of influence spread estimations. Note that there are still gaps between estimations of influence spread and real influence spreads, which were not clearly quantified in~\cite{cohen2014sketch, lucier2015influence}. Consequently, both~\cite{cohen2014sketch} and~\cite{lucier2015influence} cannot compute influence spread with provable quality guarantees. Recently, a polling-based method~\cite{borgs2014maximizing, tang2015influence, tang2014influence} was proposed for influence maximization under general triggering models. The key idea is to use some ``Reversely Reachable'' (RR) sets~\cite{tang2015influence, tang2014influence} to approximate the real influence spread of nodes. The error of approximation can be bounded with a high probability if the number of RR sets is large enough.

Extracting influential nodes in social networks is also an important problem in social network analysis and has been extensively investigated~\cite{goyal2008discovering, agarwal2008identifying, weng2010twitterrank, cha2010measuring}. In addition to the marketing value, influential individuals are also useful in recommender systems in online web service~\cite{agarwal2008identifying, weng2010twitterrank}. Due to the computational hardness of influence spread~\cite{chen2010scalableLT, chen2010scalable}, most methods did not use influence models to measure a user's influence, but adopted measures like PageRank which can be efficiently computed.

In a few applications, the underlying networks are evolving all the time~\cite{leskovec2008microscopic, leskovec2005graphs}. Rather than re-computing from scratch, incremental algorithms are more desirable in graph analysis tasks on dynamic networks. Maintaining PageRank values of nodes on an evolving graph was studied in~\cite{bahmani2010fast, ohsaka2015efficient}. Hayashi~\textit{et~al.}~\cite{hayashi2015fully} proposed to utilize a sketch of all shortest paths to dynamically maintain the edge betweenness value. The dynamics considered by the above work is a stream of edge insertions/deletions, which is not suitable for influence computation. The dynamics of influence network is more complicated, because besides edge insertions/deletions, influence probabilities of edges may also evolve over time~\cite{lei2015online}.

Aggarwal~\textit{et~al.}~\cite{aggarwal2012influential} explored how to find a set of nodes that has the highest influence within a time window $[t_0,t_0+h]$. They modeled influence propagation as a non-linear system which is very different from triggering models like the Linear Threshold model or the Independent Cascade model. The algorithm in~\cite{aggarwal2012influential} is heuristic and the results produced do not come with any provable quality guarantee. 

Chen~\textit{et~al.}~\cite{chen2015influential} investigated incrementally updating the seed set for influence maximization under the Independent Cascade model. They proposed an algorithm which utilizes the seed set mined from the former network snapshot to efficiently find the seed set of the current snapshot. An Upper Bound Interchange heuristic is applied in the algorithm. However, the algorithm in~\cite{chen2015influential} is costly in processing updates, since updating the Upper Bound vector for filtering non-influential nodes takes $O(m)$ time where $m$ is the number of edges. Moreover, the SP1M heuristic~\cite{kimura2006tractable}, which does not have any approximation quality guarantee, was adopted in~\cite{chen2015influential} for estimating influence spread of nodes. Thus, the set of influential nodes, even when the size of the seed set is set to $1$, does not have any provable quality guarantee.  

Independently and simultaneously\footnote{Early versions of our paper can be found at \url{https://arxiv.org/abs/1602.04490}} Ohsaka~\textit{et~al.}~\cite{ohsaka2016dynamic} studied a related problem, maintaining a number of RR sets over a stream of network updates under the IC model such that $(1-1/e-\epsilon)$-approximation influence maximization queries can be achieved with probability at least $1-\frac{1}{n}$. Our work is different from~\cite{ohsaka2016dynamic} in the following aspects. First, the problems are different. The problem tackled in~\cite{ohsaka2016dynamic}  is influence maximization, while our problem is tracking influential individuals. Second, ~\cite{ohsaka2016dynamic} only studied the IC model while in our work we addressed both the IC and the LT models. Moreover, our algorithm is theoretically sound and was strictly implemented to fulfill the theoretical guarantee in experiments, while it is not the case in~\cite{ohsaka2016dynamic}. To enable theoretical guarantees for the algorithm in~\cite{ohsaka2016dynamic}, one has to collect enough RR sets until the cost of all RR sets (i.e., the number of edges traversed when generating those RR sets) is $\Theta(\frac{(m+n)\log{n}}{\epsilon^3})$, which is a very large number in practice. Thus, in the experiments reported in~\cite{ohsaka2016dynamic}, the demanded cost is empirically set to $32(m+n)\log{n}$, which means $\epsilon$ is even greater than 1, because the constant factor hidden in $\Theta(\frac{(m+n)\log{n}}{\epsilon^3})$ is greater than 32.

\section{Preliminaries}\label{sec:pre}

In this section, we recall the Linear Threshold influence model and the Independent Cascade Model~\cite{kempe2003maximizing}.  We also review the polling method for computing influence spread~\cite{borgs2014maximizing, tang2014influence, tang2015influence}.  We then formulate influence in dynamic networks. For readers' convenience, Table~\ref{tab:notation} lists the frequently used notations.

\begin{table}[t]
\centering\small
\begin{tabular}{|c|p{55mm}|}
\hline
Notation & Description \\ \hline
$G=\langle V,E,w\rangle$ & A social network, where each edge $(u,v) \in E$ is associated with an influence weight $w_{uv}$\\ \hline
$w_{uv}$ & weight of the edge $(u,v)$ (LT model); propagation probability of the edge $(u,v)$ (IC model) \\ \hline
$n=|V|$ & The number of nodes in $G$ \\ \hline
$m=|E|$ & The number of edges in $G$ \\ \hline
$N^{in}(u)$ & The set of in-neighbors of $u$ \\ \hline
$w_{u}$ & Self-weight of $u$ \\ \hline
$W_{u}$ & $W_{u}=w_{u}+\sum_{v \in N^{in}(u)}{w_{vu}}$, the total weight of $u$ \\ \hline
$p_{uv}$ & $p_{uv}=\frac{w_{uv}}{W_u}$, the probability that $v$ is influenced by its neighbor $u$ (LT Model)\\ \hline
$I_u$ & The influence spread of node $u$ \\ \hline
$\bar{I}$ & The average influence spread of individual nodes \\ \hline
$M$ & The number of random RR sets \\ \hline
$\mathcal{R}$ & A random hyper graph, which can also be regarded as a collection of $M$ random RR sets \\ \hline
$\mathcal{D}(u)$ & The degree of $u \in V$ in $\mathcal{R}$ \\ \hline
$\mathcal{F_R}(u)$ & $\mathcal{F_R}(u)=\frac{\mathcal{D}(u)}{M}$, the fraction of random RR sets containing $u$ \\ \hline
$T$ & Influence threshold set by users \\ \hline
$I_{max}$ & Influence spread of the most influential individual node \\ \hline
$I^k$ & Influence spread of the $k-$th most influential individual node \\ \hline
$\mathcal{F_R}^*$ & The highest $\mathcal{F_R}(u)$ value for $u \in V$ \\ \hline
$\mathcal{F_R}^k$ & The $k-$th highest $\mathcal{F_R}(u)$ value for $u \in V$ \\ \hline
\end{tabular}
\caption{Frequently used notations.}
\label{tab:notation}
\end{table}

\subsection{Linear Threshold Model}\label{subsec:LT}

Consider a directed social network $G = \langle V,E,w \rangle$ where $V$ is a set of vertices, $E \subseteq V \times V$ is a set of edges, and each edge $(u,v) \in E$ is associated with an influence weight $w_{uv} \in [0,+\infty)$. Each node $v \in V$ also carries a weight $w_v$, which is called the \emph{self-weight} of $v$. Denote by $W_v=w_v+\sum_{u \in N^{in}(v)}{w_{uv}}$ the total weight of $v$, where $N^{in}(v)$ is the set of $v$'s in-neighbors. 

We define the \emph{influence probability} $p_{uv}$ of an edge $(u,v)$ as $\frac{w_{uv}}{W_v}$. Clearly, for $v \in V$, $\sum_{u \in N^{in}(v)}{p_{uv}} \leq 1$. 

In the Linear Threshold (LT) model~\cite{kempe2003maximizing}, given a seed set $S \subseteq V$, the influence propagates in $G$ as follows. First, every node $u$ randomly selects a threshold $\lambda_u \in [0,1]$, which reflects our lack of knowledge about users' true thresholds. Then, influence propagates iteratively. Denote by $S_i$ the set of nodes that are active in step $i$ $(i=0, 1, \ldots)$ and $S_0=S$. In each step $i \geq 1$, an inactive node $v$ becomes active if
$$
    \sum_{u \in N^{in}(v) \cap S_{i-1}}{p_{uv}} \geq \lambda_v
$$
The propagation stops at step $t$ if $S_t=S_{t-1}$. Let $I(S)$ be the expected number of nodes that are finally active when the seed set is $S$. We call $I(S)$ the \emph{influence spread} of $S$. Let $I_u$ be the influence spread of a single node $u$. 

Kempe~\textit{et~al.}~\cite{kempe2003maximizing} proved that the LT model is equivalent to a ``live-edge'' process where each node $v$ picks at most one incoming edge $(u,v)$ with probability $p_{uv}$.  Consequently, $v$ does not pick any incoming edges with probability $1-\sum_{u \in N^{in}(v)}{p_{uv}}=\frac{w_v}{W_v}$. All edges picked are ``live'' and the others are ``dead''. Then, the expected number of nodes reachable from $S \subseteq V$ through live edges is $I(S)$, the influence spread of $S$.

It is worth noting that our description of the LT model here is slightly different from the original~\cite{kempe2003maximizing}: we use a function of edge weights and self-weight of nodes to represent influence probabilities. Representing influence probabilities in this way is widely adopted in the existing literature~\cite{chen2010scalableLT, goyal2011simpath, tang2014influence, tang2015influence, goyal2010learning}.

\subsection{Independent Cascade Model}\label{subsec:IC}

A social network in the Independent Cascade (IC) model is also a weighted graph $G = \langle V,E,w \rangle$. Let $w_{uv}$ represent the propagation probability of the edge $(u,v)$, which is the probability that $v$ is activated by $u$ through the edge in the next step after u is activated. Clearly for the IC model, all $w_{uv} \in [0,1]$. 

In the IC model~\cite{kempe2003maximizing}, given a seed set $S \subseteq V$, the influence propagates in $G$ iteratively as follows. Denote by $S_i$ the set of nodes that are active in step $i$ $(i=0, 1, \ldots)$ and $S_0=S$. At step $i+1$, each node $u$ in $S_i$ has a single chance to activate each inactive neighbor $v$ with an independent probability $w_{uv}$. The propagation stops at step $t$ if $S_t=\emptyset$. Similar to the LT model, the influence spread $I(S)$ denotes the expected number of nodes that are finally active when the seed set is $S$.

The ``live-edge'' process~\cite{kempe2003maximizing} of the IC model is to keep each edge $(u,v)$ with a probability $w_{uv}$ independently. All kept edges are ``live'' and the others are ``dead''. Then, the expected number of nodes reachable from $S$ via live edges is the influence spread $I(S)$.

\subsection{The Polling Method for Influence Computation}\label{subsec:poll}

Computing influence spread is \#P-hard under both the LT model and the IC model~\cite{chen2010scalableLT, chen2010scalable}. Recently, a polling-based method~\cite{borgs2014maximizing, tang2014influence, tang2015influence} was proposed for approximating influence spread of triggering models~\cite{kempe2003maximizing} like the LT model and the IC model. Here we briefly review the polling method for computing influence spread.

Given a social network $G=\langle V,E,w \rangle$, a poll is conducted as follows: we pick a node $v \in V$ in random and then try to find out which nodes are likely to influence $v$. We run a Monte Carlo simulation of the equivalent ``live-edge'' process.  The nodes that can reach $v$ via live edges are considered as the potential influencers of $v$. The set of influencers found by each poll is called a \emph{random RR (Reversely Reachable) set}. 

Let $R_1$, $R_2$, ..., $R_M$ be a sequence of random RR sets generated by $M$ polls, where $M$ can also be a random variable. The $M$ random RR sets form a random hyper-graph $\mathcal{R}$ where the set of nodes is still $V$ and each random RR set is a hyper edge. Denote by $\mathcal{D}(S)$ the degree of a set of nodes $S$ in the hyper-graph, which is the number of hyper-edges containing at least one node in $S$. Let $\mathcal{F_R}(S)=\frac{\mathcal{D}(S)}{M}$. By the linearity of expectation, it has been shown that $n\mathcal{F_R}(S)$ is an unbiased estimator of $I(S)$~\cite{borgs2014maximizing, tang2015influence}. Tang~\textit{et~al.}~\cite{tang2015influence} proved that the corresponding sequence $x_1$, $x_2$, ..., $x_M$ is a martingale~\cite{chung2006concentration}, where $x_i=1$ if $S \cap RR_i \neq \emptyset$ and $x_i=0$ otherwise. We have $E[\sum_{i=1}^M{x_i}]=E[\mathcal{D}(S)]=\frac{MI(S)}{n}$. The following results~\cite{tang2015influence} show how $E[\sum_{i=1}^M{x_i}]$ is concentrated around $\frac{MI(S)}{n}$.

\begin{corollary}[\cite{tang2015influence}]\label{cor:c1}\small
For any $\xi>0$,
$$
\begin{aligned}
    \textup{Pr}\Big[\sum_{i=1}^M{x_i}-Mp \geq \xi Mp \Big] & \leq & \textup{exp}\Big ( -\frac{\xi^2}{2+\frac{2}{3} \xi} Mp \Big)\\
    \textup{Pr}\Big[\sum_{i=1}^M{x_i}-Mp \leq -\xi Mp \Big] & \leq & \textup{exp}\Big ( -\frac{\xi^2}{2} Mp \Big)
\end{aligned}
$$
where $p=\frac{I(S)}{n}$.
\end{corollary}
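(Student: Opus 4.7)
The plan is to set up $\{\sum_{j=1}^i (x_j - p)\}_{i\geq 0}$ as a martingale and then invoke a Bernstein/Freedman-type martingale concentration inequality. Take $\mathcal{F}_i$ to be the $\sigma$-algebra generated by the first $i$ polls. Because each poll picks a node uniformly and simulates the live-edge process independently of the past, $\Pr[S \cap R_i \neq \emptyset \mid \mathcal{F}_{i-1}] = I(S)/n = p$, so $Y_i := x_i - p$ is a martingale difference sequence. Two bookkeeping facts are immediate: $|Y_i| \leq 1$ almost surely, and $\mathrm{Var}(Y_i \mid \mathcal{F}_{i-1}) = p(1-p) \leq p$. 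Hence the predictable quadratic variation satisfies $\langle Y \rangle_M = \sum_{i=1}^M \mathrm{Var}(Y_i \mid \mathcal{F}_{i-1}) \leq Mp$.

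For the upper tail I would apply Freedman's inequality: for any $t > 0$ and any $V \geq \langle Y \rangle_M$,
$$
\Pr\Big[\sum_{i=1}^M Y_i \geq t\Big] \leq \exp\Big(-\frac{t^2}{2V + \tfrac{2}{3}\, t}\Big).
$$
Setting $V = Mp$ and $t = \xi Mp$ collapses this to the claimed $\exp(-\frac{\xi^2}{2 + (2/3)\xi}\,Mp)$. The lower tail follows from the one-sided sub-Gaussian bound applied to $-Y_i$: since the differences are bounded above and the variance proxy is again at most $Mp$, $\Pr[\sum Y_i \leq -t] \leq \exp(-t^2/(2V))$, and substituting $t = \xi Mp$, $V = Mp$ yields $\exp(-\frac{\xi^2}{2}\,Mp)$. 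The asymmetry between the two bounds is the usual gap between Bernstein's two-sided inequality and its one-sided sub-Gaussian companion, reflecting that only the upper tail needs the extra $\tfrac{2}{3}t$ correction to handle the positive-skew behavior of bounded Bernoulli-like variables.

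The subtle point I expect to be the real obstacle is that $M$ is permitted to be a random stopping time, not a deterministic sample size: in the polling framework one may keep drawing RR sets until a data-dependent criterion fires. The cleanest remedy is to use the stopped form of Freedman's inequality. For any stopping time $\tau$ adapted to $(\mathcal{F}_i)$, the same exponential bound holds with $M$ replaced by $\tau$, provided $\langle Y \rangle_\tau \leq V$ almost surely. Since $\langle Y \rangle_i \leq i\,p$ holds identically in our setting, the hypothesis is satisfied with $V = Mp$ whenever the stopping time equals $M$. This is precisely the ingredient that justifies stating the bound with $M$ even when $M$ is chosen adaptively, and it is the point where the standard martingale toolbox has to be cited (rather than the classical Chernoff bound for a fixed sum of independent Bernoullis).
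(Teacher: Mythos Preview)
The paper does not prove this corollary at all; it is quoted verbatim from Tang \textit{et al.} (2015) as a known concentration result, so there is no in-paper proof to compare against. Your martingale/Freedman route is essentially the argument used in that reference for the upper tail, and your observation about $M$ being a stopping time is exactly the reason the cited work phrases things in martingale language rather than as an i.i.d.\ Chernoff bound.

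One genuine loose end: the lower-tail bound $\exp(-\tfrac{\xi^2}{2}Mp)$ does \emph{not} follow from a generic ``one-sided sub-Gaussian bound with variance proxy $Mp$.'' A bare Freedman/Bernstein inequality applied to $-Y_i$ still carries a $\tfrac{2}{3}t$-type correction term, and Hoeffding's lemma on increments in $[-p,1-p]$ gives proxy $1/4$ per step, yielding only $\exp(-2\xi^2 M p^2)$, which is weaker. The clean $\exp(-\tfrac{\xi^2}{2}Mp)$ requires the specific Bernoulli structure: bound $E[e^{-\lambda x_i}\mid\mathcal{F}_{i-1}] = 1 + p(e^{-\lambda}-1) \leq e^{p(e^{-\lambda}-1)}$, iterate conditionally, optimize $\lambda$ to get the relative-entropy form, and then invoke $(1-\xi)\ln(1-\xi)\geq -\xi+\xi^2/2$. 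This is a bookkeeping fix rather than a structural flaw in your plan, but as written the lower-tail step does not go through.
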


Sections~\ref{sec:threshold} and~\ref{sec:topk} will use the above results to analyze how many random RR sets are needed for extracting influential nodes. Note that since the problem we study in this paper is different from influence maximization, the results (theorems and lemmas) in~\cite{tang2015influence} cannot be applied to our analysis.

\subsection{Influence in Dynamic Networks}\label{subsec:DN}

Real online social networks, such as the Facebook network and the Twitter network, change very fast and all the time. Relationships among users keep changing, and influence strength of relationships also varies over time. Lei~\textit{et~al.}~\cite{lei2015online} pointed out that influence probabilities may change due to former inaccurate estimation or evolution of users' relations over time. However, the traditional formulation of dynamic networks only considers the topological updates, that is, edge insertions and edge deletions~\cite{bahmani2010fast, ohsaka2015efficient, hayashi2015fully}. Such a formulation is not suitable for realtime accurate analysis of influence.

According to the LT model reviewed in Section~\ref{subsec:LT}, the change of influence probabilities along edges can be reflected by the change of edge weights. For the IC model, since the weight of an edge is the propagation probability, the updates on edge weights are updates on propagation probabilities. Therefore, we model a dynamic network as a stream of weight updates on edges. 

A \emph{weight update} on an edge is a 5-tuple $(u,v,+/-,\Delta,t)$, where $(u,v)$ is the edge updated, $+/-$ is a flag indicating whether the weight of $(u,v)$ is increased or decreased, $\Delta>0$ is the amount of change to the weight and $t$ is the time stamp. The update is applied to the self-weight $w_u$ if $u=v$. Clearly, edge insertions/deletions considered in the existing literature~\cite{bahmani2010fast, ohsaka2015efficient, hayashi2015fully, chen2015influential} can be easily written as weight increase/decrease updates. Moreover, node insertions/deletions can be written as edge insertions/deletions, too.


\begin{example}\label{example:DN}\em
A retweet network is a weighted graph $G = \langle V,E,w \rangle$, where $V$ is a set of users. An edge $(u,v) \in E$ captures that user $v$ retweeted from user $u$. We can set $w_{uv}$ according to the propagation model adopted as follows.

\textit{LT Model:} The edge weight $w_{uv}$ is the number of tweets that $v$ retweeted from $u$. The self-weight $w_v$ is the number of original tweets posted by $v$. The weights reflect the influence in the social network. By intuition, if $v$ retweeted many tweets from $u$, $v$ is likely to be influenced by $u$. In contrast, if most of $v$'s tweets are original, $v$ is not likely to be influenced by others.

\textit{IC Model:} The edge weight $w_{uv}$ is the probability that $v$ retweets from $u$, which can be calculated according to $v$'s retweeting record in the past~\cite{saito2008prediction, goyal2010learning}. 
	
An essential task in online social influence analysis is to capture how the influence changes over time.  For example, one may want to consider only the retweets within the past $\Delta t$ time. Clearly, the set of edges $E$ may change and the weights $w_{uv}$ and $w_v$ may increase or decrease over time. The dynamics of the retweet network can be depicted by a stream of edge weight updates $\{(u,v,+/-,\Delta,t)\}$. 
\end{example}

Given a dynamic network like the retweet network in Example~\ref{example:DN}, how can we keep track of influential users dynamically? In order to know the influential nodes, the critical point is to monitor influence of users. To solve this problem, we adopt the polling-based method for computing influence spread, and extend it to tackle dynamic networks. The major challenge is how to maintain a number of RR sets over a stream of weight updates, such that $n\mathcal{F_R}(S)$ is always an unbiased estimator of $I(S)$. We propose a framework for updating RR sets that addresses various tasks of tracking influential nodes. 

The framework is shown in Algorithm~\ref{alg:framework}. In Section~\ref{sec:update}, we discuss how to efficiently update the existing RR sets.  How to decide if our current RR sets are insufficient, redundant or in proper amount depends on the specific task of tracking influential nodes. In Sections~\ref{sec:threshold} and~\ref{sec:topk}, respectively, we discuss this issue for two common tasks of tracking influential nodes, namely tracking nodes with influence greater than a threshold and tracking top-k influential nodes. 

\begin{algorithm}[t]
\caption{Framework of Updating RR Sets}
\label{alg:framework}
\begin{algorithmic}[1]
\STATE retrieve RR Sets affected by the updates of the graph 
\STATE update retrieved RR sets
\IF {the current RR sets are insufficient}
	\STATE add new RR sets
\ELSE 
	\IF {the current RR sets are redundant}
		\STATE delete the redundant RR sets
	\ENDIF
\ENDIF 
\end{algorithmic}
\end{algorithm}

\section{Updating RR Sets}\label{sec:update}

In this section, we propose an incremental algorithm for updating existing RR sets over a stream of edge weight updates under both the LT model and the IC model. We prove that, by updating RR sets using our algorithm, $n\mathcal{F_R}(S)$ is always an unbiased estimation of $I(S)$. We also analyze the cost of an update based on the assumption that we are maintaining in total $M$ RR sets. Note that the value of $M$ should be decided for specific tasks. In Sections~\ref{sec:threshold} and~\ref{sec:topk} we discuss the value of $M$ for two common tasks of tracking influential nodes.

\subsection{Updating under the LT Model}\label{subsec:update_LT}

First, we have a key observation about random RR sets for the LT model.

\begin{figure}[h]
    \centering
    \includegraphics[width=.3\textwidth]{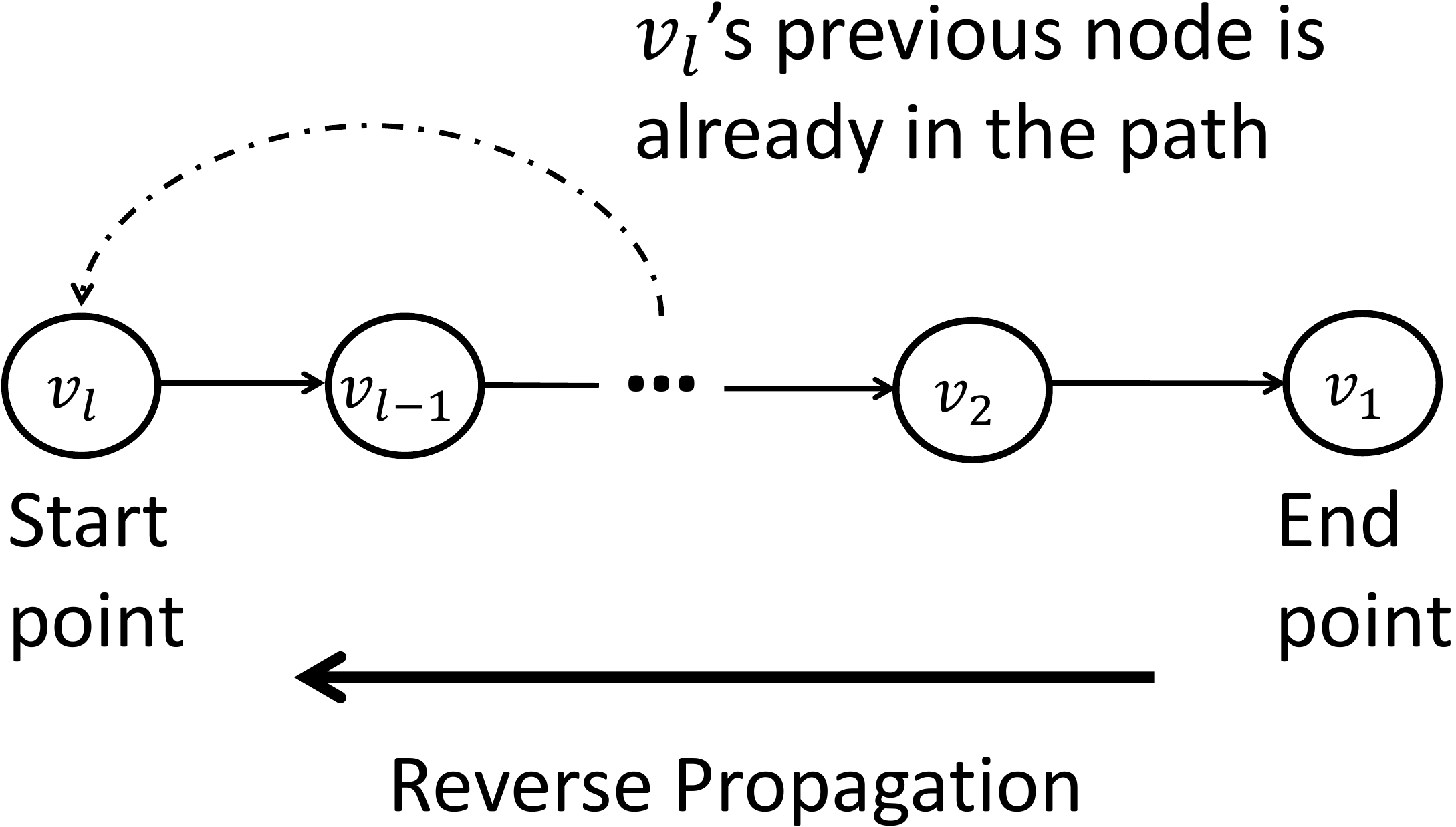}
    \caption{A random path. $v_{i}$ is the previous node of $v_{i-1}$.}
    \label{fig:path}
\end{figure}

\begin{fact}
    A random RR set of the LT model is a simple path.
\smallskip\em

{\sc Rationale.} In the equivalent ``live-edge'' selection process of the LT model, each node selects at most one incoming edge as a live edge. In the polling process, a random RR set is the set of nodes that can be reversely reachable from a randomly picked node $v$ via live edges. Thus, the nodes in a random RR set together form a simple path. 
\medskip
\end{fact}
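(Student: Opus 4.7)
The plan is to formalize the author's rationale by walking carefully through the reverse-reachability procedure on a live-edge realization. The crucial structural input is the live-edge characterization of the LT model: each node $w$ picks at most one incoming edge to be live, selecting in-neighbor $u$ with probability $p_{uw}=\frac{w_{uw}}{W_w}$ and picking no in-edge with probability $\frac{w_w}{W_w}$. Consequently, in the live-edge subgraph $L$, every vertex has in-degree at most $1$. This is the only property I will need from the model.

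Starting from the randomly chosen seed $v_0=v$, I would define $v_{i+1}$ to be the unique predecessor of $v_i$ in $L$ whenever one exists, and otherwise halt the traversal. Because each $v_i$ has at most one live in-edge, this backward walk is fully determined by $L$. I would then prove by induction on $i$ that the vertices $v_0,v_1,\ldots,v_i$ are pairwise distinct and that the edges $(v_{j+1},v_j)$ for $j<i$ are all live, so the discovered nodes, ordered as $v_i\to v_{i-1}\to\cdots\to v_0$, form a simple path in $G$. The base case is immediate, and in the inductive step the only way simplicity can fail is if the newly produced predecessor coincides with some earlier $v_j$; in that case the traversal terminates at step $i$ (the RR-set construction does not revisit nodes), so the produced node sequence is still a simple path.

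The main subtlety I expect is the case where the backward-reachable portion of $L$ contains a directed cycle, since a priori all nodes on such a cycle are reversely reachable from $v$ through live edges and might have to be included in the RR set. I would resolve this by noting that such a cycle must already have been traced out by the deterministic backward walk before it closes: the walk from $v$ enters the cycle at some $v_j$, follows it around, and discovers the identity with $v_j$ only after enumerating all cycle vertices exactly once. Thus the RR set is $\{v_0,\ldots,v_i\}$, the lone back-edge that closes the cycle is not reported as part of the RR set, and the induced sequence of distinct nodes is a simple path in $G$, completing the argument.
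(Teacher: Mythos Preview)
Your proposal is correct and follows essentially the same reasoning as the paper's rationale: both rely on the single structural observation that in the LT live-edge graph every vertex has in-degree at most one, so the backward walk from the randomly chosen node $v$ is deterministic and enumerates the entire reverse-reachable set as a repetition-free sequence. The paper's argument is a three-sentence sketch, and your write-up simply makes the two termination cases (no live in-edge versus a predecessor already visited) explicit; the paper likewise acknowledges both cases in the text surrounding Figure~\ref{fig:path}, treating the ``no in-edge'' case as the last node picking itself.
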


Fig.~\ref{fig:path} illustrates a random RR set. The end point $v_1$ is picked in random at the beginning of the polling process. Then the path is generated by reversely propagating from $v_1$. The reverse propagation ends at $v_l$ because $v_l$ picks one of the nodes already in the path as its previous node. Note that the situation that $v_l$ does not pick any previous nodes can be regarded as $v_l$ picks itself as the previous node.

For a random RR set, suppose the starting node is $v_l$, we also store the previous node picked by $v_l$, which is useful in our algorithm for updating random RR sets maintained. 
Clearly the space complexity of a RR set is $O(L)$ where $L$ is the number of nodes in the RR set.
We maintain an inverted index on all random RR sets so that we can access all the random RR sets passing a node.  Moreover, we assume that the whole graph is stored and maintained in a way allowing random access to every node and its in-neighbors. It is not difficult to verify that the expected number of nodes of a RR set is $\bar{I}$, the average individual influence in the network. Thus, the expected space cost of $M$ RR sets and the inverted index is $O(M\bar{I}+n)$.

When there is an edge weight update $(u,v,+/-,\Delta,t)$ at time $t$, our incremental algorithm works as follows. Denote by $w_{uv}^t$ the edge weight of $(u,v)$ and $W_v^t$ the total weight of $v$ at time $t$. We first update the edge weight of $(u,v)$ and the total weight of $v$ in the graph. Then, we consider the following two cases.
\begin{enumerate}

\item If the update is a weight increase $(u,v,+,\Delta,t)$, we retrieve all RR sets passing $v$ using the inverted index. For each RR set retrieved, with probability $\frac{\Delta}{W_v^t}$ it is rerouted from $v$. If a RR set is rerouted, the previous node of $v$ is set to $u$ and we keep reversely propagating until no new nodes can be reversely reached.  

\item If the update is a weight decrease $(u,v,-,\Delta,t)$, we retrieve all RR sets passing $v$ where the previous node of $v$ is $u$. Each retrieved RR set is rerouted from $v$ with probability $\frac{\Delta}{w_{uv}^{t-1}}$. If a RR set is rerouted, we choose $u'$ among the in-neighbors of $v$ at time $t$ as the previous node of $v$ with probability $\frac{w_{u'v}^t}{W_v^t}$. We keep reversely propagating until no new nodes can be reversely reached.
\end{enumerate}
When rerouting random RR sets, we use random access to obtain the nodes and the in-neighbors of them in the graph. We also update the inverted index.


The update operations are similar to Reservoir Sampling~\cite{vitter1985random}. It is easy to prove that, at any time $t$, after the incremental maintenance, for any $(u,v)$ where $u$ is an in-neighbor of $v$, $u$ is picked as the previous node of $v$ with probability $\frac{w_{uv}^t}{W_v^t}$. Thus, $n\mathcal{F_R}(S)$ is always an unbiased estimator of $I(S)$ for any $S$.

\begin{theorem}
At any time $t$, after our incremental maintenance of the random RR sets under the LT model as described in this section, $n\mathcal{F_R}(S)$ is an unbiased estimator of $I(S)$ for any seed set $S$.
\end{theorem}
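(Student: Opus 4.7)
The plan is to reduce the theorem to showing that, at every time $t$, the joint distribution of the $M$ maintained RR sets equals the distribution of $M$ RR sets freshly generated from scratch using the weights $\{w_{uv}^t, w_v^t\}$. Once we have that distributional equivalence, the unbiasedness follows immediately from the standard polling-method fact (already recalled in Section~\ref{subsec:poll}) that $n\mathcal{F_R}(S)$ is an unbiased estimator of $I(S)$ when the RR sets are generated by the live-edge construction of the LT model at the current weights.

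I would proceed by induction on the stream of updates. The base case is trivial: at the moment each RR set is first created it is sampled exactly according to the live-edge process, so the endpoint is uniform in $V$ and, for each node $v$ encountered, its ``previous node'' is an in-neighbor $u$ with probability $w_{uv}/W_v$ (and $v$ itself, i.e.\ no previous, with probability $w_v/W_v$), all independently across nodes and across RR sets. For the inductive step, fix a node $v$ affected by the update at time $t$ and compute the post-update marginal $\Pr[\text{prev of }v=u]$ in each of the two cases, assuming by induction that pre-update marginals are $w_{uv}^{t-1}/W_v^{t-1}$. For a weight increase $(u,v,+,\Delta,t)$, every RR set through $v$ is rerouted to $u$ independently with probability $\Delta/W_v^t$, which gives
\[
\Pr[\text{new prev}=u]=\tfrac{\Delta}{W_v^t}+\bigl(1-\tfrac{\Delta}{W_v^t}\bigr)\tfrac{w_{uv}^{t-1}}{W_v^{t-1}}=\tfrac{w_{uv}^t}{W_v^t},
\]
and analogously $w_{u'v}^t/W_v^t$ for any other in-neighbor $u'$ and $w_v^t/W_v^t$ for the ``no previous'' case, using $W_v^{t-1}/W_v^t=1-\Delta/W_v^t$ and $w_{u'v}^t=w_{u'v}^{t-1}$. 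For a weight decrease $(u,v,-,\Delta,t)$, only RR sets whose previous node of $v$ equals $u$ are touched; each is rerouted with probability $\Delta/w_{uv}^{t-1}$, at which point a fresh in-neighbor $u''$ is redrawn with probability $w_{u''v}^t/W_v^t$. A similar algebraic manipulation, combining the ``stay'' contribution $(w_{uv}^{t-1}/W_v^{t-1})(1-\Delta/w_{uv}^{t-1})$ with the ``redraw'' contribution, again yields $w_{uv}^t/W_v^t$ for $u$, $w_{u'v}^t/W_v^t$ for every other in-neighbor $u'$, and $w_v^t/W_v^t$ for ``no previous''. Hence the one-edge marginal is correct at time $t$.

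The second ingredient is that the entire RR set, not just the single edited edge, remains correctly distributed. Here I would use two observations. First, the algorithm handles rerouting by \emph{continuing} the reverse propagation from the newly chosen previous node using fresh random draws at the current weights, so every node newly added to the RR set is sampled from the correct live-edge distribution and is independent of everything else. Second, the parts of the RR set downstream of $v$ (i.e., between $v$ and the original endpoint $v_1$) are untouched, and their per-node previous-edge distributions were correct before the update and are still correct because none of the weights on those nodes changed. Since the live-edge process draws each node's in-edge independently, the joint distribution factorises over nodes, and preserving every marginal preserves the joint.

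The step I expect to require the most care is the weight-decrease case, because the rerouting probability $\Delta/w_{uv}^{t-1}$ is conditional on the pre-update previous node being $u$, so one must combine it correctly with the pre-update marginal to cancel out the $W_v^{t-1}$ in the denominator and recover $w_{uv}^t/W_v^t$; it is essentially the standard reservoir-sampling identity but applied to a weighted multinomial, and it is the place where a miscount of the conditioning would break unbiasedness. Once the per-node marginal and independence are established, combining with Fact~1 and the polling-method identity $\mathbb{E}[n\mathcal{F_R}(S)]=I(S)$ finishes the proof.
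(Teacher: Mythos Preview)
Your proposal is correct and follows essentially the same approach as the paper: induction on the update stream, with the inductive step verifying (via the same reservoir-sampling algebra in the increase and decrease cases) that the previous-node marginal at $v$ remains $w_{uv}^t/W_v^t$. Your treatment is in fact slightly more careful than the paper's, since you explicitly address why the \emph{joint} distribution of the full RR set is preserved (factorisation over nodes plus fresh resampling of the rerouted tail), whereas the paper states only the per-edge marginal and then asserts unbiasedness directly.
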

\begin{proof}
We only need to consider the basic case where, at time $t$, there is at most one edge weight update. A general case of multiple weight updates can be simply treated as a series of the basic case.

We prove by induction. Apparently, at time 0, when the network has no edges, the theorem holds.

Assume when $t=k-1$ $(k \geq 1)$, the probability that $v$ selects its in-neighbor $u$ is $pp_{uv}^{k-1}=\frac{w_{uv}^{k-1}}{W_v^{k-1}}$.

When $t=k$, three possible situations may arise.
\begin{description}
    \item [Case 1:] There is no update on any incoming edges of $v$. In such a case, for each $u$ that is an in-neighbor of $v$, $pp_{uv}^{k}=pp_{uv}^{k-1}=\frac{w_{uv}^{k-1}}{W_v^{k-1}}=\frac{w_{uv}^k}{W_v^k}$.
    \item [Case 2:] An edge weight increase $(u,v,+,\Delta,k)$ happens at time $t=k$. So, $w_{uv}^k=w_{uv}^{k-1}+\Delta$ and $W_v^k=W_v^{k-1}+\Delta$. For $u$, we have
    $$
        pp_{uv}^k = pp_{uv}^{k-1}(1-\frac{\Delta}{W_v^k})+\frac{\Delta}{W_v^k} = \frac{w_{uv}^{k-1}}{W_v^{k-1}}\frac{W_v^k-\Delta}{W_v^k}+\frac{\Delta}{W_v^k} = \frac{w_{uv}^k}{W_v^k}
    $$
    For any other in-neighbor $u'$ of $v$, at time $t=k$,
    $$
        pp_{u'v}^k = pp_{u'v}^{k-1}(1-\frac{\Delta}{W_v^k}) = \frac{w_{u'v}^{k-1}}{W_v^{k-1}}\frac{W_v^k-\Delta}{W_v^k} = \frac{w_{u'v}^k}{W_v^k}
    $$
    \item [Case 3:] An edge weight decrease $(u,v,-,\Delta,t)$ happens at time $t=k$. Note that $w_{uv}^k=w_{uv}^{k-1}-\Delta$ and $W_v^k=W_v^{k-1}-\Delta$. For $u$,
    \begin{align*}
        pp_{uv}^k &= pp_{uv}^{k-1}[(1-\frac{\Delta}{w_{uv}^{k-1}})+\frac{\Delta}{w_{uv}^{k-1}}\frac{w_{uv}^k}{W_v^k}] \\
                  &= \frac{w_{uv}^{k-1}}{W_v^{k-1}}[\frac{w_{uv}^k}{w_{uv}^{k-1}}+\frac{w_{uv}^k}{w_{uv}^{k-1}}\frac{\Delta}{W_v^k}]= \frac{w_{uv}^{k-1}}{W_v^{k-1}}\frac{w_{uv}^k}{w_{uv}^{k-1}}\frac{W_v^{k-1}}{W_v^k} = \frac{w_{uv}^k}{W_v^k}
    \end{align*}
    For any in-neighbor $u'$ of $v$ other than $u$,
    \begin{align*}
        pp_{u'v}^k &= pp_{u'v}^{k-1} +  pp_{uv}^{k-1}\frac{\Delta}{w_{uv}^{k-1}}\frac{w_{u'v}^k}{W_v^k} \\
          &=\frac{w_{u'v}^{k-1}}{W_v^{k-1}}+\frac{w_{uv}^{k-1}}{W_v^{k-1}}\frac{\Delta}{w_{uv}^{k-1}}\frac{w_{u'v}^k}{W_v^k}=\frac{w_{u'v}^k}{W_v^{k-1}}\frac{W_v^k+\Delta}{W_v^k} = \frac{w_{u'v}^k}{W_v^k}
    \end{align*}
\end{description}
By treating $v$ as also an in-neighbor of $v$ itself and thus $w_v$ is $w_{vv}$, we can prove the case when the weight update is on $w_v$.
\end{proof}

The expected number of RR sets needed to be retrieved is $\frac{MI_v^{t-1}}{n} \ll M$ for an update $(u,v,+/-,\Delta,t)$. Only a small fraction of the retrieved RR sets need to be updated. Specifically, the expected number of RR sets updated is $\frac{MI_v^{t-1}\Delta}{nW_v^t} \ll M$ for a weight increase update $(u,v,+,\Delta,t)$, and $\frac{MI_v^{t-1}\Delta}{nW_v^{t-1}} \ll M$ for a weight decrease update $(u,v,-,\Delta,t)$. Clearly the cost of incremental maintenance is much less than re-generating $M$ RR sets from scratch.

\subsection{Updating under the IC Model}\label{subsec:update_IC}

The idea of updating RR sets under the IC model is similar to~\cite{ohsaka2016dynamic}. We briefly introduce the idea in this section.

Rather than a simple path, a random RR set in the IC model is a random connected component. Fig.~\ref{fig:cc} illustrates an example. Suppose the start point (the randomly picked node at the beginning of a poll) of a RR set is $v_1$, then each node in this RR set can be reversely reachable from $v_1$ via live edges.  

\begin{figure}[h]
    \centering
    \includegraphics[width=.3\textwidth]{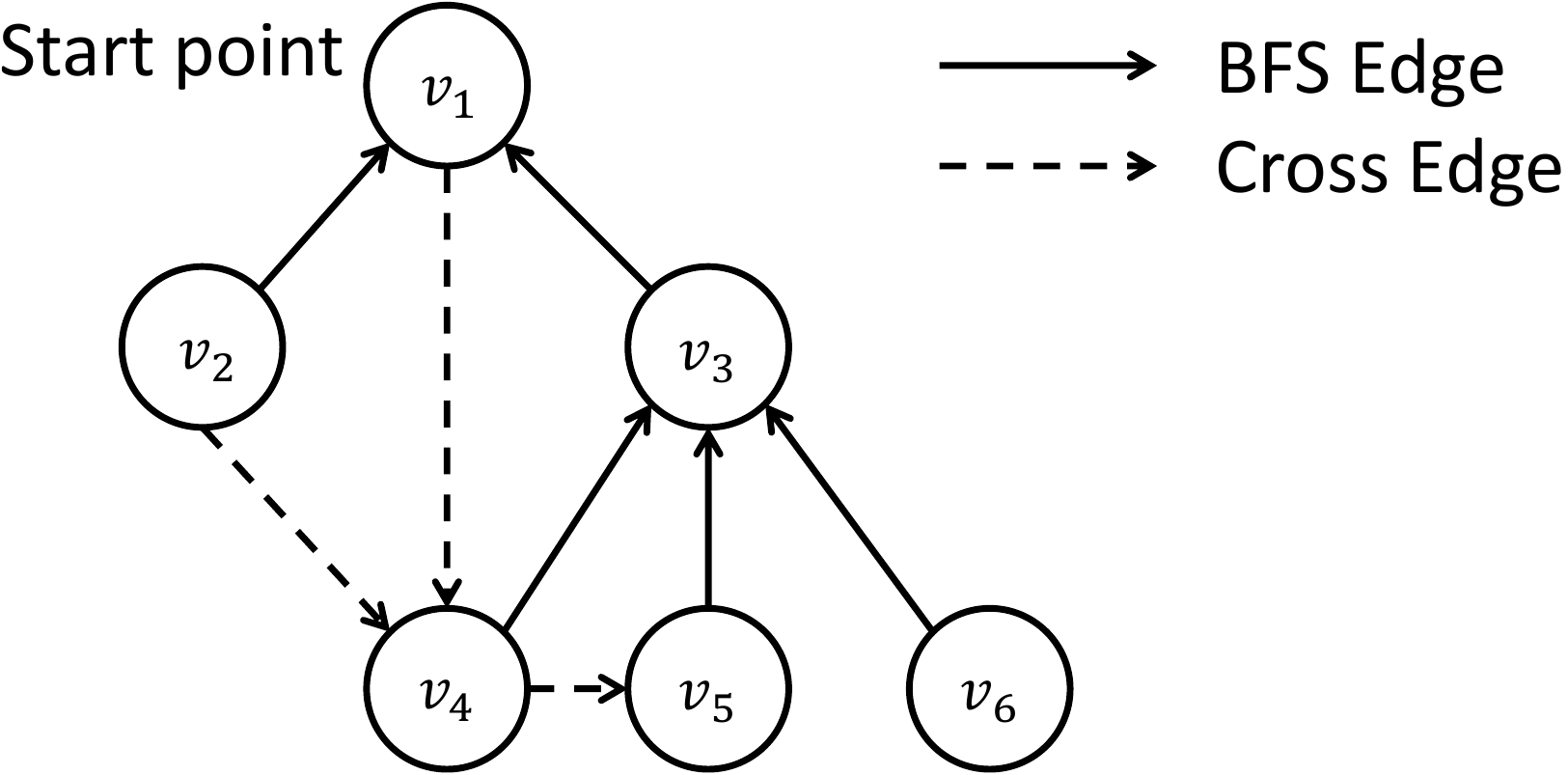}
    \caption{A random RR set of the IC model is a random connected component.}
    \label{fig:cc}
\end{figure}

For a random RR set, we not only record the nodes in it but also all live edges among those nodes. We categorize live edges into two classes, namely BFS edges and cross edges. When a RR set is being generated by reversely propagating from the start point in a breadth-first search manner, if a live edge $(v_i, v_j)$ makes $v_i$ propagated for the first time, $(v_i,v_j)$ is labeled as a BFS edge; otherwise it is labeled as a cross edge. For each node in a RR set, we use an adjacent list to store all live edges pointing to it. We also treat every node as a string and keep all nodes in a RR set in a prefix tree for fast retrieving a node and the address of its adjacent list of live edges. The major difference of our data structure for storing a RR set to the one in~\cite{ohsaka2016dynamic} is we do not store the propagation probabilities on live edges in a RR set, while~\cite{ohsaka2016dynamic} does. We only store propagation probabilities in the graph data structure. This is obviously an improvement in space because the propagation probability of an edge is only stored once in our method.

Like the LT model, for the IC model, we also maintain an inverted index on all random RR sets so that we can access all RR sets containing a node. Since in the ``live-edge'' process of the IC model, every edge is picked independently, when there is an update $(u,v,+/-,\Delta,t)$ at time $t$, status (``live'' or ``dead'') of edges other than $(u,v)$ in RR sets stay the same. Thus, we have the following incremental maintenance,

\begin{enumerate}

\item If the update is a weight increase $(u,v,+,\Delta,t)$, we retrieve all RR sets passing $v$ using the inverted index. For each RR set retrieved, if $(u,v)$ is not a live edge of it, we add $(u,v)$ as a live edge to it with probability $\frac{\Delta}{1-w_{uv}^{t-1}}$. After adding $(u,v)$, if $u$ does not belong to this RR set at time $t-1$, we further extend this RR set by reversely propagating from $u$ in a breadth-first search manner.  

\item If the update is a weight decrease $(u,v,-,\Delta,t)$, we retrieve all RR sets passing $v$. If a retrieved RR set contains a live edge $(u,v)$, with probability $\frac{\Delta}{w_{uv}^{t-1}}$ we remove $(u,v)$. If $(u,v)$ is removed, we traverse from the start point $v_1$ via live edges other than $(u,v)$ of this RR set to find all nodes reversely reachable from $v_1$ and all live edges among them. Then, this RR set is updated to one containing only those nodes and live edges we find.  
\end{enumerate}

Similar to the LT model, after updating the RR sets, we also update the inverted index.

Clearly, our incremental maintenance ensures that, for each edge $(u,v)$ at time $t$, if $v$ is a node of a RR set, the probability that $(u,v)$ is a live edge of this RR set is $w_{uv}^t$. So the same as the LT model, our incremental maintenance ensures that from the RR sets we can always have unbiased estimations of influence spreads. 

\begin{theorem}
At any time $t$, after our incremental maintenance of the random RR sets under the IC model as described in this section, $n\mathcal{F_R}(S)$ is an unbiased estimator of $I(S)$ for any seed set $S$.
\end{theorem}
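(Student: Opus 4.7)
The plan is to mimic the inductive structure of the LT-model proof, but to strengthen the inductive hypothesis to match the independence structure of the IC live-edge process. Concretely, at every time $t$ I would assert the following invariant about every maintained RR set $R$: conditional on the random root $v_1$, the set $R$ has exactly the distribution of the connected component of $v_1$ in the reverse graph under the IC live-edge process with the \emph{current} weights $\{w^t_{uv}\}$, i.e.\ each edge $(u,v)$ whose head has been examined is independently marked live with probability $w^t_{uv}$, and $R$ is the set of nodes reverse-reachable from $v_1$ through the live edges. Once this invariant is established, unbiasedness follows immediately: by the equivalence of the live-edge process and the IC cascade (\cite{kempe2003maximizing}), $\Pr[S\cap R_i\ne\emptyset]=I(S)/n$, so $E[n\mathcal{F_R}(S)]=I(S)$ by linearity.

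I would prove the invariant by induction on the number of updates. The base case is trivial: before any edges exist, every RR set is a singleton drawn from the uniform distribution on $V$, matching the IC live-edge process on the empty graph. For the inductive step, I would handle the two cases separately. For an increase update $(u,v,+,\Delta,t)$, note that before the update each edge $(u',v')$ with $v'$ examined is independently live with probability $w^{t-1}_{u'v'}$; only the marginal on edge $(u,v)$ needs adjustment. If $(u,v)$ was already live (probability $w^{t-1}_{uv}$), we leave it alone; if it was dead, we flip it to live with probability $\Delta/(1-w^{t-1}_{uv})$. A one-line computation gives a new marginal of $w^{t-1}_{uv}+\Delta=w^t_{uv}$, while independence from the other edges is preserved because the flip is an independent coin toss. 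When we must newly bring $u$ into the set, the edges into $u$ were previously \emph{unexamined}, so by deferred decisions they can be sampled fresh with their current probabilities $w^t_{\cdot u}$, and the BFS extension yields exactly the reverse component of $u$ under independent live-edge sampling. For a decrease update $(u,v,-,\Delta,t)$ the argument is symmetric: conditional on $(u,v)$ being live before the update, we drop it with probability $\Delta/w^{t-1}_{uv}$, giving a new live marginal of $w^{t-1}_{uv}-\Delta=w^t_{uv}$; if it is dropped, the re-traversal from $v_1$ along the \emph{unchanged} live statuses of all other edges correctly recomputes the reverse-reachable component.

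The main obstacle, and the step I would spend the most care on, is the independence/deferred-decisions argument rather than the marginal computations. The subtlety is that a RR set $R$ only records the live/dead status of edges whose heads have been examined by the reverse BFS; edges into nodes outside $R$ have never been sampled. I would argue that (i) the per-edge coin tosses used by the increase/decrease rules are fresh, independent of all previously revealed statuses, so the invariant's independence across edges is maintained; (ii) when an increase causes a new node $u$ to be added, sampling the incoming edges of $u$ and the rest of the reverse component for the first time, with the current probabilities, is statistically identical to having sampled all edges up front and revealing them only when needed; and (iii) when a decrease causes a previously-explored node to fall out of $R$, the corresponding incoming-edge statuses are simply ``forgotten'' — this is harmless because those statuses were independent of the surviving portion of $R$ and will be re-sampled afresh if ever examined again.

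Putting these pieces together, the invariant is preserved by every update, so at every time $t$ each maintained RR set is distributed as a genuine IC-model RR set on $\langle V,E,w^t\rangle$. Unbiasedness of $n\mathcal{F_R}(S)$ is then exactly the standard polling identity of \cite{borgs2014maximizing,tang2015influence}, and the theorem follows.
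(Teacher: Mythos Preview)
Your proposal is correct and follows the same approach as the paper: verify by induction on updates that each edge $(u,v)$ with $v$ in the RR set is live independently with the current probability $w^t_{uv}$, and then invoke the standard polling identity. In fact the paper offers essentially only the one-sentence assertion that the edge marginals are preserved, whereas you additionally spell out the independence/deferred-decisions argument (points (i)--(iii)), which is the genuinely non-trivial part of the IC case; so your write-up is strictly more complete than what appears in the paper.
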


In our incremental maintenance, we need to find out if an edge $(u,v)$ is a live edge in a RR set. Suppose the number of nodes in a RR set is $L$. Because normally the length of a node id is a constant, given an edge $(u,v)$, using the prefix tree we can find the address of $v$'s adjacent list in $O(1)$ time. Then a linear search is performed to find out if $(u,v)$ is a live edge. In practice propagation probabilities are often small and $\sum_{u \in N^{in}(u)}{w_{uv}}$ is often a small constant. Therefore, in practice the average complexity of the linear search is $O(1)$ and in total we only need $O(1)$ time to decide if $(u,v)$ is a live edge in a RR set. Moreover, the space complexity of the RR set is $O(L)$ in practice since every node only has a constant number of live edges pointing to it. Similar to the LT model, maintaining $M$ RR sets and the inverted index under the IC model takes $O(M\bar{I}+n)$ space in expectation, where $\bar{I}$ is the average individual influence.

For the second situation when a live edge $(u,v)$ is deleted, it is not always necessary to traverse from the start point, which takes $O(L)$ time if there are $L$ nodes in the RR set. It is easy to see that removing cross edges does not change the connectivity of nodes in a RR set. Thus, if the removed live edge is labeled as a cross edge, we do not need to further update the RR set. 

Similar to LT model, under IC model, the expected number of RR sets needed to be retrieved is $\frac{MI_v^{t-1}}{n} \ll M$ for an update $(u,v,+/-,\Delta,t)$ and only a small fraction of the retrieved RR sets need to be updated. The expected number of RR sets containing a live edge $(u,v)$ is $\frac{MI_{v}^{t-1}w_{uv}^{t-1}}{n}$ and the expected number of RR sets that do not contain $(u,v)$ as a live edge is $\frac{MI_{v}^{t-1}(1-w_{uv}^{t-1})}{n}$. Therefore, when there is an update on the edge $(u,v)$, no matter it is weight increase or weight decrease, the expected number of RR sets needed to be updated is $\frac{MI_v^{t-1}\Delta}{n} \ll M$. Clearly the cost of incremental maintenance is much less than re-generating $M$ RR sets from scratch.

\section{Tracking Threshold-based Influential Nodes}\label{sec:threshold}

A natural problem setting of finding influential nodes is to find all nodes whose influence spread is at least $T$, where $T$ is a user-specified threshold. In this section, we discuss how to use random RR sets to approximate the desired result.

Before our discussion, we clarify that our problem is not Heavy Hitters~\cite{cormode2008finding} even when we treat the influence spread of a node as the ``frequency/popularity'' of an element. First, the definitions of ``frequency'' are different and have dramatically different properties. In Heavy Hitters, a stream of items is a multiset of elements and the frequency of an element is its multiplicity over the total number of items. Thus, the sum of frequencies of all elements is $1$, which means there are at most $1/\phi$ elements with frequency passing a threshold $\phi$. In our problem, if we define the ``frequency'' of a node $v$ as $I_v/n$, the value of $\sum_{v \in V}{\frac{I_v}{n}}$ is not necessarily 1. Actually one can easily prove that computing $\sum_{v \in V}{I_v}$ is \#P-hard because computing $I_v$ is \#P-hard. As a result, normalizing $I_v$ is difficult. Thus, given any influence threshold $T < n$, we cannot have an upper bound on the number of nodes that have influence greater than $T$. Also, the input of our problem is a stream of edge updates but not a stream of insertion/deletion of nodes (elements). Moreover, the influence of a node is not a simple aggregation of weights on the associated edges.  In terms of technical solutions, it is hard to use a sublinear space to convert an update of edge weight to a list of  insertions/deletions of nodes. As illustrated in Section~\ref{sec:update}, we need both the graph and RR sets to decide which nodes should be increased/decreased in frequency by an edge update. This is very different from the settings of Heavy Hitters where only a sublinear space is allowed, while the graph itself already takes space $\Omega(n)$. We also need to access a number of RR sets, while in Heavy Hitters only counters of elements are allowed to be kept in memory.

Due to the \#P-hardness of computing influence spread under the LT model~\cite{chen2010scalableLT}, it is not likely that we can find in polynomial time the exact set of nodes whose influence spread is at least $T$. Thus, we turn to algorithms that allow controllable small errors. Specifically, we ensure that the recall of the set of nodes found by our algorithm is $100\%$ and we tolerate some false positive nodes. Moreover, the influence spread of those false positive nodes should take a high probability to have a lower bound that is not much smaller than $T$.  We set the lower bound to $T-\epsilon n$, where $\epsilon$ controls the error. 

According to Corollary~\ref{cor:c1}, the larger $M$, the more accurate the unbiased estimator $n\mathcal{F_R}(u)$. Thus, the intuition of deciding $M$ is to make sure that, for each $u$, $n\mathcal{F_R}(u)$ is large enough when $I_u \geq T$, and small enough when $I_u \leq T-\epsilon n$.

We first show that $n\mathcal{F_R}(u)$ is not likely to be too much smaller than $T$ if $I_u \geq T$ and $M$ is large enough.

\begin{lemma}\label{lemma:threshold1}
With $M$ random RR sets, if $I_u \geq T$, with probability at least $1-\textup{exp}(-\frac{M\epsilon^2 n}{8T})$, $n\mathcal{F_R}(u) \geq T-\frac{\epsilon n}{2}$.
\end{lemma}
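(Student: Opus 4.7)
The plan is to view $\mathcal{D}(u) = \sum_{i=1}^M x_i$ as a sum of $M$ independent Bernoulli indicators with parameter $p = I_u/n \geq T/n$, and to establish a lower-tail concentration inequality on this sum that yields a bound expressed in $T$ rather than $I_u$. The claim $n\mathcal{F_R}(u) \geq T - \epsilon n/2$ is equivalent to $\mathcal{D}(u) \geq M(T/n - \epsilon/2)$, so I need to upper bound $\textup{Pr}[\mathcal{D}(u) < M(T/n - \epsilon/2)]$ by $\textup{exp}(-M\epsilon^2 n/(8T))$.

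The natural first attempt is to apply Corollary~\ref{cor:c1} directly with $\xi = \epsilon n/(2I_u)$, chosen so that $\xi M p = M\epsilon/2$ exactly equals the gap from the mean $MI_u/n$ down to $MI_u/n - M\epsilon/2$. However, this yields only $\textup{exp}(-M\epsilon^2 n/(8I_u))$, which is \emph{weaker} than the target whenever $I_u > T$, so a more careful use of the hypothesis $I_u \geq T$ is required. My plan is to use a stochastic-dominance/coupling argument instead: introduce auxiliary i.i.d.\ Bernoulli$(T/n)$ variables $y_i$, coupled to the $x_i$'s so that $y_i \leq x_i$ almost surely (concretely, given $x_i = 1$ let $y_i = 1$ with probability $T/I_u$ and $y_i = 0$ otherwise, and set $y_i = 0$ whenever $x_i = 0$). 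Then $\sum y_i \leq \sum x_i$ deterministically, so $\textup{Pr}[\sum x_i < c] \leq \textup{Pr}[\sum y_i < c]$ for every real $c$.

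Now I would apply the same Chernoff-type inequality (Corollary~\ref{cor:c1} instantiated with $p = T/n$) to $\sum y_i$ with $\xi = \epsilon n/(2T)$: this choice gives $\xi M(T/n) = M\epsilon/2$, so $(1-\xi)M(T/n) = MT/n - M\epsilon/2$, and the bound becomes $\textup{exp}(-\xi^2 MT/(2n)) = \textup{exp}(-M\epsilon^2 n/(8T))$. Chaining the two inequalities then completes the argument. The degenerate case $T \leq \epsilon n/2$ (where $\xi \geq 1$) can be dispatched separately but is trivial since $n\mathcal{F_R}(u) \geq 0 \geq T - \epsilon n/2$ always holds there. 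The main obstacle is the asymmetry I highlighted above: a direct Chernoff application naturally produces an $I_u$-indexed bound rather than a $T$-indexed one, and the clean resolution is stochastic dominance rather than any refined choice of the Chernoff deviation parameter.
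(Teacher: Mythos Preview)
Your coupling argument is correct, but it is not the route the paper takes, and your last sentence mischaracterizes the alternative. The paper applies Corollary~\ref{cor:c1} \emph{directly} with the exact deviation parameter $\xi = (I_u - T + \tfrac{\epsilon n}{2})/I_u$, so that $(1-\xi)I_u = T - \tfrac{\epsilon n}{2}$; this yields
\[
\textup{Pr}\Big\{n\mathcal{F_R}(u) \leq T-\tfrac{\epsilon n}{2}\Big\}
\;\leq\; \exp\!\Big(-\tfrac{M(I_u-T+\epsilon n/2)^2}{2nI_u}\Big),
\]
and then observes that $g(I_u)=(I_u-T+\tfrac{\epsilon n}{2})^2/I_u$ is non-decreasing for $I_u \geq T$ (its derivative is $(I_u^2-(T-\tfrac{\epsilon n}{2})^2)/I_u^2 \geq 0$ there), so the worst case is $I_u=T$, giving exactly $\exp(-M\epsilon^2 n/(8T))$. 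Thus a ``refined choice of the Chernoff deviation parameter'' together with a one-line monotonicity check \emph{is} the clean resolution; your first attempt failed only because you took $\xi=\epsilon n/(2I_u)$, which targets the threshold $I_u-\tfrac{\epsilon n}{2}$ rather than $T-\tfrac{\epsilon n}{2}$.

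Your stochastic-dominance route is a valid alternative and avoids the monotonicity calculation, at the cost of introducing the auxiliary $y_i$'s. One technicality: Corollary~\ref{cor:c1} as stated is phrased for the specific indicators with $p=I(S)/n$, so to apply it to the coupled $y_i$'s you should either note that the $x_i$'s here are genuinely i.i.d.\ (fixed $M$, independent polls) so the $y_i$'s are i.i.d.\ Bernoulli$(T/n)$ and the standard Chernoff bound applies, or simply cite the underlying Chernoff inequality directly rather than Corollary~\ref{cor:c1}.
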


\begin{proof}
If $I_u \geq T$, we have
\begin{align*}
    \textup{Pr}\{n\mathcal{F_R}(u) \leq T-\frac{\epsilon n}{2}\} &= \textup{Pr}\{n\mathcal{F_R}(u) \leq (1-\frac{I_u-T+\frac{\epsilon n}{2}}{I_u})I_u\}\\
    &\leq \textup{exp} \Big\{ -\frac{M(I_u-T+\frac{\epsilon n}{2})^2}{2nI_u} \Big\}
\end{align*}
$\frac{(I_u-T+\frac{\epsilon n}{2})^2}{I_u}$ is non-decreasing with respect to $I_u$ when $I_u \geq T$. Thus,
$$
    \textup{Pr}\{n\mathcal{F_R}(u) \leq T-\frac{\epsilon n}{2}\}  = \textup{exp}(-\frac{M\epsilon^2n}{8T})
$$
\end{proof}

Similarly, if $I_u \leq T-\epsilon n$, the probability that $n\mathcal{F_R}(u)$ is abnormally large is pretty small when $M$ is large.

\begin{lemma}\label{lemma:threshold2}
With $M$ random RR sets, if $I_u \leq T-\epsilon$, with probability at least $1-2\textup{exp}(-\frac{M\epsilon^2 n}{12T})$, $n\mathcal{F_R}(u) \leq T-\frac{\epsilon n}{2}$.
\end{lemma}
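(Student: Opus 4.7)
The plan is to mirror the proof of Lemma~\ref{lemma:threshold1}, but using the upper-tail branch of Corollary~\ref{cor:c1} instead of the lower-tail one. First I would rewrite the event $n\mathcal{F_R}(u) > T - \epsilon n/2$ as $\sum_{i=1}^M x_i > M(T - \epsilon n/2)/n$, where $x_i$ indicates whether the $i$-th random RR set contains $u$ (so $E[x_i] = I_u/n$). Setting $p = I_u/n$ and choosing $\xi = (T - \epsilon n/2 - I_u)/I_u$ so that $\xi M p$ matches the required deviation, Corollary~\ref{cor:c1} immediately gives
\[
\textup{Pr}\Big[n\mathcal{F_R}(u) > T - \frac{\epsilon n}{2}\Big] \leq \exp\Big(-\frac{\xi^2}{2 + 2\xi/3}\, M p\Big),
\]
and a short algebraic simplification rewrites the exponent as $\frac{3M(T - \epsilon n/2 - I_u)^2}{n(4I_u + 2T - \epsilon n)}$.

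Next I would verify that this exponent is monotonically decreasing on $I_u \in [0,\, T - \epsilon n]$: the numerator shrinks while the denominator grows as $I_u$ increases, so a one-line derivative check confirms the ratio is strictly decreasing throughout the interval. Consequently the tail bound is weakest at the upper endpoint $I_u = T - \epsilon n$. Plugging in yields $(T - \epsilon n/2 - I_u)^2 = (\epsilon n/2)^2$ and $4I_u + 2T - \epsilon n = 6T - 5\epsilon n$, so the exponent reduces to $\frac{3M\epsilon^2 n}{4(6T - 5\epsilon n)}$.

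Finally, since $\epsilon n \geq 0$ implies $6T - 5\epsilon n \leq 6T$, this exponent is at least $\frac{M\epsilon^2 n}{8T}$, which comfortably exceeds $\frac{M\epsilon^2 n}{12T}$. Combined with the trivial $\exp(-x) \leq 2\exp(-x)$, this yields the stated bound $2\exp(-M\epsilon^2 n/(12T))$. I expect the only slightly delicate step to be the algebraic reduction of the Bernstein-style denominator $(2 + 2\xi/3)$ into the clean rational form used in the monotonicity argument; the monotonicity itself and the endpoint substitution are routine. The factor $2$ and the looser constant $1/12$ in the stated probability are simply slack, conveniently retained for later union-bound arguments over all $u \in V$.
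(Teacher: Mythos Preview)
Your argument is correct but proceeds differently from the paper's proof. The paper does not bound $\Pr\{n\mathcal{F_R}(u) > T - \epsilon n/2\}$ directly; instead it bounds the stronger event $\Pr\{n\mathcal{F_R}(u) - I_u \geq \epsilon n/2\}$ (a fixed absolute deviation from the mean) and then case-splits on whether $I_u \geq \epsilon n/2$ or $I_u < \epsilon n/2$, using the simplified Chernoff forms $\exp(-\xi^2 Mp/3)$ when $\xi \leq 1$ and $\exp(-3\xi Mp/8)$ when $\xi > 1$. A further outer case split on whether $T \geq 3\epsilon n/2$ is needed to ensure the two ranges cover $[0,\,T-\epsilon n]$. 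The factor $2$ in the statement actually comes from the paper's use of the two-sided bound in the first case.

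Your route keeps the full Bernstein-type exponent $\xi^2/(2+2\xi/3)$, rewrites it as an explicit rational function of $I_u$, and uses a single monotonicity observation to identify the worst case at $I_u = T - \epsilon n$. This avoids all case distinctions and in fact yields the sharper constant $1/8$ rather than $1/12$; the factor $2$ is then pure slack for you. One small point worth making explicit: the denominator $6T - 5\epsilon n$ is positive because the hypothesis $I_u \leq T - \epsilon n$ with $I_u \geq 0$ forces $T \geq \epsilon n$, hence $6T - 5\epsilon n \geq \epsilon n > 0$. With that noted, your argument is complete and arguably cleaner than the original.
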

\begin{proof}
We prove that if $I_u \leq T-\epsilon n$, $\textup{Pr}\{n\mathcal{F_R}(u)-I_u \geq \frac{\epsilon n}{2}\} \leq 2\textup{exp}(-\frac{M\epsilon^2 n}{12T})$. Note that $n\mathcal{F_R}(u)-I_u \leq \frac{\epsilon n}{2}$ is a sufficient condition for $n\mathcal{F_R}(u) \leq T-\frac{\epsilon n}{2}$ when $I_u \leq T-\epsilon n$.

First, suppose $T \geq \frac{3\epsilon n}{2}$, which means $\frac{\epsilon n}{2} \leq T-\epsilon n$. There are two possible cases.
\begin{description}
    \item [Case 1:] $\frac{\epsilon n}{2} \leq I_u \leq T-\epsilon n$.  Then,
    \begin{align*}
        &~~~\textup{Pr}\{|n\mathcal{F_R}(u)-I_u| \geq \frac{\epsilon n}{2}\} \\
        &= \textup{Pr}\{|M\mathcal{F_R}(u)-\frac{MI_u}{n}| \geq \frac{\epsilon M}{2}\}\\
        &\leq 2\textup{exp}\{-\frac{1}{3}\frac{MI_u}{n}\frac{\epsilon^2 n^2}{4I_u^2}\} \leq 2\textup{exp}(-\frac{M\epsilon^2 n}{12T})
    \end{align*}
    \item [Case 2:] $I_u \leq \frac{\epsilon n}{2}$. Then,
    \begin{align*}
        &~~~~~\textup{Pr}\{n\mathcal{F_R}(u)-I_u \geq \frac{\epsilon n}{2}\} \\
        &= \textup{Pr}\{M\mathcal{F_R}(u)-\frac{MI_u}{n} \geq \frac{\epsilon M}{2}\}\\
        &\leq \textup{exp}\{-\frac{1}{(2+\frac{2}{3})\frac{\epsilon n}{2I_u}}\frac{MI_u}{n}\frac{\epsilon^2 n^2}{4I_u^2}\} \\
        &\leq \textup{exp}\{-\frac{3M\epsilon}{16}\} \leq 2\textup{exp}(-\frac{M\epsilon^2 n}{12T})
    \end{align*}
\end{description}

Second, if $T \leq \frac{3\epsilon n}{2}$, for all $I_u \leq T-\epsilon n$, $I_u \leq \frac{\epsilon n}{2}$. Then, all $I_u \leq T-\epsilon n$ fall into Case 2 above and the lemma still holds. 
\end{proof}

Because $\textup{exp}(-\frac{M\epsilon^2n}{8T}) \leq 2\textup{exp}(-\frac{M\epsilon^2 n}{12T})$, by applying Boole's inequality (that is, the Union Bound), with probability at least $1-2n\cdot \textup{exp}(-\frac{M\epsilon^2 n}{12T})$, every $n\mathcal{F_R}$ satisfies the conditions in Lemmas~\ref{lemma:threshold1} and~\ref{lemma:threshold2}.
Therefore, we have the following theorem on the sample size $M$ for finding nodes whose influence spread is at least $T$.

\begin{theorem}\label{th:threshold}
By setting the number of random RR sets $M=\frac{12T}{n\epsilon^2}\ln{\frac{2n}{\delta}}$, with probability at least $1-\delta$ the following conditions hold for every node $u$.
\begin{enumerate}
    \item If $I_u \geq T$, then $n\mathcal{F_R}(u) \geq T-\frac{\epsilon n}{2}$
    \item If $I_u < T-\epsilon n$, then $n\mathcal{F_R}(u)<T-\frac{\epsilon n}{2}$ 
\end{enumerate}
\end{theorem}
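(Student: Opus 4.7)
The plan is to combine Lemma~\ref{lemma:threshold1} and Lemma~\ref{lemma:threshold2} through a single union bound over all nodes in $V$, then back out the stated sample size $M$ from the desired failure probability $\delta$.

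First I would observe that the two lemmas already handle the two cases in the theorem on a per-node basis. For a fixed node $u$ with $I_u \geq T$, Lemma~\ref{lemma:threshold1} says the ``bad'' event $n\mathcal{F_R}(u) < T - \tfrac{\epsilon n}{2}$ has probability at most $\exp(-\tfrac{M\epsilon^2 n}{8T})$. For a fixed node $u$ with $I_u < T - \epsilon n$, Lemma~\ref{lemma:threshold2} says the ``bad'' event $n\mathcal{F_R}(u) \geq T - \tfrac{\epsilon n}{2}$ has probability at most $2\exp(-\tfrac{M\epsilon^2 n}{12T})$. Since $\tfrac{1}{8} > \tfrac{1}{12}$, the first bound is dominated by the second, so both per-node failure probabilities are at most $2\exp(-\tfrac{M\epsilon^2 n}{12T})$; this uniform bound is what lets the union bound go through cleanly.

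Next I would apply Boole's inequality over all $n$ nodes (nodes with $T - \epsilon n \leq I_u < T$ contribute nothing since the theorem makes no claim about them, but even counting them with the same bound is harmless). This yields a total failure probability of at most $2n \exp(-\tfrac{M\epsilon^2 n}{12T})$. Setting
\[
2n\,\exp\!\Bigl(-\tfrac{M\epsilon^2 n}{12T}\Bigr) \;\leq\; \delta
\]
and solving for $M$ gives $M \geq \tfrac{12T}{n\epsilon^2}\ln\tfrac{2n}{\delta}$, which matches the value in the statement. Hence with the prescribed $M$ both conditions (1) and (2) hold simultaneously for every node $u$ with probability at least $1-\delta$.

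There is no real obstacle here beyond bookkeeping: the substantive probabilistic work is already done inside Lemmas~\ref{lemma:threshold1} and~\ref{lemma:threshold2} via the martingale concentration of Corollary~\ref{cor:c1}. The only minor care point is making sure the two per-node tail bounds can be unified under one expression before the union bound — handled by the $\tfrac{1}{8}$ vs.\ $\tfrac{1}{12}$ comparison above — and noting that nodes with $I_u$ in the ``gap'' $[T-\epsilon n,\, T)$ simply do not appear in either case of the theorem and therefore impose no additional constraint.
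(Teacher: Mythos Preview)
Your proposal is correct and is essentially identical to the paper's argument: the paper also notes that $\exp(-\tfrac{M\epsilon^2 n}{8T}) \le 2\exp(-\tfrac{M\epsilon^2 n}{12T})$, applies Boole's inequality over all $n$ nodes to get total failure probability at most $2n\exp(-\tfrac{M\epsilon^2 n}{12T})$, and then reads off $M=\tfrac{12T}{n\epsilon^2}\ln\tfrac{2n}{\delta}$.
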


One nice property of $M$ in Theorem~\ref{th:threshold} is that, given $n$, $T$, $\epsilon$ and $\delta$, $M$ is a constant. Therefore, when we track nodes of influence spread at least $T$ in a dynamic network, no matter how the network changes, the sample size $M$ remains the same.

\section{Tracking Top-K Most Influential Nodes}\label{sec:topk}

Another useful problem setting is to find the top-k influential nodes, where $k$ is a user-specified parameter.

Denote by $I^k$ the influence spread of the $k$-th most influential node. Extracting top-k influential individual nodes equals extracting all nodes whose influence spread is at least $I^k$. Again, due to the \#P-hardness of influence computation, we probably have to tolerate errors in the result when designing algorithms. Similar to the task in Section~\ref{sec:threshold}, we hope the result returned by our algorithm contains all real top-k nodes, and for each false-positive node returned, its influence spread is no smaller than $I^k-\epsilon n$ with a high probability.

In this section, we first analyze the number of random RR sets $M$ we need to achieve the above goal with a high probability. We show that $M$ is proportional to the maximum individual influence spread $I_{max}$ and devise an algorithm that can give a really good estimation of $I_{max}$ with a high probability. Then, combining the theoretical results in Section~\ref{sec:threshold}, we propose a method that improves the precision of the result set of nodes, that is, reducing the number of false-positive nodes.

\subsection{Sample Size}\label{subsec:topk-size}

Unlike the task in section~\ref{sec:threshold}, we do not know the threshold $I^k$ in advance. Thus when selecting nodes according to values of $n\mathcal{F_R}(u)$, we do not have a threshold value. This is similar to mining top-k itemsets using sampled transactions~\cite{riondato2014efficient, riondato2015mining}. The intuition of our idea to solve the problem is that, if we have enough samples, we can bound the threshold value within a small range. 

To collect all real top-k influential nodes and filter out all nodes whose influence spreads are smaller than $I^k-\epsilon n$, we sample enough random RR sets such that for every $u \in V$, $|n\mathcal{F_R}(u)-I_u| \leq \frac{\epsilon n}{4}$ with a high probability. Denote by $\mathcal{F_R}^k$ the $k$-th highest $\mathcal{F_R}$ value. We have the following result.

\begin{lemma}\label{lemma:topk-size1}
    If for all $u \in V$, $|n\mathcal{F_R}(u)-I_u| \leq \frac{\epsilon n}{4}$, then the following conditions hold. (1) if $I_u \geq I^k$, then $n\mathcal{F_R}(u) \geq n\mathcal{F_R}^k-\frac{\epsilon n}{2}$; and (2) if $I_u \leq I^k-\epsilon n$, then $n\mathcal{F_R}(u) \leq n\mathcal{F_R}^k-\frac{\epsilon n}{2}$.
\end{lemma}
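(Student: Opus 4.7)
The statement is purely deterministic once we assume every $n\mathcal{F_R}(u)$ is within $\tfrac{\epsilon n}{4}$ of $I_u$, so the proof will be a short chain of inequalities. The main observation is that the hypothesis forces $n\mathcal{F_R}^k$ to lie close to $I^k$, and once we know that, both parts of the lemma follow immediately from the triangle-like $\tfrac{\epsilon n}{4}$ slack on each side.

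The plan is as follows. First, I would establish the auxiliary bound
\[
  I^k - \tfrac{\epsilon n}{4} \;\leq\; n\mathcal{F_R}^k \;\leq\; I^k + \tfrac{\epsilon n}{4}.
\]
For the lower bound, let $S^\star$ be any set of $k$ nodes whose true influence is at least $I^k$. For each $u \in S^\star$ the hypothesis gives $n\mathcal{F_R}(u) \geq I_u - \tfrac{\epsilon n}{4} \geq I^k - \tfrac{\epsilon n}{4}$, so at least $k$ nodes have $n\mathcal{F_R}$-value at least $I^k - \tfrac{\epsilon n}{4}$; hence $n\mathcal{F_R}^k \geq I^k - \tfrac{\epsilon n}{4}$. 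For the upper bound, let $T^\star$ be a set of $k$ nodes achieving the top-$k$ values of $n\mathcal{F_R}$. For each $u \in T^\star$, $I_u \geq n\mathcal{F_R}(u) - \tfrac{\epsilon n}{4} \geq n\mathcal{F_R}^k - \tfrac{\epsilon n}{4}$, so at least $k$ nodes have true influence at least $n\mathcal{F_R}^k - \tfrac{\epsilon n}{4}$, giving $I^k \geq n\mathcal{F_R}^k - \tfrac{\epsilon n}{4}$.

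Once this sandwich is in place, the two conclusions drop out mechanically. For (1), assuming $I_u \geq I^k$, apply the hypothesis and then the lower bound on $n\mathcal{F_R}^k$:
\[
  n\mathcal{F_R}(u) \;\geq\; I_u - \tfrac{\epsilon n}{4} \;\geq\; I^k - \tfrac{\epsilon n}{4} \;\geq\; n\mathcal{F_R}^k - \tfrac{\epsilon n}{2}.
\]
For (2), assuming $I_u \leq I^k - \epsilon n$, apply the hypothesis and then the upper bound on $n\mathcal{F_R}^k$:
\[
  n\mathcal{F_R}(u) \;\leq\; I_u + \tfrac{\epsilon n}{4} \;\leq\; I^k - \tfrac{3\epsilon n}{4} \;\leq\; n\mathcal{F_R}^k - \tfrac{\epsilon n}{2}.
\]

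There is no real obstacle here; the only subtle point is being careful in the auxiliary claim that we are using a counting argument ("at least $k$ nodes satisfy $\ldots$ implies the $k$-th largest is $\geq \ldots$") on both $\mathcal{F_R}$-values and on true influences, and that the two constants $\tfrac{\epsilon n}{4}$ combine to exactly $\tfrac{\epsilon n}{2}$ in (1) while the combination in (2) also uses the $\epsilon n$ gap in the hypothesis to make up the extra $\tfrac{\epsilon n}{4}$. Because the whole argument is deterministic, no concentration inequality or union bound is needed beyond what was already used to justify the hypothesis itself.
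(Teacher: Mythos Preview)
Your proposal is correct and follows essentially the same approach as the paper: first sandwich $n\mathcal{F_R}^k$ within $\tfrac{\epsilon n}{4}$ of $I^k$ via a counting argument, then chain that with the per-node hypothesis to obtain (1) and (2). One tiny wording slip: in your derivation of (1) you say you use ``the lower bound on $n\mathcal{F_R}^k$'' but the step $I^k - \tfrac{\epsilon n}{4} \geq n\mathcal{F_R}^k - \tfrac{\epsilon n}{2}$ actually uses the \emph{upper} bound $n\mathcal{F_R}^k \leq I^k + \tfrac{\epsilon n}{4}$; the inequality chain itself is right.
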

\begin{proof}
First, $n\mathcal{F_R}^k \geq I^k-\frac{\epsilon n}{4}$ because there are at least $k$ nodes having the value of $n\mathcal{F_R}$ at least $I^k-\frac{\epsilon n}{4}$. Second, $n\mathcal{F_R}^k \leq I^k+\frac{\epsilon n}{4}$ because there are at most $k$ nodes having the value of $n\mathcal{F_R}$ at least $I^k+\frac{\epsilon n}{4}$. Thus, we have $n\mathcal{F_R}^k-\frac{\epsilon n}{4} \leq I^k \leq n\mathcal{F_R}^k+\frac{\epsilon n}{4}$.

If $I_u \geq I^k$, we have $n\mathcal{F_R}(u) \geq I_u-\frac{\epsilon n}{4} \geq I^k-\frac{\epsilon n}{4} \geq n\mathcal{F_R}^k-\frac{\epsilon n}{2}$.

If $I_u \leq I^k-\epsilon n$, we have $n\mathcal{F_R}(u) \leq I_u+\frac{\epsilon n}{4} \leq I^k-\frac{3\epsilon n}{4} \leq n\mathcal{F_R}^k-\frac{\epsilon n}{2}$.
\end{proof}

So we need to derive a lower bound of $M$ to make sure $|n\mathcal{F_R}(u)-I_u| \leq \frac{\epsilon n}{4}$ for every $u \in V$ with a high probability. Denote by $I_{max}$ the maximum individual influence spread.

\begin{lemma}\label{lemma:topk-size2}
When the number of random RR sets is $M$, with probability at least $1-2\textup{exp}(-\frac{Mn\epsilon^2}{48I_{max}})$, $|n\mathcal{F_R}(u)-I_u| \leq \frac{\epsilon n}{4}$.   
\end{lemma}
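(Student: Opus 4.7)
The plan is to invoke Corollary~\ref{cor:c1} on the martingale $\sum_{i=1}^{M} x_i$ associated with a fixed node $u$, where $x_i$ indicates whether $u$ lies in the $i$-th random RR set, $p=I_u/n$, and $\mathcal{F_R}(u)=\frac{1}{M}\sum_i x_i$. I would choose the deviation parameter $\xi=\epsilon n/(4 I_u)$ so that $\xi M p = \epsilon M/4$; then the bad event $|n\mathcal{F_R}(u)-I_u|\geq \epsilon n/4$ becomes exactly $|\sum_i x_i - Mp|\geq \xi Mp$, and the two tail inequalities of Corollary~\ref{cor:c1} apply directly. The proof then reduces to lower-bounding the two Chernoff exponents $\xi^2 Mp/(2+2\xi/3)$ and $\xi^2 Mp/2$ in terms of $I_{max}$, following the same sub-case template as the proof of Lemma~\ref{lemma:threshold2}.

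I would carry out the bound in two cases. In Case A, when $I_u\geq \epsilon n/4$ and hence $\xi\leq 1$, both denominators in Corollary~\ref{cor:c1} are at most $3$, so each tail is bounded by $\exp(-\xi^2 Mp/3)$. Expanding, $\xi^2 Mp/3 = Mn\epsilon^2/(48 I_u) \geq Mn\epsilon^2/(48 I_{max})$ since $I_u\leq I_{max}$, and a union bound on the two tails gives the claimed factor of $2$. In Case B, when $I_u<\epsilon n/4$ and hence $\xi>1$, the lower-tail event $n\mathcal{F_R}(u)\leq I_u-\epsilon n/4$ has probability $0$ because $n\mathcal{F_R}(u)\geq 0 > I_u-\epsilon n/4$. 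For the upper tail I would use $2+2\xi/3\leq 8\xi/3$, which is valid for $\xi\geq 1$, to obtain $\xi^2 Mp/(2+2\xi/3)\geq 3\xi Mp/8 = 3\epsilon M/32$, giving an upper-tail bound of $\exp(-3\epsilon M/32)$.

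The main obstacle is the final comparison in Case B: we need $\exp(-3\epsilon M/32)\leq 2\exp(-Mn\epsilon^2/(48 I_{max}))$, yet Corollary~\ref{cor:c1} yields a constant-in-$\xi$ upper-tail bound in this regime, while the target scales as $1/I_{max}$. The plan to close this gap mirrors the last paragraph of the proof of Lemma~\ref{lemma:threshold2}: split on whether $I_{max}\geq \epsilon n/4$, in which case $n\epsilon^2/(48 I_{max})\leq \epsilon/12 \leq 3\epsilon/32$ and the exponent comparison is immediate, or $I_{max}<\epsilon n/4$, in which case one verifies the same arithmetic $3/32\geq 1/12$ after rewriting using $I_{max}$. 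Once this case split is settled and the two tails are combined, the factor of $2$ in the statement appears, completing the proof.
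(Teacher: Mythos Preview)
Your argument is essentially identical to the paper's: the same case split at $I_u=\epsilon n/4$, the same application of Corollary~\ref{cor:c1} with $\xi=\epsilon n/(4I_u)$, and the same resulting exponents $Mn\epsilon^2/(48I_u)$ in Case~A and $3M\epsilon/32$ in Case~B. The paper simply asserts the final comparison $\exp(-3M\epsilon/32)\leq 2\exp(-Mn\epsilon^2/(48I_{\max}))$ in Case~2 without any sub-case discussion; your branch $I_{\max}\geq \epsilon n/4$ is precisely what justifies it (and is the only regime in which the lemma is subsequently used, cf.\ the choice $\epsilon<I_{\max}/(2n)$ in Section~\ref{subsec:topk-est}), whereas your other branch $I_{\max}<\epsilon n/4$ is both unnecessary and, as stated, does not actually close the inequality---``$3/32\geq 1/12$'' no longer suffices once $n\epsilon/(48I_{\max})>1/12$.
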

\begin{proof}
We need to consider two possible cases.
\begin{description}
    \item [Case 1:] If $I_u \geq \frac{\epsilon n}{4}$
    \begin{align*}
        \textup{Pr}\{|n\mathcal{F_R}(u)-I_u| \geq \frac{\epsilon n}{4}\} &\leq 2\textup{exp}\{-\frac{1}{3}\frac{\epsilon^2n^2}{16I_u^2}\frac{MI_u}{n}\}\\ 
        &\leq 2\textup{exp}(-\frac{Mn\epsilon^2}{48I_{max}})
    \end{align*}
    \item [Case 2:] If $I_u \leq \frac{\epsilon n}{4}$
    \begin{align*}
        \textup{Pr}\{|n\mathcal{F_R}(u)-I_u| \geq \frac{\epsilon n}{4}\} &= \textup{Pr}\{n\mathcal{F_R}(u)-I_u \geq \frac{\epsilon n}{4}\}\\
        &\leq \textup{exp}(-\frac{3}{8}\frac{\epsilon n}{4I_u}\frac{MI_u}{n}) \\
        &= \textup{exp}(-\frac{3M\epsilon}{32})  \\
        &\leq 2\textup{exp}(-\frac{Mn\epsilon^2}{48I_{max}})
    \end{align*}
\end{description}
\end{proof}

By applying the Union Bound, with probability at least $1-2n\cdot\textup{exp}(-\frac{Mn\epsilon^2}{48I_{max}})$, we have $|n\mathcal{F_R}(u)-I_u| \leq \frac{\epsilon n}{4}$ for $u \in V$. In sequel, we have the following theorem settling the value of $M$.

\begin{theorem}\label{th:topk}
By setting the number of random RR sets $M \geq \frac{48I_{max}}{n\epsilon^2}\ln{\frac{2n}{\delta}}$, with probability at least $1-\delta$ the following conditions hold. (1) If $I_u \geq I^k$, then $n\mathcal{F_R}(u) \geq n\mathcal{F_R}^k-\frac{\epsilon n}{2}$; and (2) If $I_u < I^k-\epsilon n$, then $n\mathcal{F_R}(u)<n\mathcal{F_R}^k-\frac{\epsilon n}{2}$.
\end{theorem}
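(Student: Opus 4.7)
The plan is to combine Lemma~\ref{lemma:topk-size1} and Lemma~\ref{lemma:topk-size2} via a union bound over all $n$ nodes, and then to choose $M$ large enough so that the resulting failure probability is at most $\delta$. Concretely, I would let $\mathcal{E}_u$ denote the event $|n\mathcal{F_R}(u)-I_u| \leq \frac{\epsilon n}{4}$, and let $\mathcal{E}=\bigcap_{u \in V} \mathcal{E}_u$. Lemma~\ref{lemma:topk-size1} says that on the event $\mathcal{E}$ both conclusions (1) and (2) of the theorem hold deterministically, so the whole task reduces to bounding $\textup{Pr}[\mathcal{E}]$ from below.

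For the probability bound, Lemma~\ref{lemma:topk-size2} already gives, for each individual $u$,
$$
\textup{Pr}[\neg \mathcal{E}_u] \leq 2\textup{exp}\Big(-\frac{Mn\epsilon^2}{48 I_{max}}\Big).
$$
Applying Boole's inequality over the $n$ nodes yields
$$
\textup{Pr}[\neg\mathcal{E}] \leq 2n\cdot\textup{exp}\Big(-\frac{Mn\epsilon^2}{48 I_{max}}\Big).
$$
Next I would require this right-hand side to be at most $\delta$ and solve for $M$: taking logarithms gives $\frac{Mn\epsilon^2}{48 I_{max}} \geq \ln\frac{2n}{\delta}$, i.e., $M \geq \frac{48 I_{max}}{n\epsilon^2}\ln\frac{2n}{\delta}$, which is exactly the bound in the theorem statement.

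Putting the two pieces together finishes the argument: with $M$ as prescribed, $\mathcal{E}$ holds with probability at least $1-\delta$, and on $\mathcal{E}$ Lemma~\ref{lemma:topk-size1} delivers both (1) and (2). There is really no main obstacle here; the only subtle point worth double-checking is that Lemma~\ref{lemma:topk-size1} uses the \emph{global} event $\mathcal{E}$ (its proof needs the inequality $n\mathcal{F_R}(u)-I_u \leq \frac{\epsilon n}{4}$ simultaneously for all nodes in order to sandwich $n\mathcal{F_R}^k$ between $I^k \pm \frac{\epsilon n}{4}$), which is precisely why the union bound must be taken over all $n$ nodes rather than only the $u$ under consideration. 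Once that is noted, the proof is a direct assembly of the two lemmas and standard concentration.
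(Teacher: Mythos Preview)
Your proposal is correct and matches the paper's own argument essentially verbatim: the paper likewise applies the union bound to Lemma~\ref{lemma:topk-size2} over all $n$ nodes to obtain $\textup{Pr}[\neg\mathcal{E}] \leq 2n\cdot\textup{exp}(-\frac{Mn\epsilon^2}{48I_{max}})$, and then invokes Lemma~\ref{lemma:topk-size1} on the event $\mathcal{E}$ to conclude (1) and (2), solving the inequality $2n\cdot\textup{exp}(-\frac{Mn\epsilon^2}{48I_{max}}) \leq \delta$ for $M$. Your remark that Lemma~\ref{lemma:topk-size1} requires the global event $\mathcal{E}$ (so the union bound must range over all nodes) is exactly the right observation.
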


Unlike~\cite{riondato2014efficient, riondato2015mining, pietracaprina2010mining}, the sample size in Theorem~\ref{th:topk} not only depends on the confidence level $1-\delta$ and the error $\epsilon$, but also is proportional to $I_{max}$, which varies over different datasets. This is meaningful in practice, because for a social network, $I_{max}$ is normally very small comparing to $n$~\cite{chen2009efficient, chen2010scalableLT, goyal2011simpath, tang2015influence, du2013scalable, goyal2010learning}. One may link finding influential nodes with finding frequent itemsets remotely due to the intuition that a node frequent in many RR sets is likely influential.  In sampling based frequent itemsets mining~\cite{riondato2014efficient, riondato2015mining, pietracaprina2010mining}, the sample size is decided by $\epsilon$ and $\delta$ only, and thus is in general larger than ours here. 

\subsection{Estimating $I_{max}$ by Sampling}\label{subsec:topk-est}

The sample size $M$ in Theorem~\ref{th:topk} depends on $I_{max}$, which is unknown and hard to compute in exact. In this subsection, we devise a sampling algorithm that gives a tight upper bound of $I_{max}$ with a high probability.

Our algorithm sets $M=\frac{48x}{\epsilon^2}\ln{\frac{2n}{\delta}}$ and progressively increases $xn$ until it is enough larger than $I_{max}$. The intuition is that, if $n\mathcal{F_R}^*$ ($\mathcal{F_R}^*$ is the highest $\mathcal{F_R}(u)$ value) is sufficiently smaller than $xn$, probably the current $M$ is large enough.

\begin{algorithm}[t]
\caption{Sampling Sufficient Random RR sets for Top-K Influential Individuals}
\label{alg:topk}
\KwIn{$G=\langle V,E,w \rangle$, $\epsilon$, $\delta$ and $\mathcal{R}$ which is a set of random RR sets}
\KwOut{$\mathcal{R}$}
\begin{algorithmic}[1]
\WHILE {$|\mathcal{R}| < \frac{48\times 4\epsilon}{\epsilon^2}\ln{\frac{2n}{\delta}}$}
	\STATE Sample a random RR set and add to $\mathcal{R}$
\ENDWHILE
\STATE $x \leftarrow \frac{|\mathcal{R}|\epsilon^2}{48\ln{\frac{2n}{\delta}}}$
\WHILE {$\mathcal{F_R}^* \geq x-\epsilon$}
\STATE Sample a random RR sets and add to $\mathcal{R}$
\STATE $x \leftarrow \frac{|\mathcal{R}|\epsilon^2}{48\ln{\frac{2n}{\delta}}}$
\ENDWHILE
\RETURN $\mathcal{R}$
\end{algorithmic}
\end{algorithm}

Algorithm~\ref{alg:topk} shows our sampling method. We prove that the final random RR sets are enough and $xn$, the upper bound of $I_{max}$, is tight. 

\begin{lemma}\label{lemma:max1}
When $M=\frac{48x}{\epsilon^2}\ln{\frac{2n}{\delta}}$, if $I_{max} \geq xn$, with probability at least $1-\delta_1$, $\mathcal{F_R}^* \geq x-\epsilon$, where $\delta_1=(\frac{\delta}{2n})^{24}$ and $\mathcal{F_R}^*$ is the maximum $\mathcal{F_R}(u)$ for all $u \in V$.
\end{lemma}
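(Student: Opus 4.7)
The plan is to apply the lower-tail martingale bound from Corollary~\ref{cor:c1} to the single node that witnesses $I_{max}$ and then exploit the monotonicity of a simple one-variable expression.

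First, I would pick a node $u^*$ with $I_{u^*} = I_{max}$, so that by hypothesis $p := I_{u^*}/n \geq x$. Since $\mathcal{F_R}^* \geq \mathcal{F_R}(u^*)$ for every choice of RR sets, it suffices to upper-bound $\Pr[\mathcal{F_R}(u^*) < x-\epsilon]$. With $x_i = \mathbf{1}[u^* \in R_i]$, I would instantiate Corollary~\ref{cor:c1} with $S=\{u^*\}$ and choose $\xi = (p-(x-\epsilon))/p > 0$ so that the event $\sum_i x_i < M(x-\epsilon)$ is exactly the event $\sum_i x_i - Mp \leq -\xi M p$. This yields
\[
    \Pr\bigl[\mathcal{F_R}(u^*) < x-\epsilon\bigr] \;\leq\; \exp\!\left(-\frac{(p-(x-\epsilon))^2}{2p}\,M\right).
\]

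Next, I would minimize the function $g(p) = (p-(x-\epsilon))^2/p$ over $p \geq x$. A quick derivative calculation gives $g'(p) = (p-(x-\epsilon))(p+(x-\epsilon))/p^2 \geq 0$ on this range (the first factor is $\geq \epsilon > 0$, and the second is nonnegative once we handle the trivial case $x < \epsilon$ separately, in which $\mathcal{F_R}^* \geq 0 > x-\epsilon$ holds automatically). Hence $g(p) \geq g(x) = \epsilon^2/x$, so the exponent above is at least
\[
    \frac{\epsilon^2}{2x}\cdot M \;=\; \frac{\epsilon^2}{2x}\cdot\frac{48x}{\epsilon^2}\ln\frac{2n}{\delta} \;=\; 24\ln\frac{2n}{\delta}.
\]
Therefore $\Pr[\mathcal{F_R}(u^*) < x-\epsilon] \leq (\delta/(2n))^{24} = \delta_1$, and the lemma follows.

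I do not expect a serious obstacle here: the tail bound from Corollary~\ref{cor:c1} is tailor-made for this, and the only delicate piece is verifying that the worst case of the exponent occurs at $p = x$ (so that the weakest version of the hypothesis $I_{u^*} \geq xn$ still suffices). The one thing to be careful about is the edge case $x < \epsilon$, where $x-\epsilon$ becomes nonpositive and the monotonicity argument degenerates; this is handled trivially since $\mathcal{F_R}^* \geq 0$. Note also that we do not need a union bound over $V$ here because we only require one node (the maximizer) to have large $\mathcal{F_R}$ value, which is exactly why the failure probability $\delta_1$ is so much smaller than the $\delta$ appearing in Theorem~\ref{th:topk}.
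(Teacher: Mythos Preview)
Your proposal is correct and follows essentially the same approach as the paper: apply the lower-tail bound of Corollary~\ref{cor:c1} to a single node realizing $I_{max}$, then observe that the bound is weakest when $I_{max}=xn$. The only cosmetic difference is that the paper fixes $\xi=\epsilon/x$ and uses two one-line monotonicity steps (first enlarging the event, then shrinking the exponent), whereas you choose $\xi$ to make the event exact and then minimize $g(p)=(p-(x-\epsilon))^2/p$ via a derivative; both routes land on the same exponent $\epsilon^2 M/(2x)=24\ln(2n/\delta)$.
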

\begin{proof}
Suppose $u$ is a node with the maximum influence. Since $xn \leq I_{max}$, we have\small
\begin{align*}
\textup{Pr}\{\mathcal{F_R}^* \leq x-\epsilon\} &\leq \textup{Pr}\{\mathcal{F_R}(u)\leq x-\epsilon\} \\
&= \textup{Pr}\{\mathcal{F_R}(u)\leq (1-\frac{\epsilon}{x})x\} \\
&\leq \textup{Pr}\{n\mathcal{F_R}(u) \leq (1-\frac{\epsilon}{x})I_{max}\} \\
&\leq \textup{exp}(-\frac{(\frac{\epsilon}{x})^2MI_{max}}{2n}) \\
&\leq \textup{exp}(-\frac{(\frac{\epsilon}{x})^2Mx}{2})=(\frac{\delta}{2n})^{24}
\end{align*}
\end{proof}

Lemma~\ref{lemma:max1} shows that with a high probability, if $\mathcal{F_R}^* < x-\epsilon$, then the current random RR sets are enough.

\begin{lemma}\label{lemma:max2}
When $xn \geq I_{max}$, $M=\frac{48x}{\epsilon^2}\ln{\frac{2n}{\delta}}$, with probability at least $1-\delta_2$, $\forall u$, if $I_u \leq (x-2\epsilon)n$, then $\mathcal{F_R}(u) \leq x-\epsilon$, where $\delta_2=n(\frac{\delta}{2n})^{16}$.
\end{lemma}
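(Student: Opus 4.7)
The plan is to fix a single node $u$ with $I_u \leq (x-2\epsilon)n$, bound the tail probability $\Pr[\mathcal{F_R}(u) > x-\epsilon]$ via the upper tail of Corollary~\ref{cor:c1}, and then conclude with a union bound over the $n$ nodes to get the factor of $n$ in $\delta_2$. Write $p = I_u/n$, so by hypothesis $p \leq x - 2\epsilon$. The case $p = 0$ is trivial since $\mathcal{F_R}(u)$ is then identically zero, so I will assume $p > 0$ and invoke the first inequality of Corollary~\ref{cor:c1}.

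I would choose the deviation parameter $\xi = (x - \epsilon - p)/p$, so that $(1+\xi)Mp = M(x-\epsilon)$ is precisely the threshold of interest. The hypothesis $p \leq x - 2\epsilon$ gives $\xi p = x - \epsilon - p \geq \epsilon > 0$, so the bound applies. After substitution the exponent simplifies to
$$
\frac{\xi^{2}}{2 + \tfrac{2}{3}\xi}\, Mp \;=\; \frac{M(x-\epsilon-p)^{2}}{\tfrac{4p}{3} + \tfrac{2(x-\epsilon)}{3}}.
$$
Using the crude bounds $p \leq x$ and $x - \epsilon \leq x$, the denominator is at most $2x$, so the exponent is at least $M(x-\epsilon-p)^{2}/(2x) \geq M\epsilon^{2}/(2x)$. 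Plugging in $M = \tfrac{48x}{\epsilon^{2}}\ln\tfrac{2n}{\delta}$ reduces the exponent to $24 \ln\tfrac{2n}{\delta}$, so the tail bound is at most $(\delta/2n)^{24} \leq (\delta/2n)^{16}$. A union bound over the (at most) $n$ nodes in $V$ then delivers the stated $\delta_{2} = n(\delta/2n)^{16}$.

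The algebra is routine once the right $\xi$ is chosen; the only genuine subtlety is to confirm that the same single choice of $\xi$ handles both the regime where $p$ is close to $x-2\epsilon$ (small $\xi$, Gaussian-like tail) and the regime where $p$ is tiny (large $\xi$, Poisson-like tail). A proof by cases $p \geq \epsilon$ versus $p < \epsilon$, mirroring the structure of Lemma~\ref{lemma:topk-size2}, is also viable, but the uniform argument above is cleaner because Corollary~\ref{cor:c1}'s exponent degrades gracefully in $\xi$. The only other thing worth checking is the boundary $x < 2\epsilon$, which cannot occur since then $(x-2\epsilon)n < 0 \leq I_u$, so the hypothesis is vacuous.
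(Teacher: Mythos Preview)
Your proof is correct, and it is genuinely different from the paper's. The paper argues by cases on the size of $I_u$ (equivalently, on whether $\xi\le 1$ or $\xi>1$): in the large-$I_u$ case it replaces the full exponent $\frac{\xi^2}{2+\frac{2}{3}\xi}$ by the cruder $\xi^2/3$ and obtains the tail $(\delta/2n)^{16}$, while in the small-$I_u$ case it uses the linear-in-$\xi$ simplification $3\xi/8$ and obtains $(\delta/2n)^{18}$. The paper's split also relies on $x\ge 4\epsilon$ (enforced by the first loop of Algorithm~\ref{alg:topk}) so that the case boundary $(1-\epsilon')/2\le 1-2\epsilon'$ is nonvacuous. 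Your single-shot argument keeps the full Bernstein exponent, bounds the denominator $\tfrac{4p}{3}+\tfrac{2(x-\epsilon)}{3}\le 2x$ uniformly, and lands on the sharper pointwise tail $(\delta/2n)^{24}$, which you then relax to $(\delta/2n)^{16}$ to match the stated $\delta_2$. This is cleaner, avoids the auxiliary assumption $x\ge 4\epsilon$, and incidentally shows that the constant in $\delta_2$ could be tightened. The only cosmetic point is that the lemma asks for $\mathcal{F_R}(u)\le x-\epsilon$ while your tail event is the strict inequality $\mathcal{F_R}(u)>x-\epsilon$; since $M\mathcal{F_R}(u)$ is integer-valued this is harmless, but you might note it.
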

\begin{proof}
Let $\epsilon'=\frac{\epsilon}{x}$. Note that the first 3 lines of Algorithm~\ref{alg:topk} ensure that $x \geq 4\epsilon$ and $\epsilon' \leq \frac{1}{4}$. Thus, $\frac{1-\epsilon'}{2} \leq 1-2\epsilon'$. We have two possible cases.
\begin{description}
    \item [Case 1:] If $\frac{(1-\epsilon')xn}{2} \leq I_u \leq (1-2\epsilon')xn$
    \begin{align*}
        &~~~~\textup{Pr}\{\mathcal{F_R}(u) \geq x-\epsilon\} \\
        &= \textup{Pr}\{n\mathcal{F_R}(u) \leq [1+\frac{(1-\epsilon')xn-I_u}{I_u}]I_u\} \\
        &\leq \textup{exp}(-\frac{1}{3}\frac{[(1-\epsilon')xn-I_u]^2}{I_u^2}\frac{MI_u}{n}) \\
        &\leq \textup{exp}(-\frac{\epsilon'^2Mx}{3(1-2\epsilon')}) \leq \textup{exp}(-16\ln{\frac{2n}{\delta}})=(\frac{\delta}{2n})^{16}
    \end{align*}
    \item [Case 2:] If $I_u \leq \frac{(1-\epsilon')xn}{2}$
    \begin{align*}
        &~~~~\textup{Pr}\{n\mathcal{F_R}(u) \geq (1-\epsilon')xn\} \\
        &= \textup{Pr}\{n\mathcal{F_R}(u) \leq [1+\frac{(1-\epsilon')xn-I_u}{I_u}]I_u\} \\
        &\leq \textup{exp}(-\frac{3}{8}\frac{[(1-\epsilon')xn-I_u]}{I_u}\frac{MI_u}{n}) \\
        &\leq \textup{exp}(-\frac{3(1-\epsilon')Mx}{16}) \\
        &\leq \textup{exp}(\frac{9\ln{\frac{2n}{\delta}}}{2\epsilon'^2}) \leq \textup{exp}(-18\ln{\frac{2n}{\delta}})=(\frac{\delta}{2n})^{18}
    \end{align*}
\end{description}
Applying the Union Bound, we have that, with probability at least $1-n(\frac{\delta}{2n})^{16}$, $\mathcal{F_R}(u) \leq x-\epsilon$ for any $u$ such that $I_u \leq (x-2\epsilon)n$. 
\end{proof}

Lemma~\ref{lemma:max2} implies that when $(x-2\epsilon)n \geq I_{max}$, $\mathcal{F_R}^* \leq x-\epsilon$ with a high probability. The first time in Algorithm~\ref{alg:topk} when $(x-2\epsilon)n \geq I_{max}$ we have $xn \leq \max{(4\epsilon n, I_{max}+2\epsilon n)}$. If we set $\epsilon$ smaller than $\frac{I_{max}}{2n}$), the upper bound $xn$ is at most $2I_{max}$. This is achievable in practice since $I_{max}$ has some trivial lower bounds, such as $\max_{u \in V}{\sum_{v \in N^{out}(u)}{p_{uv}}}$.

\begin{theorem}\label{th:max}
Given $\epsilon$ and $\delta$, with probability $1-o(\frac{1}{n^{14}})$, Algorithm~\ref{alg:topk} returns $M=\frac{48x}{\epsilon^2}\ln{\frac{2n}{\delta}} \geq \frac{48I_{max}}{n\epsilon^2}\ln{\frac{\delta}{2n}}$ random RR sets, and $xn \leq \max{(4\epsilon n, I_{max}+2\epsilon n)}$.
\end{theorem}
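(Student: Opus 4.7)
The plan is to decouple the two claims---the lower bound on $M$ and the upper bound on $xn$---and handle them via Lemmas~\ref{lemma:max1} and~\ref{lemma:max2}, respectively. Throughout the run of Algorithm~\ref{alg:topk} the variable $x$ is a deterministic, strictly increasing function of the current sample size $|\mathcal{R}|$, and the while loop terminates at the first iteration for which $\mathcal{F_R}^* < x-\epsilon$. I will identify two critical regimes of $x$, apply the corresponding concentration lemma in each regime, and combine via a union bound whose total failure is dominated by polynomial factors times the minuscule $\delta_1$ and $\delta_2$.

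First I would show that the loop does not exit prematurely. In any iteration with $xn\le I_{max}$, Lemma~\ref{lemma:max1} guarantees $\mathcal{F_R}^* \ge x-\epsilon$ with probability at least $1-\delta_1=1-(\delta/(2n))^{24}$, i.e., exactly the condition that keeps the loop alive. Union-bounding over these iterations forces $xn>I_{max}$ at termination, so the returned $M=(48x/\epsilon^2)\ln(2n/\delta)$ satisfies the required lower bound $(48I_{max}/(n\epsilon^2))\ln(2n/\delta)$.

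Next I would argue that the loop stops no later than the first iteration in which $(x-2\epsilon)n\ge I_{max}$: at this point Lemma~\ref{lemma:max2} says that for every $u$ with $I_u\le (x-2\epsilon)n$---i.e., every node---$\mathcal{F_R}(u)\le x-\epsilon$ with probability $1-\delta_2=1-n(\delta/(2n))^{16}$, so the while condition fails. Call this first value $x^{*}$. Because $x$ starts at $4\epsilon$ (thanks to the initial padding in lines 1--3) and grows in tiny steps of $\epsilon^2/(48\ln(2n/\delta))$, one of two cases holds: either $I_{max}\le 2\epsilon n$, so termination occurs already at $x=4\epsilon$ giving $x^{*}n\le 4\epsilon n$, or $x^{*}$ lies within one step of $I_{max}/n+2\epsilon$, which after absorbing the sub-$\epsilon$ slack yields $x^{*}n\le I_{max}+2\epsilon n$. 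The two cases combine into $x^{*}n\le\max(4\epsilon n,\,I_{max}+2\epsilon n)$.

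Finally I would assemble the global failure probability. On the good event the loop performs at most $(x^{*}-4\epsilon)\cdot 48\ln(2n/\delta)/\epsilon^{2}+1=O(I_{max}\log(n/\delta)/(n\epsilon^{2}))$ iterations, which is polynomial in $n$ for the usual ranges of $\epsilon$ and $\delta$. Summing the per-iteration failure probabilities $\delta_1+\delta_2=O(n(\delta/(2n))^{16})$ over this many iterations gives an overall failure of $\mathrm{poly}(n)/n^{16}=o(1/n^{14})$, as claimed. The main obstacle I foresee is not conceptual but careful bookkeeping: verifying that the iteration count stays polynomial (so that the union bound does not wipe out the tight $\delta_1,\delta_2$ margins), and cleanly handling the boundary case $I_{max}\le 2\epsilon n$, where termination happens on the very first probabilistic check and the max is attained by its first argument.
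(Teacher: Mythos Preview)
Your proposal is correct and follows essentially the same approach as the paper: identify the two failure modes (premature stop while $xn\le I_{max}$, and failure to stop once $(x-2\epsilon)n\ge I_{max}$), invoke Lemma~\ref{lemma:max1} for the first and Lemma~\ref{lemma:max2} for the second, then union-bound over iterations. The only minor slack compared to the paper is that you sum $\delta_1+\delta_2$ over all iterations and call the iteration count merely ``polynomial,'' whereas the paper applies $\delta_2$ once and notes that (for fixed $\epsilon,\delta$) the number of tests before $xn$ reaches $I_{max}$ is $O(\log n)$; tightening your bookkeeping in the same way makes the $o(1/n^{14})$ bound immediate.
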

\begin{proof}
There are only two possible reasons that Algorithm~\ref{alg:topk} may fail to achieve the above goals: (1) it stops sampling when $xn$ is still smaller than $I_{max}$; or (2) it does not stop sampling when $xn$ reaches $I_{max}+2\epsilon n$. Lemma~\ref{lemma:max2} indicates that the probability that (2) happens is at most $n(\frac{\delta}{2n})^{16}$. We bound the probability that (1) occurs.

Algorithm~\ref{alg:topk} stops when $\mathcal{F_R}^* < x-\epsilon$. According to Lemma~\ref{lemma:max1}, if $xn \leq I_{max}$, $\mathcal{F_R}^* < x-\epsilon$ happens with probability at most $(\frac{\delta}{2n})^{24}$. Before $xn$ is increased to $I_{max}$, the test whether $\mathcal{F_R}^* \geq x-\epsilon$ is called at most $\frac{48I_{max}\ln{\frac{2n}{\delta}}}{n\epsilon^2}=O(\log{n})$ times when $\epsilon$ and $\delta$ are fixed. Thus, the probability that Algorithm~\ref{alg:topk} stops before $xn$ reaches $I_{max}$ due to $\mathcal{F_R}^* < x-\epsilon$ is at most $O(\log{n})*(\frac{\delta}{2n})^{24}$.

Putting all things together and applying the Union bound, the failure probability is at most $O(\log{n})*(\frac{\delta}{2n})^{24}+n(\frac{\delta}{2n})^{16}=o(\frac{1}{n^{14}})$.
\end{proof}

When the network is updated, the value of $I_{max}$ may change. Thus, after we update the random RR sets, we call Algorithm~\ref{alg:topk} to ensure that we have enough but not too many random RR sets. In addition, if $I_{max}$ decreases dramatically, which means the current sample size is too large, we abandon some RR sets. Specifically, if $\mathcal{F_R}^* < x-\epsilon$, which means with very high probability that $I_{max} < xn$, we keep deleting the last RR set from $\mathcal{R}$ until if deleting the current last RR set leads to $\mathcal{F_R}^* > x-\epsilon$ (see Algorithm~\ref{alg:topk2}).

\begin{algorithm}[t]
\caption{Deleting Redundant Random RR Sets for Top-K Influential Individuals}
\label{alg:topk2}
\KwIn{$G=\langle V,E,w \rangle$, $\epsilon$, $\delta$ and $\mathcal{R}$ which is a set of random RR sets}
\KwOut{$\mathcal{R}$}
\begin{algorithmic}[1]
\WHILE {$\mathcal{F_R}^* < x-\epsilon \wedge |\mathcal{R}| > \frac{48\times 4\epsilon}{\epsilon^2}\ln{\frac{2n}{\delta}}$}
	\STATE $h \leftarrow \text{the last RR set of }\mathcal{R}$
	\STATE Delte $h$ from $\mathcal{R}$
	\IF {$\mathcal{F_R}^* \geq x-\epsilon \vee |\mathcal{R}| < \frac{48\times 4\epsilon}{\epsilon^2}\ln{\frac{2n}{\delta}}$}
		\STATE Add $h$ back to $\mathcal{R}$
		\STATE \textbf{break}
	\ENDIF
\ENDWHILE
\RETURN $\mathcal{R}$
\end{algorithmic}
\end{algorithm}

The reason we set the failure probability of estimating $I_{max}$ so small a value is because the number of updates in dynamic networks can be huge. When the failure probability is $o(\frac{1}{n^{14}})$, even there are $O(n^{14})$ updates, by applying the union bound, we can always get an tight upper bound of $I_{max}$ at any time with high probability.

\nop{
Combining Theorem~\ref{th:topk} and applying the Union bound, we have that, with probability at least $1-\delta-o(\frac{1}{n^{14}})$, our maintenance of RR sets ensures that by setting the filtering threshold $\mathcal{F_R}^k-\frac{\epsilon}{2}$, the set of nodes found includes all real top-K influential nodes and does not have any nodes such that $I_u < I^k-\epsilon n$.
}

\subsection{Collecting Influential Nodes and Improving Precision}\label{subsec:topk-precision}
Although Theorem~\ref{th:max} tells us that $M=|\mathcal{R}| \geq \frac{48I_{max}}{n\epsilon^2}\ln{\frac{\delta}{2n}}$ with high probability, we cannot directly apply Theorem~\ref{th:topk} on $\mathcal{R}$ to collect influential nodes by setting the filtering threshold $n\mathcal{F_R}^k-\frac{\epsilon n}{2}$. This is because probabilistic supports are not transitive\footnote{Suppose A implies B with probability $1-\delta_1$, and B implies C with probability $1-\delta_2$. A flawed argument that uses the transitivity is that by applying the union bound, A implies C with probability $1-\delta_1-\delta_2$.}~\cite{shogenji2003condition}. To better understand this issue, note that the probability that both (1) and (2) hold in Theorem~\ref{th:topk} is actually a conditional probability that $\textup{Pr}\{(1)\&(2) \mid \mathcal{F_R}^* \geq x-\epsilon~\text{for the first time}\}$ rather than $\textup{Pr}\{(1)\&(2) \mid M \geq \frac{48I_{max}}{n\epsilon^2}\ln{\frac{2n}{\delta}}\}$. Although ``$\mathcal{F_R}^* \geq x-\epsilon~\text{for the first time}$'' implies ``$M \geq \frac{48I_{max}}{n\epsilon^2}\ln{\frac{2n}{\delta}}$'' with a high probability, these two conditions are not exactly the same. To fix this issue, we need another collection $\mathcal{R}_1$ that consists of $M$ independently generated RR sets. In such a case, we do not have any prior knowledge about the $M$ RR sets in $\mathcal{R}_1$, which is different from that we know that for $\mathcal{R}$, $\mathcal{F}_{\mathcal{R}}^* \leq x-\epsilon$. The update of $\mathcal{R}_1$ is almost the same as the update of $\mathcal{R}$. After updating $\mathcal{R}$, we first update all RR sets in $\mathcal{R}_1$ using methods in Section~\ref{sec:update}, and then adjust the size of $\mathcal{R}_1$ to make $|\mathcal{R}_1|=M=\mathcal{R}$.

When $\mathcal{R}_1$ has enough RR sets, according to Theorem~\ref{th:topk}, we filter out nodes such that $\mathcal{F}_{\mathcal{R}_1}(u)<\mathcal{F}_{\mathcal{R}_1}^k-\frac{\epsilon}{2}$. Using Theorem~\ref{th:threshold}, we can further improve the filtering threshold to make the precision higher.

\begin{theorem}\label{th:topk-precision}
By keeping $|\mathcal{R}_1|=M=|\mathcal{R}|$, with probability at least $1-2\delta-o(\frac{1}{n^{14}})$, the following conditions hold.
\begin{enumerate}
    	\item If $I_u \geq I^k$, then $\mathcal{F}_{\mathcal{R}_1}(u) \geq \mathcal{F}_{\mathcal{R}_1}^k-\frac{\epsilon}{4}-\frac{\epsilon_1}{2}$
    	\item If $I_u < I^k-\epsilon n$, then $\mathcal{F}_{\mathcal{R}_1}(u)<\mathcal{F}_{\mathcal{R}_1}^k-\frac{\epsilon}{4}-\frac{\epsilon_1}{2}$ 
\end{enumerate}
where $\epsilon_1=\sqrt{\frac{\mathcal{F}_{\mathcal{R}_1}^k-\frac{\epsilon}{4}}{4x}}\epsilon \leq \frac{\epsilon}{2}$.
\end{theorem}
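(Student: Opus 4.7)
The plan is a two-stage application of concentration on the independently generated collection $\mathcal{R}_1$: first use the coarser bound $|n\mathcal{F}_{\mathcal{R}_1}(u)-I_u|\le \epsilon n/4$ from Lemma~\ref{lemma:topk-size2} to pin down where $I^k$ sits with respect to $n\mathcal{F}_{\mathcal{R}_1}^k$, then feed the resulting anchor $T$ into Theorem~\ref{th:threshold} at a sharper error $\epsilon_1$, which will be exactly the value that makes the two sample-size formulas match.

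First I would condition on the high-probability event supplied by Theorem~\ref{th:max}, costing $o(1/n^{14})$: on this event $M=|\mathcal{R}_1|=|\mathcal{R}|=\frac{48x}{\epsilon^2}\ln\frac{2n}{\delta}$ and $I_{max}\le xn$. Because $\mathcal{R}_1$ is sampled independently of $\mathcal{R}$, conditioning on this event does not change the joint distribution of the $M$ i.i.d.\ RR sets in $\mathcal{R}_1$. Applying Lemma~\ref{lemma:topk-size2} to each $u\in V$ and union-bounding costs another $\delta$ and yields $|n\mathcal{F}_{\mathcal{R}_1}(u)-I_u|\le \epsilon n/4$ for all $u$, from which a one-line sort-and-count argument (as in Lemma~\ref{lemma:topk-size1}) gives $|n\mathcal{F}_{\mathcal{R}_1}^k-I^k|\le \epsilon n/4$. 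Setting $T:=n\mathcal{F}_{\mathcal{R}_1}^k-\epsilon n/4$ therefore gives both $T\le I^k$ and, via $I^k\le I_{max}\le xn$, the numerical bound $\mathcal{F}_{\mathcal{R}_1}^k-\epsilon/4\le x$.

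Next I would solve for the $\epsilon_1$ that forces $M=\frac{12T}{n\epsilon_1^2}\ln\frac{2n}{\delta}$ using the known $M=\frac{48x}{\epsilon^2}\ln\frac{2n}{\delta}$; this gives $\epsilon_1=\sqrt{T/(4xn)}\,\epsilon=\sqrt{(\mathcal{F}_{\mathcal{R}_1}^k-\epsilon/4)/(4x)}\,\epsilon$, matching the stated formula, and $\epsilon_1\le \epsilon/2$ follows precisely from $\mathcal{F}_{\mathcal{R}_1}^k-\epsilon/4\le x$ established above. I then invoke Theorem~\ref{th:threshold} on $\mathcal{R}_1$ at threshold $T$ and error $\epsilon_1$, paying a further $\delta$, to obtain (a) $I_u\ge T\Rightarrow n\mathcal{F}_{\mathcal{R}_1}(u)\ge T-\epsilon_1 n/2$ and (b) $I_u<T-\epsilon_1 n\Rightarrow n\mathcal{F}_{\mathcal{R}_1}(u)<T-\epsilon_1 n/2$. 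Substituting $T=n\mathcal{F}_{\mathcal{R}_1}^k-\epsilon n/4$ in (a) gives conclusion (1) immediately; for conclusion (2) I only need the arithmetic step $I^k-\epsilon n\le T-\epsilon_1 n$, which follows by combining $I^k-T\le \epsilon n/2$ (from $I^k\le n\mathcal{F}_{\mathcal{R}_1}^k+\epsilon n/4$) with $\epsilon_1\le \epsilon/2$.

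The main obstacle, and the reason the theorem needs the separate collection $\mathcal{R}_1$ in the first place, is that the threshold $T$ and the error $\epsilon_1$ are themselves functions of the RR sets being analyzed. The authors' warning about the non-transitivity of probabilistic supports is exactly the hazard here: if we tried to reuse $\mathcal{R}$, we would be conditioning Theorem~\ref{th:threshold}'s concentration event on the stopping rule of Algorithm~\ref{alg:topk}, which correlates with the very $\mathcal{F}_{\mathcal{R}}(u)$ values we want to bound. With $\mathcal{R}_1$ independent of $\mathcal{R}$, conditioning on the Theorem~\ref{th:max} event leaves $\mathcal{R}_1$'s distribution unchanged, so each concentration argument on $\mathcal{R}_1$ remains valid, and a final union bound over the three failure sources (Theorem~\ref{th:max}, Lemma~\ref{lemma:topk-size2} on $\mathcal{R}_1$, Theorem~\ref{th:threshold} on $\mathcal{R}_1$) gives the claimed $2\delta+o(1/n^{14})$.
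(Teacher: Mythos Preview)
Your proposal is correct and follows essentially the same route as the paper's proof: condition on the Theorem~\ref{th:max} event (cost $o(1/n^{14})$), apply Lemmas~\ref{lemma:topk-size1}--\ref{lemma:topk-size2} on $\mathcal{R}_1$ to localize $I^k$ around $n\mathcal{F}_{\mathcal{R}_1}^k$ (cost $\delta$), set $T=n\mathcal{F}_{\mathcal{R}_1}^k-\epsilon n/4$, choose $\epsilon_1$ so that the sample-size formula of Theorem~\ref{th:threshold} matches the available $M=\frac{48x}{\epsilon^2}\ln\frac{2n}{\delta}$, and invoke Theorem~\ref{th:threshold} (cost $\delta$) before union-bounding. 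Your write-up is in fact a bit more explicit than the paper's in two places: you justify $\epsilon_1\le\epsilon/2$ via $\mathcal{F}_{\mathcal{R}_1}^k-\epsilon/4\le I^k/n\le I_{max}/n\le x$, and you spell out why the independence of $\mathcal{R}_1$ from the stopping rule on $\mathcal{R}$ is what licenses applying concentration to $\mathcal{R}_1$ after conditioning on Theorem~\ref{th:max}.
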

\begin{proof}
	According to Theorem~\ref{th:max}, after adjusting the number of RR sets by Algorithm~\ref{alg:topk} and Algorithm~\ref{alg:topk2}, with probability $1-o(\frac{1}{n^{14}})$, we have $M=\frac{48x}{\epsilon^2}\ln{\frac{2n}{\delta}} \geq \frac{48I_{max}}{n\epsilon^2}\ln{\frac{\delta}{2n}}$. When $|{\mathcal{R}_1}|=M \geq \frac{48I_{max}}{n\epsilon^2}\ln{\frac{\delta}{2n}}$, with probability at least $1-\delta$, $n\mathcal{F}_{\mathcal{R}_1}^k-\frac{\epsilon n}{4} \leq I^k \leq n\mathcal{F}_{\mathcal{R}_1}^k+\frac{\epsilon n}{4}$ (Lemma~\ref{lemma:topk-size1} and Lemma~\ref{lemma:topk-size2}). 
	
	Suppose $M=\frac{48x}{\epsilon^2}\ln{\frac{2n}{\delta}} \geq \frac{48I_{max}}{n\epsilon^2}\ln{\frac{\delta}{2n}}$ and $n\mathcal{F}_{\mathcal{R}_1}^k-\frac{\epsilon n}{4} \leq I^k \leq n\mathcal{F}_{\mathcal{R}_1}^k+\frac{\epsilon n}{4}$. Let $\epsilon_1=\sqrt{\frac{\mathcal{F}_{\mathcal{R}_1}^k-\frac{\epsilon}{4}}{4x}}\epsilon \leq \frac{\epsilon}{2}$. Applying Theorem~\ref{th:threshold} and setting the threshold $T=(\mathcal{F}_{\mathcal{R}_1}^k-\frac{\epsilon}{4})n$, with probability at least $1-\delta$, we have the follows. (1) if $I_u \geq (\mathcal{F}_{\mathcal{R}_1}^k-\frac{\epsilon}{4})n$, then $\mathcal{F}_{\mathcal{R}_1}(u) \geq \mathcal{F}_{\mathcal{R}_1}^k-\frac{\epsilon}{4}-\frac{\epsilon_1}{2}$; and (2) if $I_u < (\mathcal{F}_{\mathcal{R}_1}^k-\frac{\epsilon}{4}-\epsilon_1)n$, then $\mathcal{F}_{\mathcal{R}_1}(u)<\mathcal{F}_{\mathcal{R}_1}^k-\frac{\epsilon}{4}-\frac{\epsilon_1}{2}$. Clearly $I^k \geq (\mathcal{F}_{\mathcal{R}_1}^k-\frac{\epsilon}{4})n$ and $I^k-\epsilon n \leq n\mathcal{F}_{\mathcal{R}_1}^k-\frac{3\epsilon n}{4} \leq (\mathcal{F}_{\mathcal{R}_1}^k-\frac{\epsilon}{4}-\epsilon_1)n$.
	
	Applying the Union bound, we have that with probability at least $1-2\delta-o(\frac{1}{n^{14}})$, the above conditions hold after executions of Algorithm~\ref{alg:topk} and Algorithm~\ref{alg:topk2}.
\end{proof}

Theorem~\ref{th:topk-precision} shows that we can use a tighter filtering threshold $\mathcal{F}_{\mathcal{R}_1}^k-\frac{\epsilon}{4}-\frac{\epsilon_1}{2}$ which is no greater than the original one $\mathcal{F}_{\mathcal{R}_1}^k-\frac{\epsilon}{2}$. Meanwhile, the failure probability is only increased by $\delta$ at most. 

\subsection{Maintaining Nodes Ranking Dynamically}\label{subsec:node-rank}

Besides efficiently updating RR sets from where accurate estimations of influence spreads of influential nodes can be obtained, how to maintain the set of influential nodes is also an essential building block of influential nodes mining on dynamic networks. A brute force solution is to perform an $O(n\log{n})$ sorting every time after an update but the cost may be  unacceptably high in practice. To solve this problem, we adopt the data structure for maximum vertex cover in a hyper graph~\cite{borgs2014maximizing}. This data structure can help us maintain all nodes sorted by their estimated influence spreads, which are proportional to their degrees in a collection of RR sets $\mathcal{R}$. Clearly, if all nodes are sorted, the set of influential nodes are those ones in the top. Fig.~\ref{fig:linkedlist} shows the data structure. 

\begin{figure}[h]
    \centering
    \includegraphics[width=.3\textwidth]{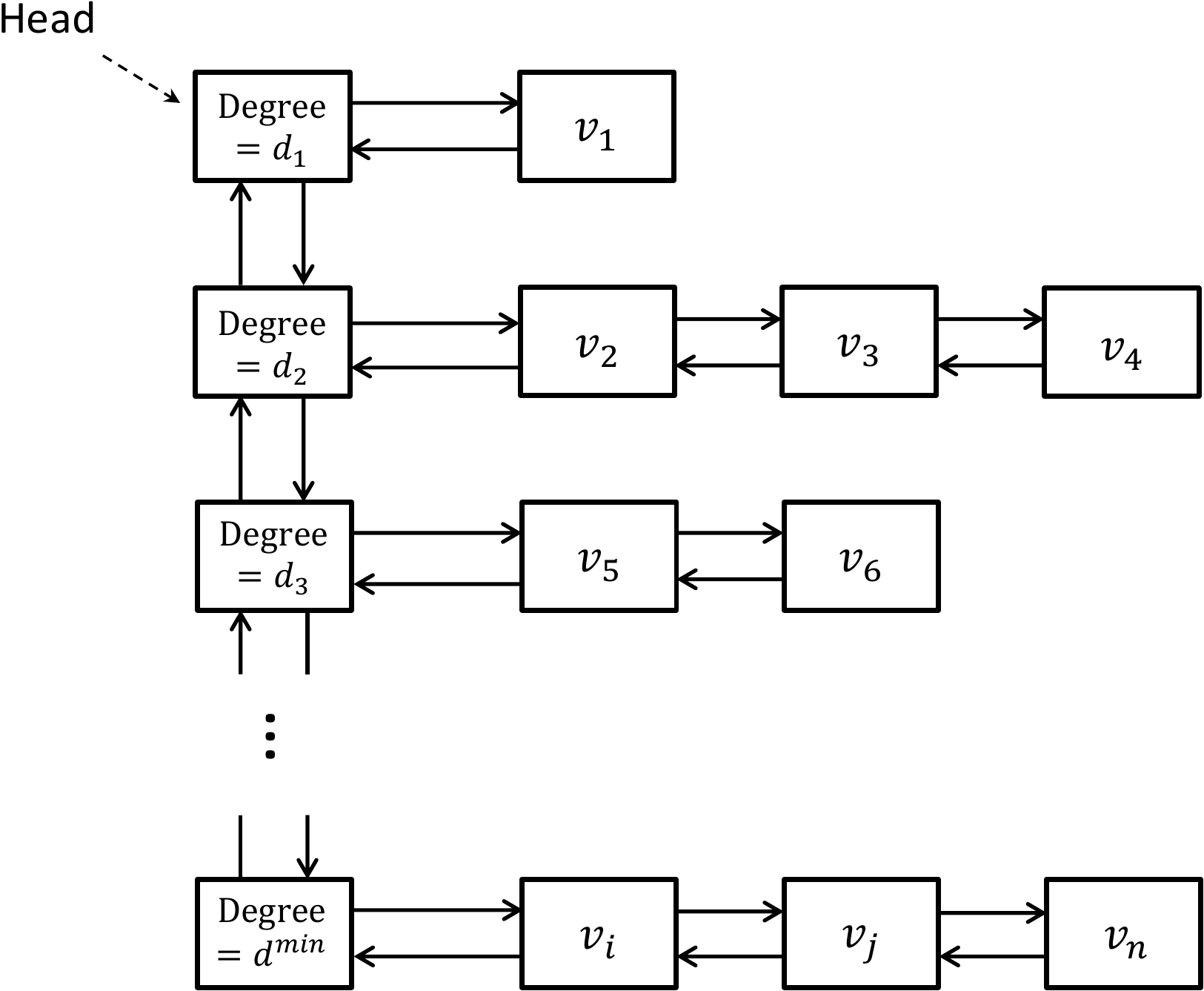}
    \caption{Linked List Structure, where $d_1 > d_2 > d_3 >...>d^{min}$}
    \label{fig:linkedlist}
\end{figure}

We maintain all nodes sorted by their degrees in $\mathcal{R}$ (recall that $\mathcal{D}(u)$, the degree of $u$ in $\mathcal{R}$, means how many RR sets contain $u$). Nodes with the same degree in $\mathcal{R}$ are grouped together and stored in a doubly linked list like in Fig.~\ref{fig:linkedlist}. Moreover, for those nodes, we create a head node which is the start of the linked list containing all nodes with the same given degree. Apparently, the number of head nodes is the number of distinctive values of $\mathcal{D}(u)$ in $\mathcal{R}$. We also maintain all head nodes sorted in a doubly linked list. For each $u \in V$, we maintain its address in the doubly linked lists structure and the corresponding head node. Note that when a RR set is updated, a new RR set is generated or an existing RR set is deleted, $\mathcal{D}(u)$ changes at most by 1 for each $u$. Thus, every time when $\mathcal{D}(u)$ is updated (increased or decreased by 1), we only need $O(1)$ time to find the head node of the linked list $u$ should be in (if such a head node does not exist now, we can create it and insert it into the doubly linked list of head nodes in $O(1)$ time) and insert it to the next of the head in $O(1)$ time. If after an update, a head node has no nodes after it, we delete it from the doubly linked list of head nodes in $O(1)$ time. Therefore, in total maintaining the linked list data structure only costs $O(1)$ time when the degree of a node changes due to the update of an RR set. With this data structure, we can always maintain all nodes sorted by their degrees in $\mathcal{R}$. Also, retrieving $\mathcal{F_R}^*$, which is needed in the frequently called test whether $\mathcal{F_R}^* \leq x-\epsilon$ , can be done in $O(1)$ time.

\section{Experiments}\label{sec:exp}
In this section, we report a series of experiments on $5$ real networks to verify our algorithms and our theoretical analysis. The experimental results demonstrate that our algorithms are both effective and efficient.

\subsection{Experimental Settings}\label{sec:exp_setting}
We ran our experiments on $5$ real network data sets that are publicly available online (\url{http://konect.uni-koblenz.de/networks/}, \url{http://www.cs.ubc.ca/~welu/} and \url{http://konect.uni-koblenz.de}). Table~\ref{tab:Datasets} shows the statistics of the four data sets.

\begin{table}
\centering
\begin{tabular}{|c|c|c|c|c|}
\hline
Network & \#Nodes & \#Edges & Average degree \\ \hline
wiki-Vote & 7,115 & 103,689 & 14.6 \\ \hline
Flixster & 99,053 & 977,738 & 9.9 \\ \hline
soc-Pokec & 1,632,803 & 30,622,564 & 18.8\\ \hline
flickr-growth & 2,302,925 & 33,140,018 & 14.4 \\ \hline
Twitter & 41,652,230 & 1,468,365,182 & 35.3 \\ \hline
\end{tabular}
\caption{The statistics of the data sets.}
\label{tab:Datasets}
\end{table}

To simulate dynamic networks, for each data set, we randomly partitioned all edges exclusively into 3 groups: $E_1$ (85\% of the edges), $E_2$ (5\% of the edges) and $E_3$ (10\% of the edges). We used $B=\langle V,E_1 \cup E_2\rangle$ as the base network. $E_2$ and $E_3$ were used to simulate a stream of updates. 

For the LT model, for each edge $(u,v)$ in the base network, we set the weight to be 1. For each edge $(u,v) \in E_3$, we generated a weight increase update $(u,v,+,1)$ (timestamps ignored at this time). For each edge $(u, v) \in E_2$, we generated one weight decrease update $(u,v,-,\Delta)$ and one weight increase update $(u,v,+,\Delta)$ where $\Delta$ was picked uniformly at random in $[0,1]$. We randomly shuffled those updates to form an update stream by adding random time stamps. For each data set, we generated 10 different instances of the base network and update stream, and thus ran the experiments 10 times. Note that for the 10 instances, although the base networks and update streams are different, the final snapshots of them are identical to the data set itself.

For the IC model, we first assigned propagation probabilities of edges in the final snapshot, i.e. the whole graph. We set $w_{uv}=\frac{1}{\text{in-degree}(v)}$, where $\text{in-degree}(v)$ is the number of in-neighbors of $v$ in the whole graph. Then, for each edge $(u,v)$ in the base network, we set $w_{uv}$ to $\frac{1}{\text{in-degree}(v)}$. For each edge $(u,v) \in E_3$, we generated a weight increase update $(u,v,+,\frac{1}{\text{in-degree}(v)})$ (timestamps ignored at this time). For each edge $(u, v) \in E_2$, we generated one weight decrease update $(u,v,-,\Delta\frac{1}{\text{in-degree}(v)})$ and one weight increase update $(u,v,+,\Delta\frac{1}{\text{in-degree}(v)})$ where $\Delta$ was picked uniformly at random in $[0,1]$. We randomly shuffled those updates to form an update stream by adding random time stamps. For each dataset we also generated 10 instances.

For the parameters of tracking nodes of influence at least $T$, we set $\epsilon=0.0002$, $\delta=0.001$, and $T=0.001 \times n$ for the first four data sets. We set $\epsilon=0.001$, $\delta=0.001$, and $T=0.005 \times n$ for the twitter data set. For the top-K influential individuals tracking task, we set $K=50$, $\delta=0.001$, and $\epsilon=0.0005$ for first four data sets. We set $K=100$, $\delta=0.001$, and $\epsilon=0.0025$ for the twitter data set. The reason we have different parameter settings for the twitter data is that it has more influential nodes than other networks.

All algorithms were implemented in Java and ran on a Linux machine of an Intel Xeon 2.00GHz CPU and 1TB main memory.

\subsection{Effectiveness}

We first assess the effectiveness of our techniques.

\subsubsection{Verifying Provable Quality Guarantees}
A challenge in evaluating the effectiveness of our algorithms is that the ground truth is hard to obtain. The existing literature of influence maximization~\cite{kempe2003maximizing, chen2010scalable, goyal2011simpath, tang2014influence, tang2015influence, cohen2014sketch} always use the influence spread estimated by 20,000 times Monte Carlo (MC) simulations as the ground truth. However, such a method is not suitable for our tasks, because the ranking of nodes really matters here. Even 20,000 times MC simulations may not be able to distinguish nodes with close influence spread. As a result, the ranking of nodes may differ much from the real ranking. Moreover, the effectiveness of our algorithms has theoretical guarantees while 20,000 times MC simulations is essentially a heuristic. It is not reasonable to verify an algorithm with a theoretical guarantee using the results obtained by a heuristic method without any quality guarantees.

In our experiments, we only used wiki-Vote and Flixster to run MC simulations and compare the results to those produced by our algorithms. We used 2,000,000 times MC simulations as the (pseudo) ground truth in the hope we can get more accurate results. According to our experiments, even so many MC simulations may generate slightly different rankings of nodes in two different runs but the difference is acceptably small. 
We only compare results on the identical final snapshot shared by all instances because running MC simulations on multiple snapshots is unaffordable (10 days on the final snapshots of Flixster).

\begin{table*}
\centering \caption{Recall and Maximum Error. The errors are measured in absolute influence value. ``w.h.p.'' is short for ``with high probability''.} \label{tab:recall_error}
\small
\begin{tabular}{|c|c|c|c|c|c|c|}
	\hline
		\multirow{2}{*}{} & \multicolumn{3}{c|}{wiki-Vote} & \multicolumn{3}{c|}{Flixster} \\ \cline{2-7}
			& Theoretical Value (w.h.p.) & Ave.$\pm$SD (LT) & Ave.$\pm$SD (IC) & Theoretical Value (w.h.p.) & Ave.$\pm$SD (LT) & Ave.$\pm$SD (IC) \\ \hline 
		Recall (Threshold) & 100\%   &  100\%   &  100\%   &  100\%  &  100\%  &  100\% \\ \hline
		Max.\ Error (Threshold) & $0.0002*7115=1.423$ & 0.758$\pm$0.033 & 0.814$\pm$0.013 & $0.0002*99053=19.81$ & 10.81$\pm$0.46 & 11.79$\pm$0.85 \\ \hline
		Recall (Topk-K) & 100\%   &  100\%   &  100\%   &  100\%  &  100\%  &  100\% \\ \hline
		Max.\ Error (Top-K) & $0.0005*7115=3.558$ & 1.254$\pm$0.080 & 1.272$\pm$0.090 & $0.0005*99053=49.53$ & 21.77$\pm$0.87 & 21.17$\pm$0.56 \\ \hline
\end{tabular}
\end{table*}

Table~\ref{tab:recall_error} reports the recall of the sets of influential nodes returned by our algorithms and the maximum errors of the false positive nodes in absolute influence value. Ave.$\pm$SD represents the average value and the standard deviation of a measurement on 10 instances. Our methods achieved 100\% recall every time as guaranteed theoretically.  Moreover, the real errors in influence were substantially smaller than the maximum error bound provided by our theoretical analysis. One may ask why we do not report the precision here. We argue that precision is indeed not a proper measure for our tasks when $100\%$ recall is required. Since we can only estimate influence spreads of nodes via a sampling method due to the exact computation being \#P-hard, if two nodes have close influence spreads, say $I_u=100$ and $I_v=99$, it is hard  for a sampling method to tell the difference between $I_u$ and $I_v$. Thus, if there are many nodes whose influence spreads are just slightly smaller than the threshold, it is hard to achieve a high precision when ensuring 100\% recall. Moreover, with a high probability, our method guarantees that influence spreads of false positive nodes are not far away from the real threshold. Such small errors are completely acceptable in many real applications.

\begin{table*}
\centering \caption{Estimated upper bounds of $I_{max}$ and the experimental values.}
\label{tab:est}\small
\begin{tabular}{*{7}{|c}|}
    \hline
    	\multirow{2}{*}{Dataset} & \multicolumn{3}{c|}{LT Model} & \multicolumn{3}{c|}{IC Model} \\ \cline{2-7}
			&	$xn$(Ave.$\pm$SD)	&	$I_{max}$	&	Approx. Ratio(Ave.$\pm$SD)	&	$xn$(Ave.$\pm$SD)	&	$I_{max}$	&	Approx. Ratio(Ave.$\pm$SD) \\ \hline
	wiki-Vote	& 57.7297$\pm$0.3372	&	51.770	&	1.1151$\pm$0.0065	&	55.2079$\pm$0.0941	&	51.7697	&	1.0664$\pm$0.0018 \\ \hline
	Flixster	& 456.3168$\pm$2.3474	&	404.2442	&	1.1288$\pm$0.0058	&	419.9483$\pm$1.6652	&	372.2075	&	1.1283$\pm$0.0045 \\ \hline
\end{tabular}
\end{table*}

Table~\ref{tab:est} reports our estimation of the upper bound of $I_{max}$ on the final snapshot of each network when tracking top-k influential nodes. The results indicate that the upper bound estimated by our algorithm is only a little greater than the real $I_{max}$.

For the $3$ large data sets, we did not run 2,000,000 times MC simulations to obtain the pseudo ground truth since the MC simulations are too costly. Instead, we compare the similarity between the results generated by different instances. Recall that the final snapshots of the 10 instances are the same. If the sets of influential nodes at the final snapshots of the 10 instances are similar, at least our algorithms are stable, that is, insensitive to the order of updates. To measure the similarity between two sets of influential nodes, we adopted the Jaccard similarity.

Fig.~\ref{fig:effectiveness} shows the results where I1, \ldots, I10 represent the results of the first, \ldots, tenth instances, respectively. We also ran the sampling algorithm directly on the final snapshot, that is, we computed the influential nodes directly from the final snapshot using sampling without any updates. The result is denoted by ST. 
The results show that the outcomes from different instances are very similar, and they are similar to the outcome from ST, too. The minimum similarity in all cases is 87\%. 

\begin{figure*}[h]
  \centering
    \subfigure[\small{soc-Pokec (Thres LT)}]{
      \includegraphics[width=.22\textwidth]{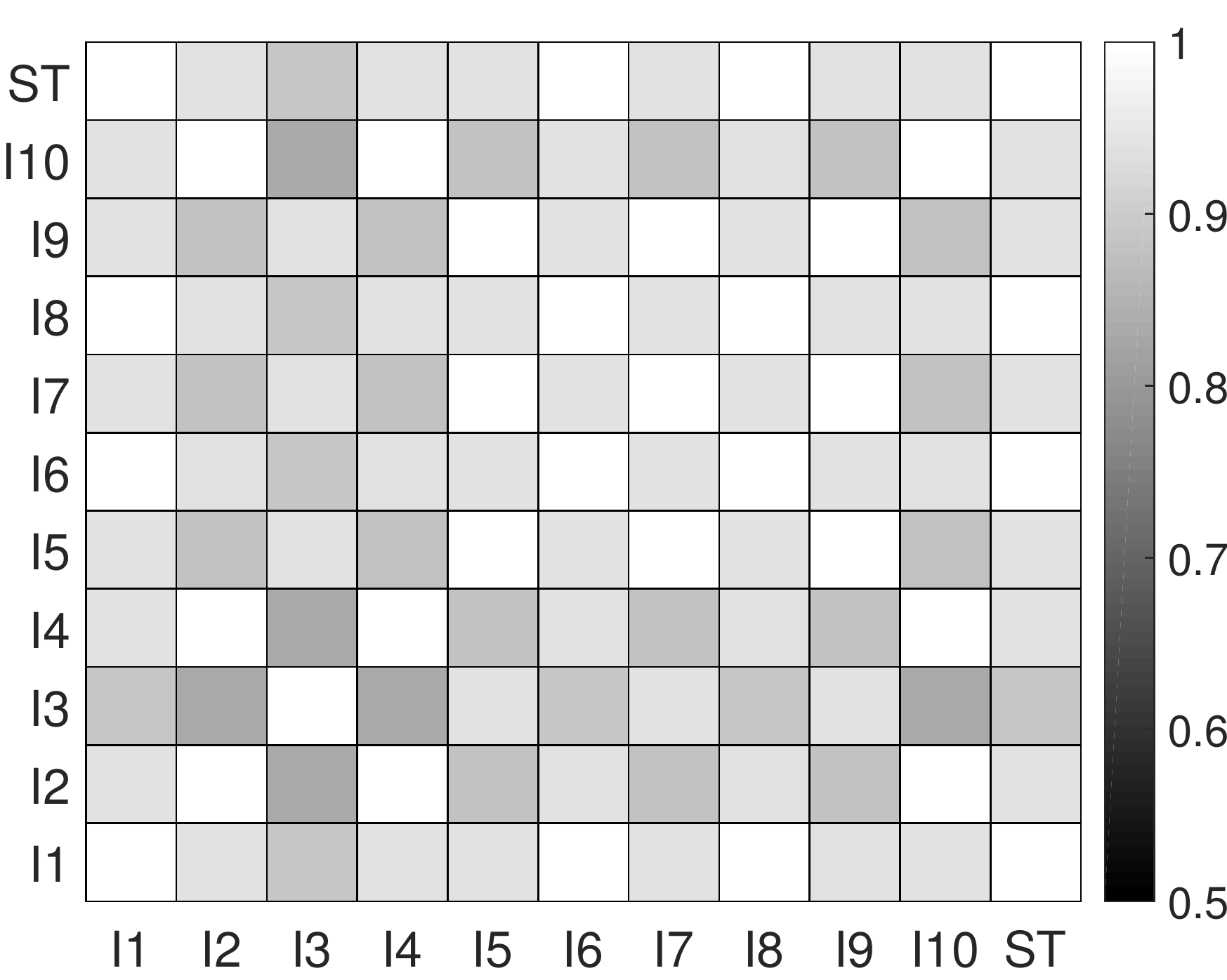}
    }
    \subfigure[\small{soc-Pokec (Top-K LT)}]{
      \includegraphics[width=.22\textwidth]{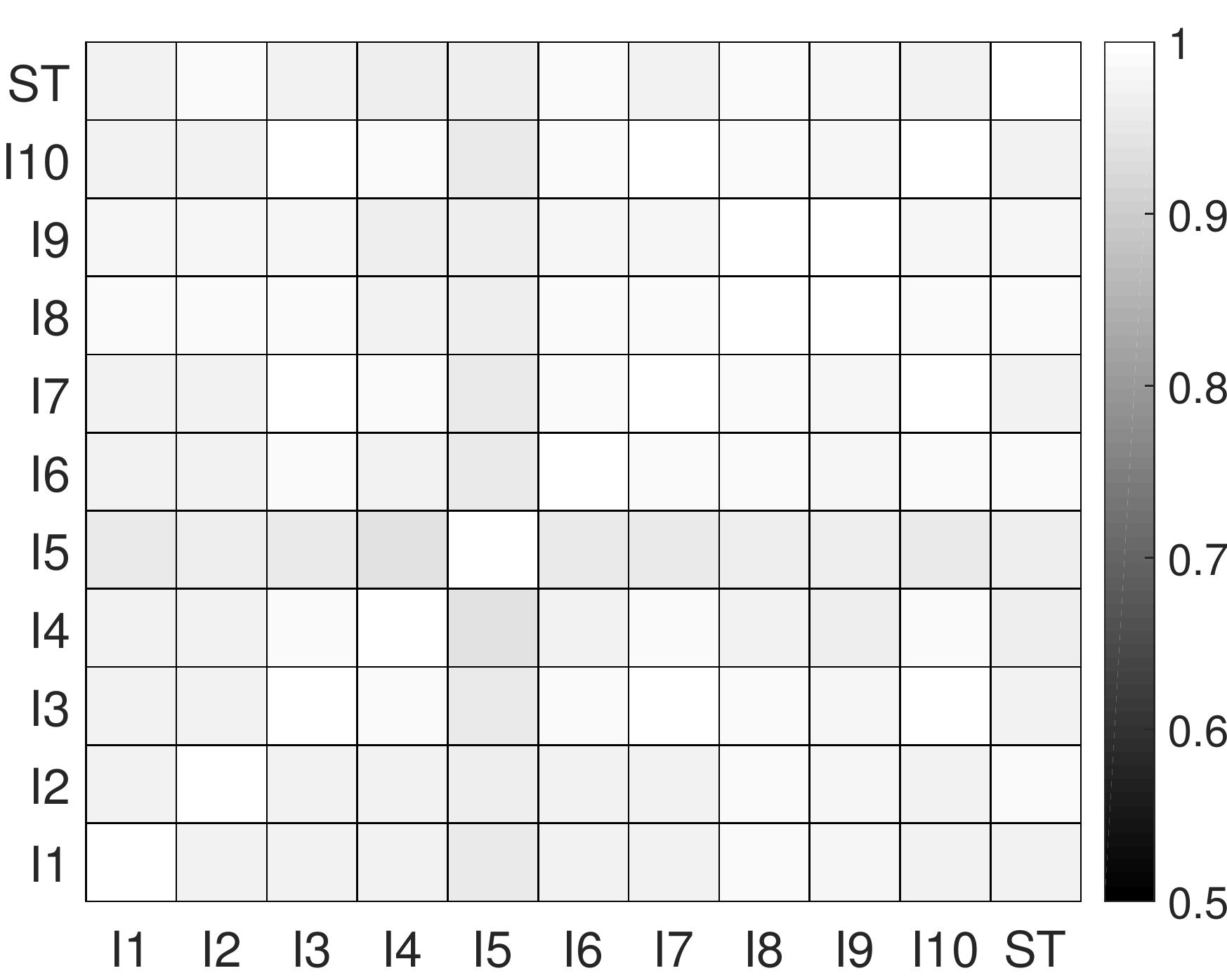}
    }
     \subfigure[\small{soc-Pokec (Thres IC)}]{
      \includegraphics[width=.22\textwidth]{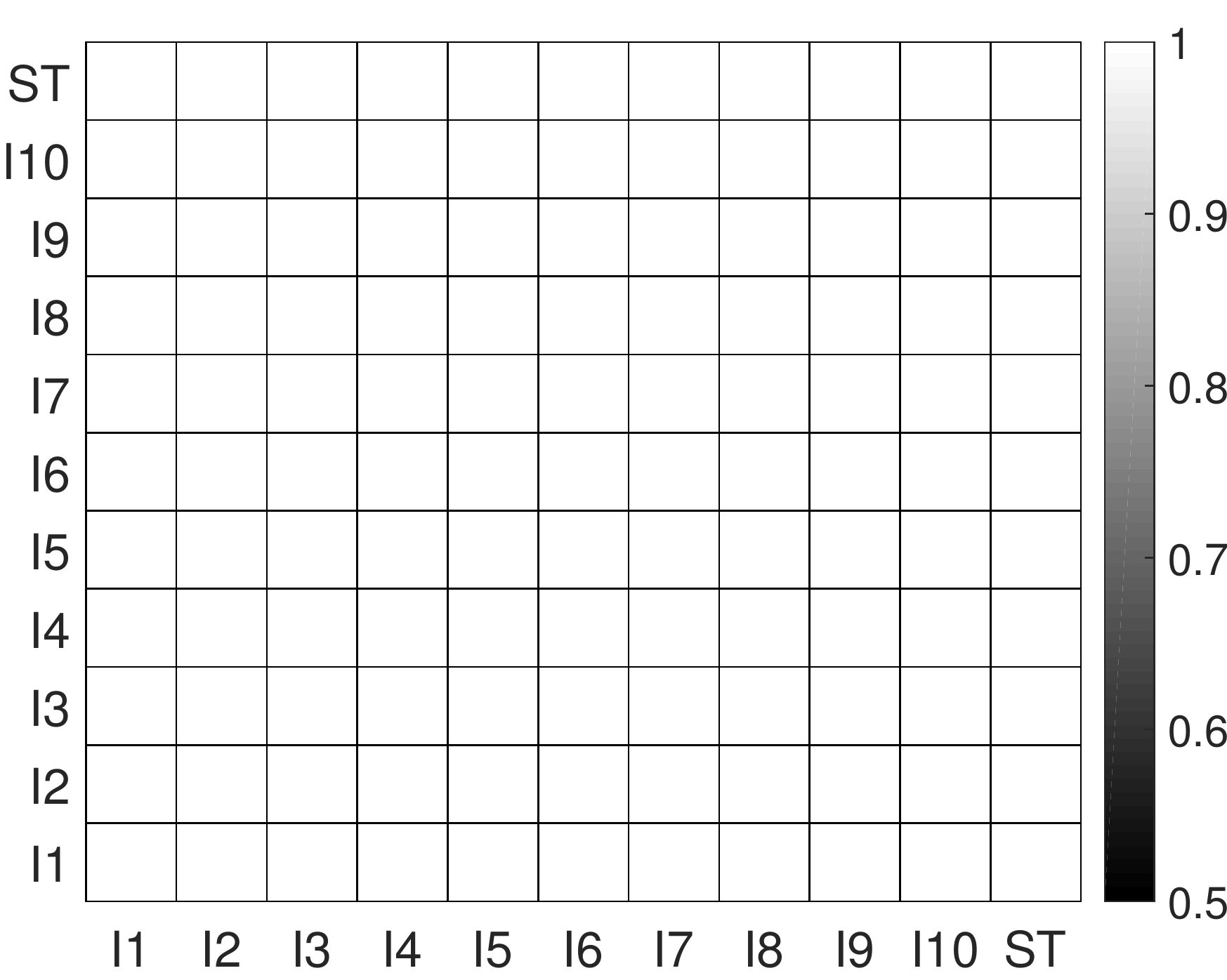}
    }
    \subfigure[\small{soc-Pokec (Top-K IC)}]{
      \includegraphics[width=.22\textwidth]{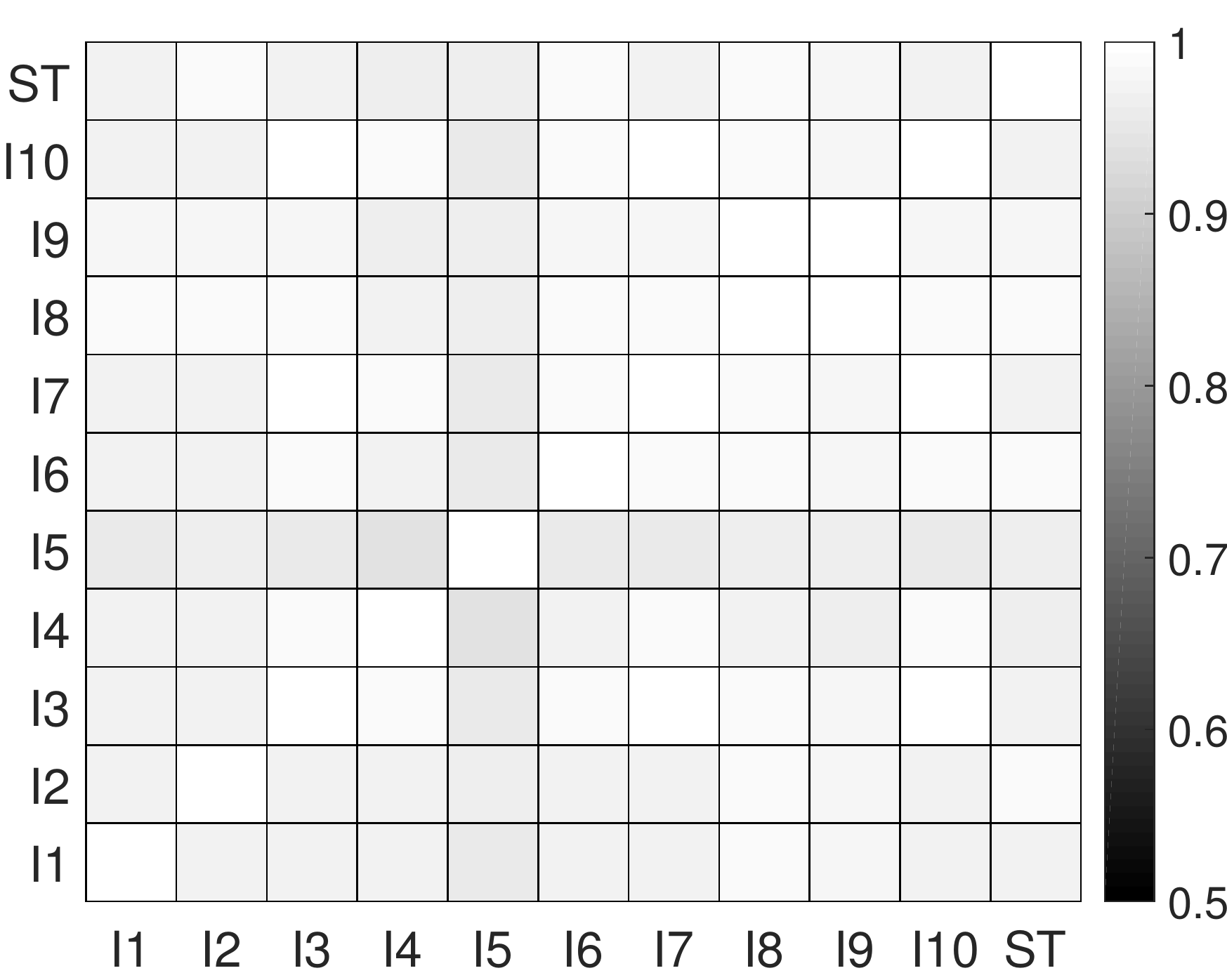}
    }
    \subfigure[\small{flickr-growth (Thres LT)}]{
      \includegraphics[width=.22\textwidth]{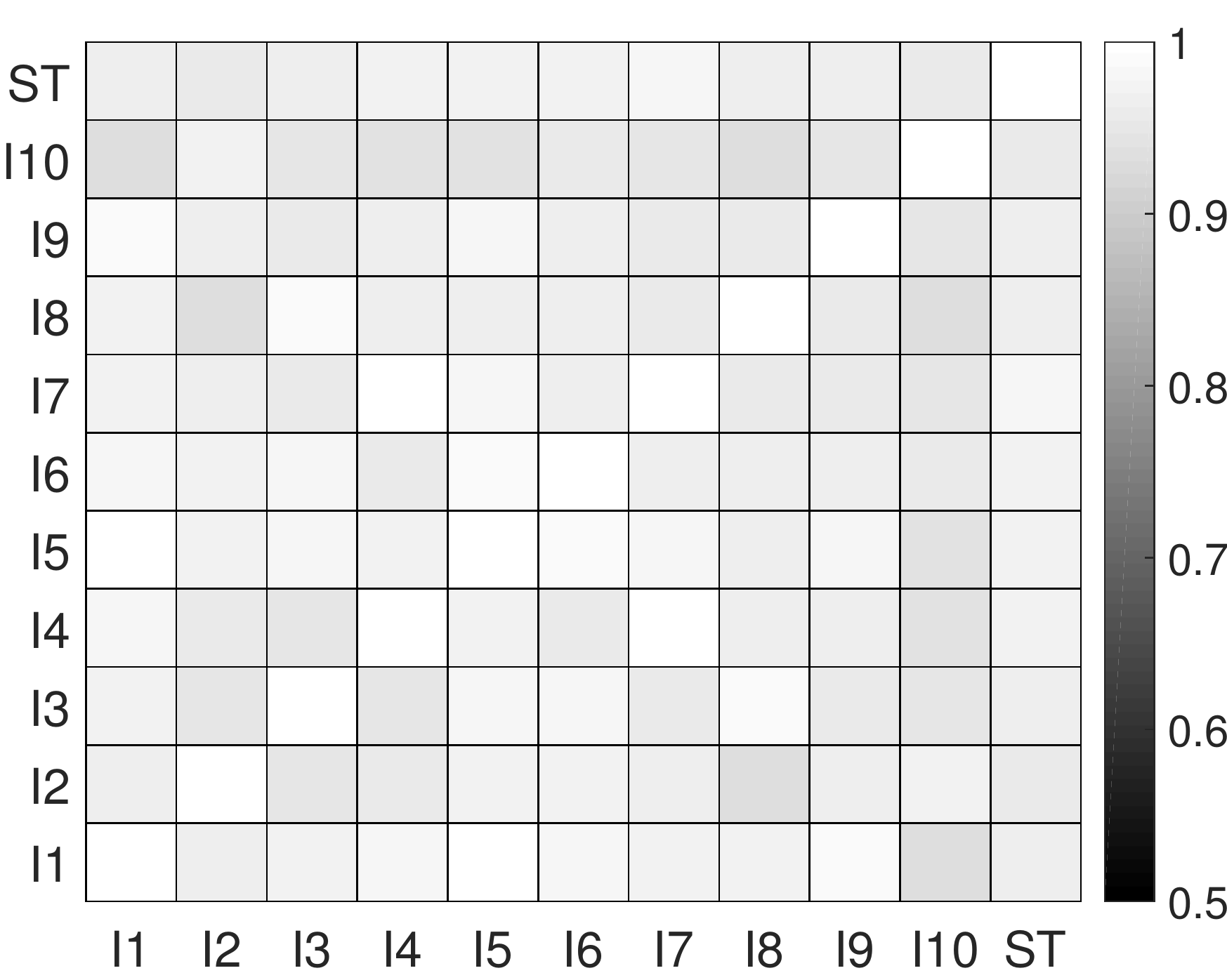}
    }
    \subfigure[\small{flickr-growth (Top-K LT)}]{
      \includegraphics[width=.22\textwidth]{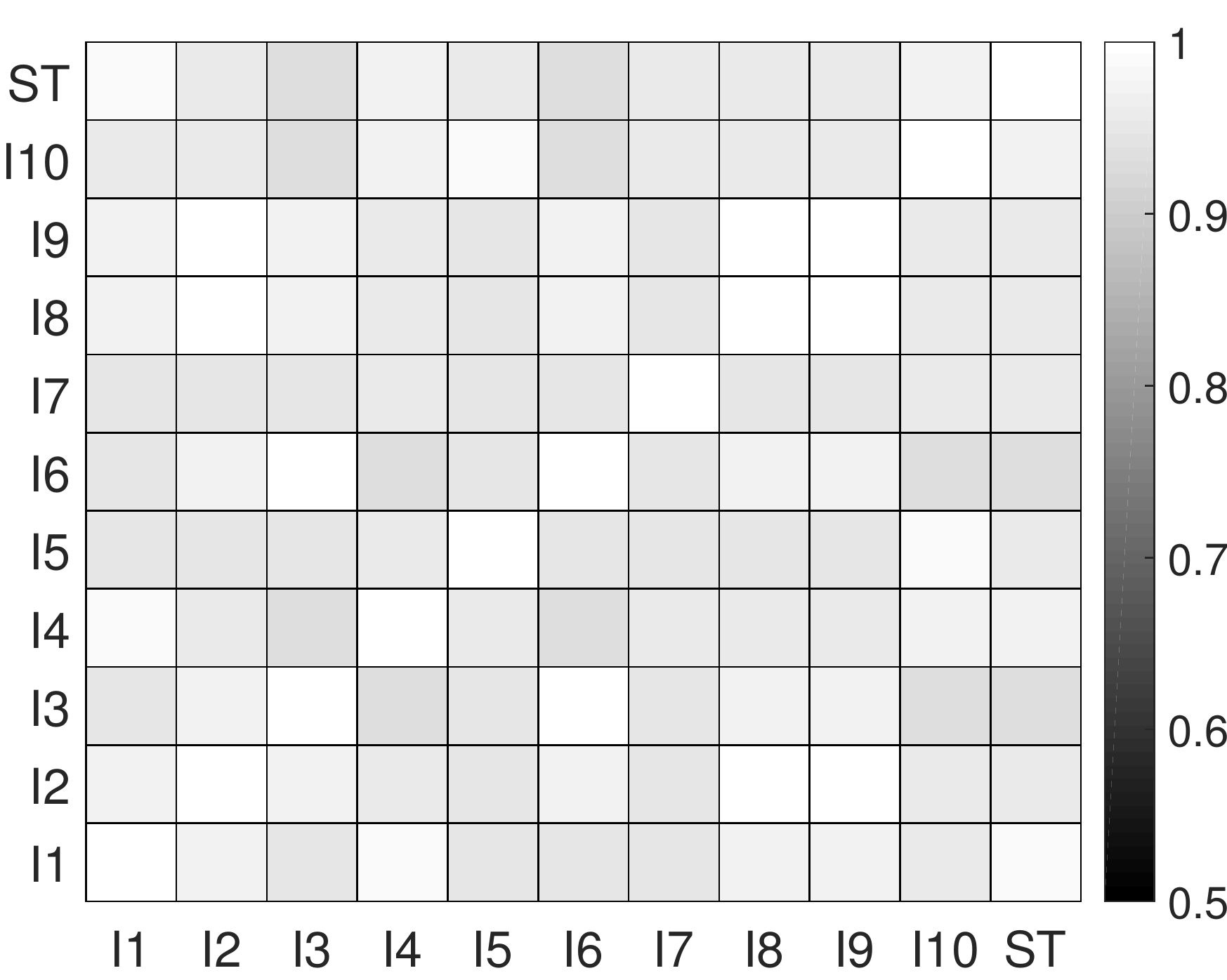}
    }
     \subfigure[\small{flickr-growth (Thres IC)}]{
      \includegraphics[width=.22\textwidth]{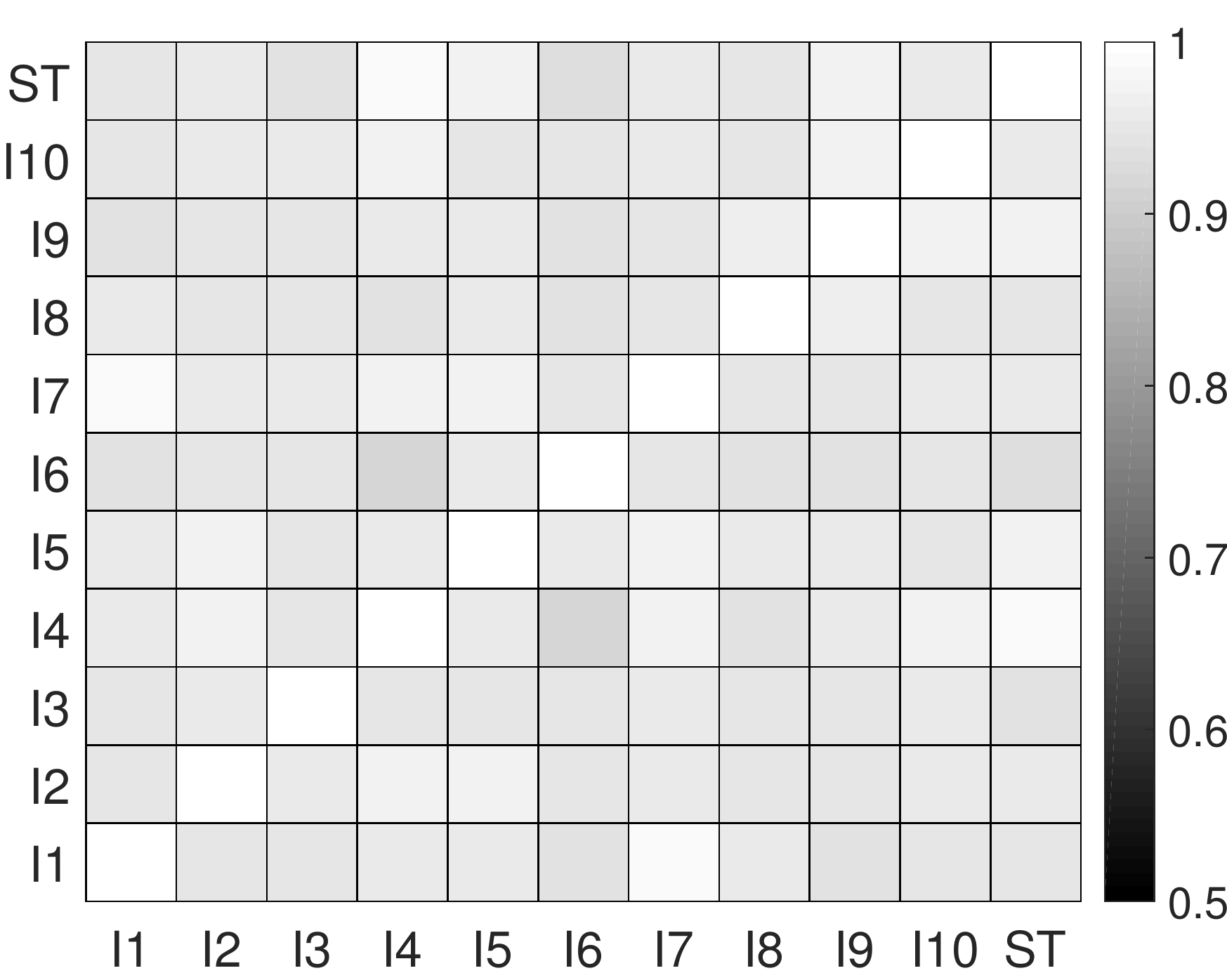}
    }
    \subfigure[\small{flickr-growth (Top-K IC)}]{
      \includegraphics[width=.22\textwidth]{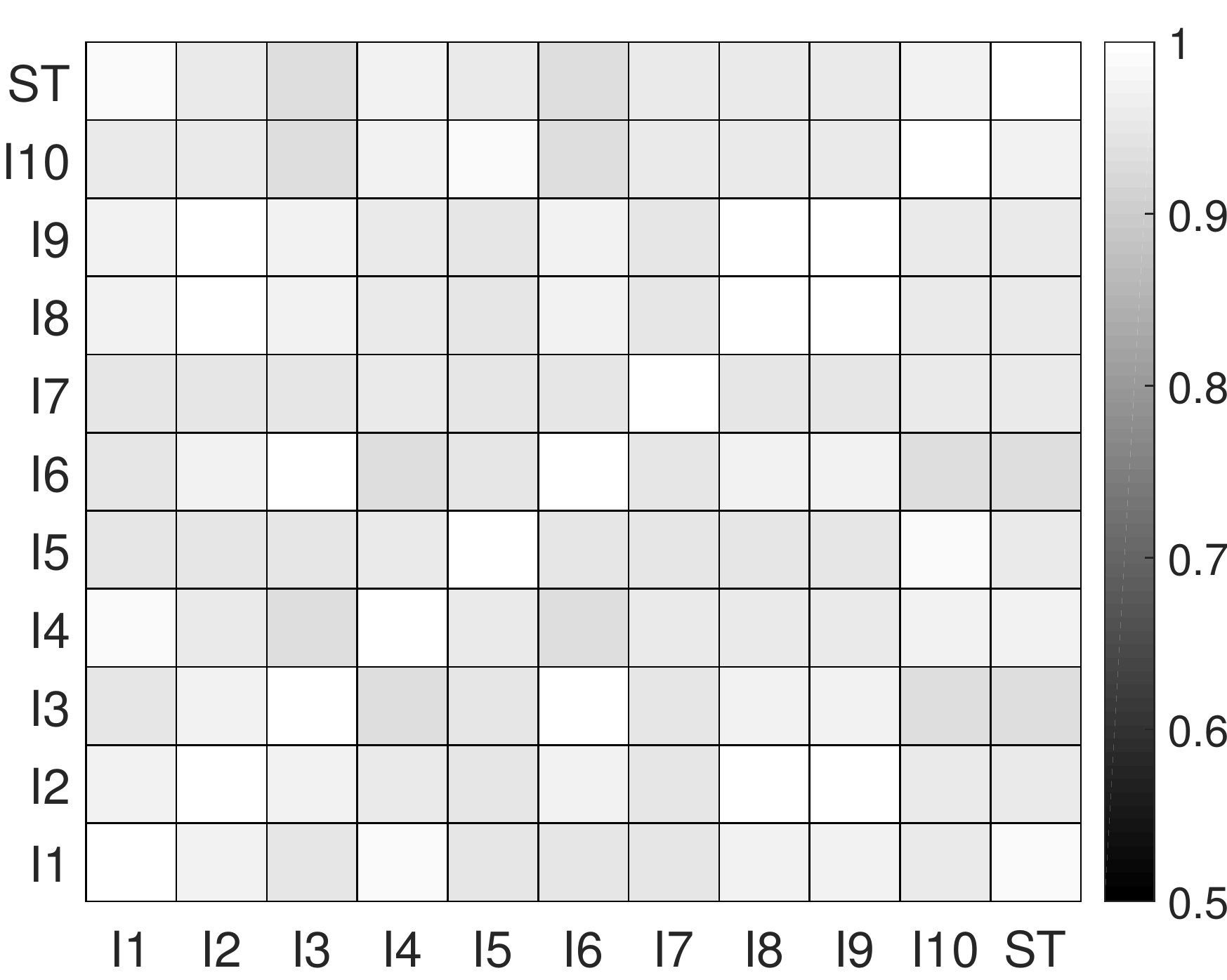}
    }
     \subfigure[\small{Twitter (Thres LT)}]{
      \includegraphics[width=.22\textwidth]{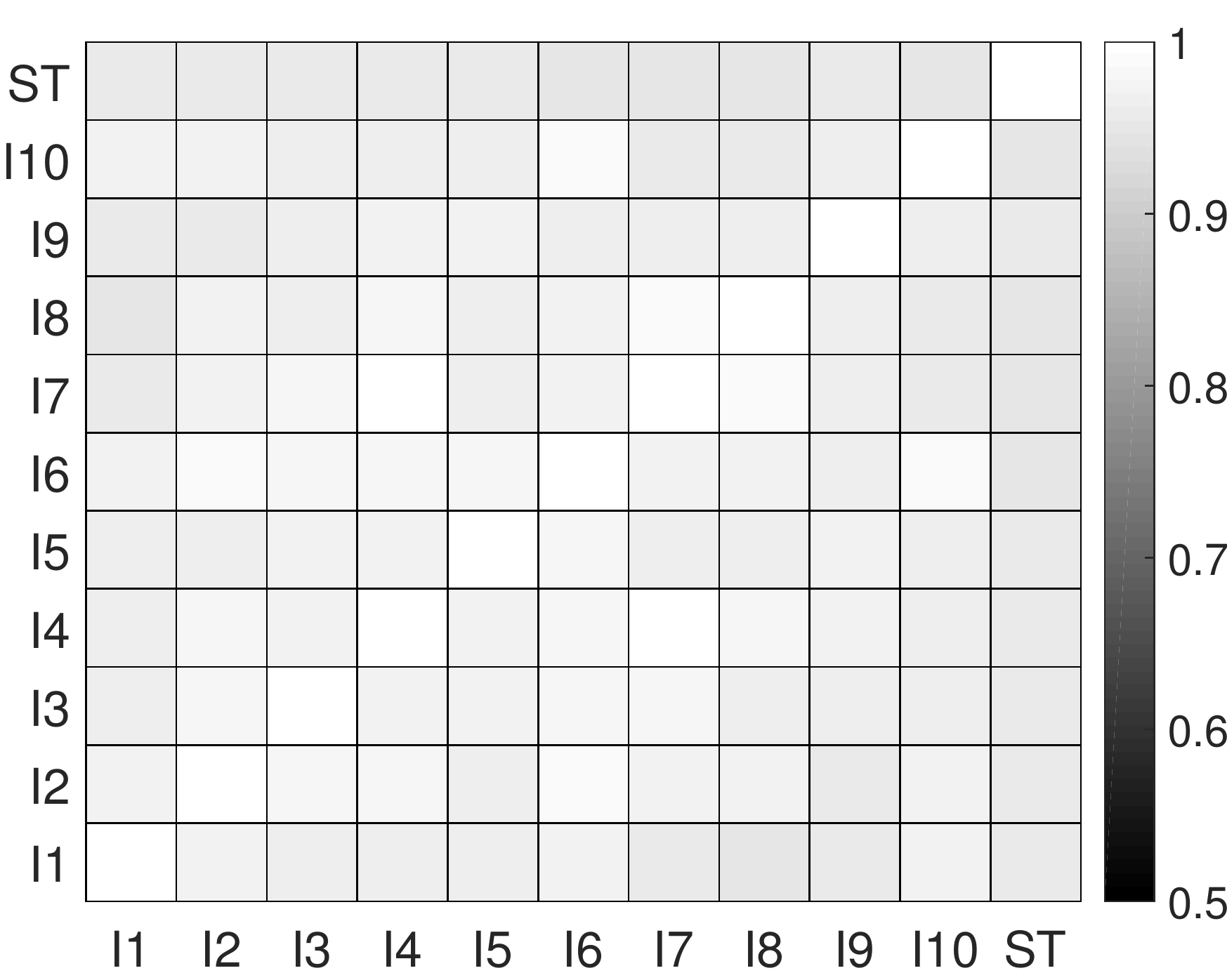}
    }
    \subfigure[\small{Twitter (Top-K LT)}]{
      \includegraphics[width=.22\textwidth]{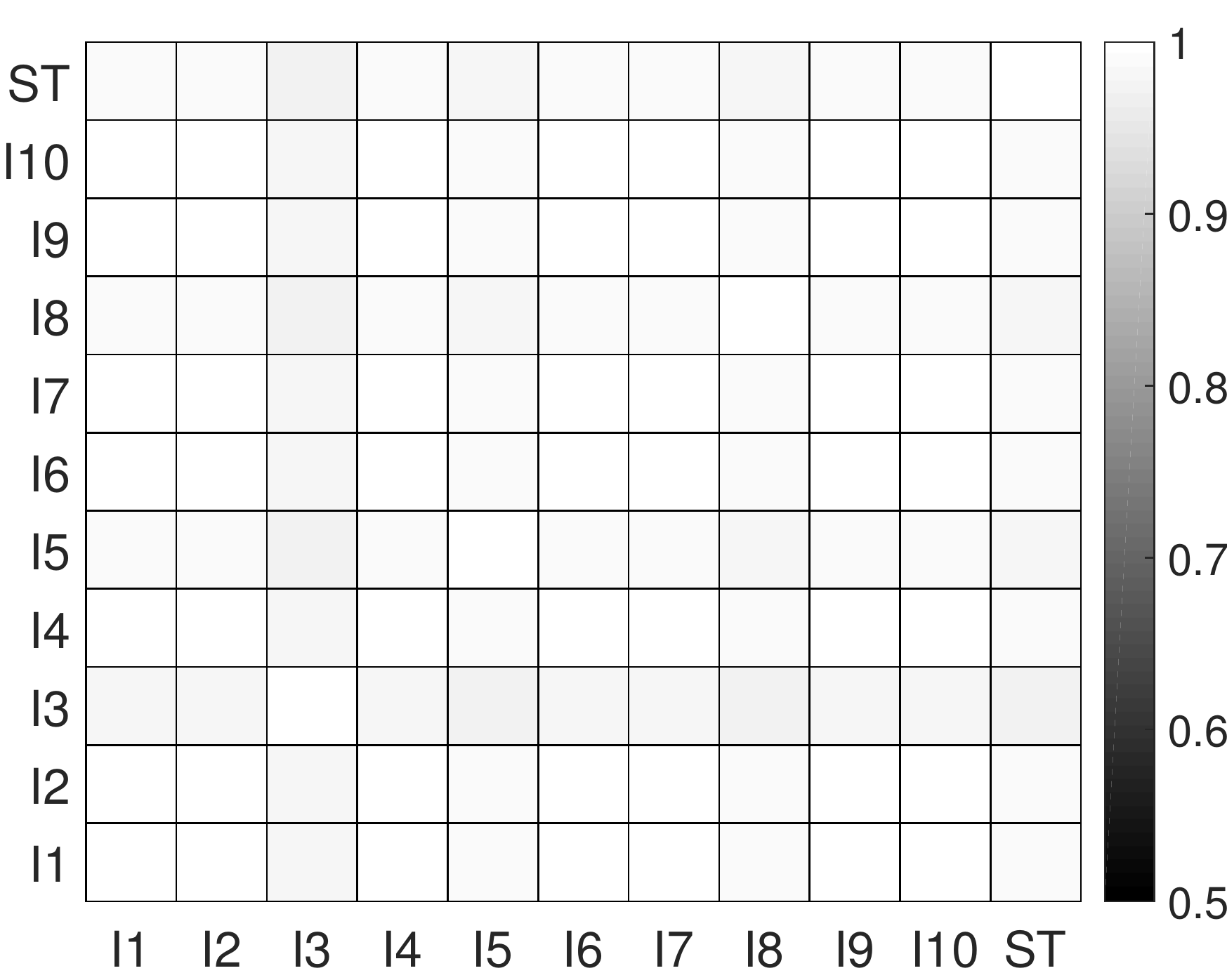}
    }
     \subfigure[\small{Twitter (Thres IC)}]{
      \includegraphics[width=.22\textwidth]{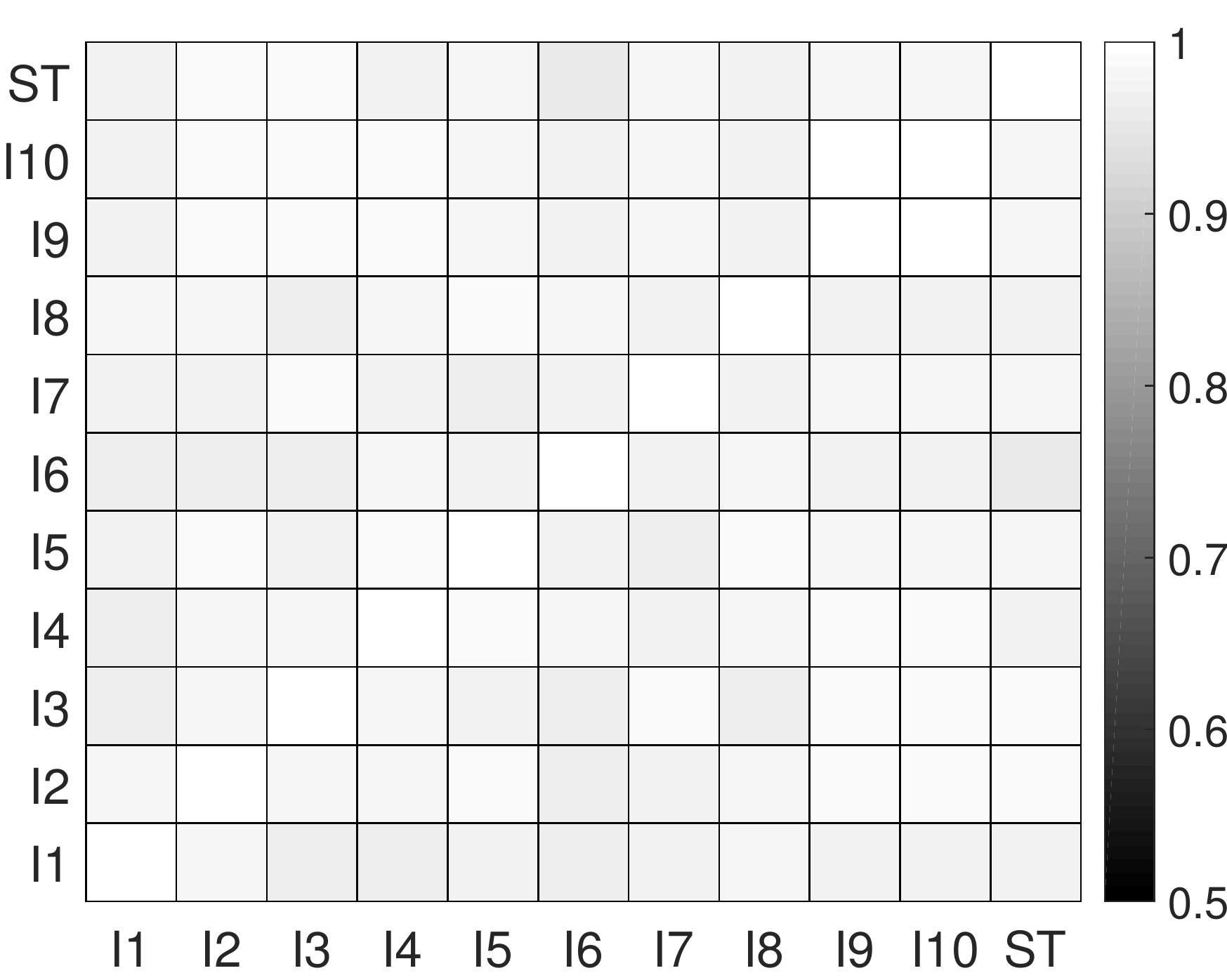}
    }
    \subfigure[\small{Twitter (Top-K IC)}]{
      \includegraphics[width=.22\textwidth]{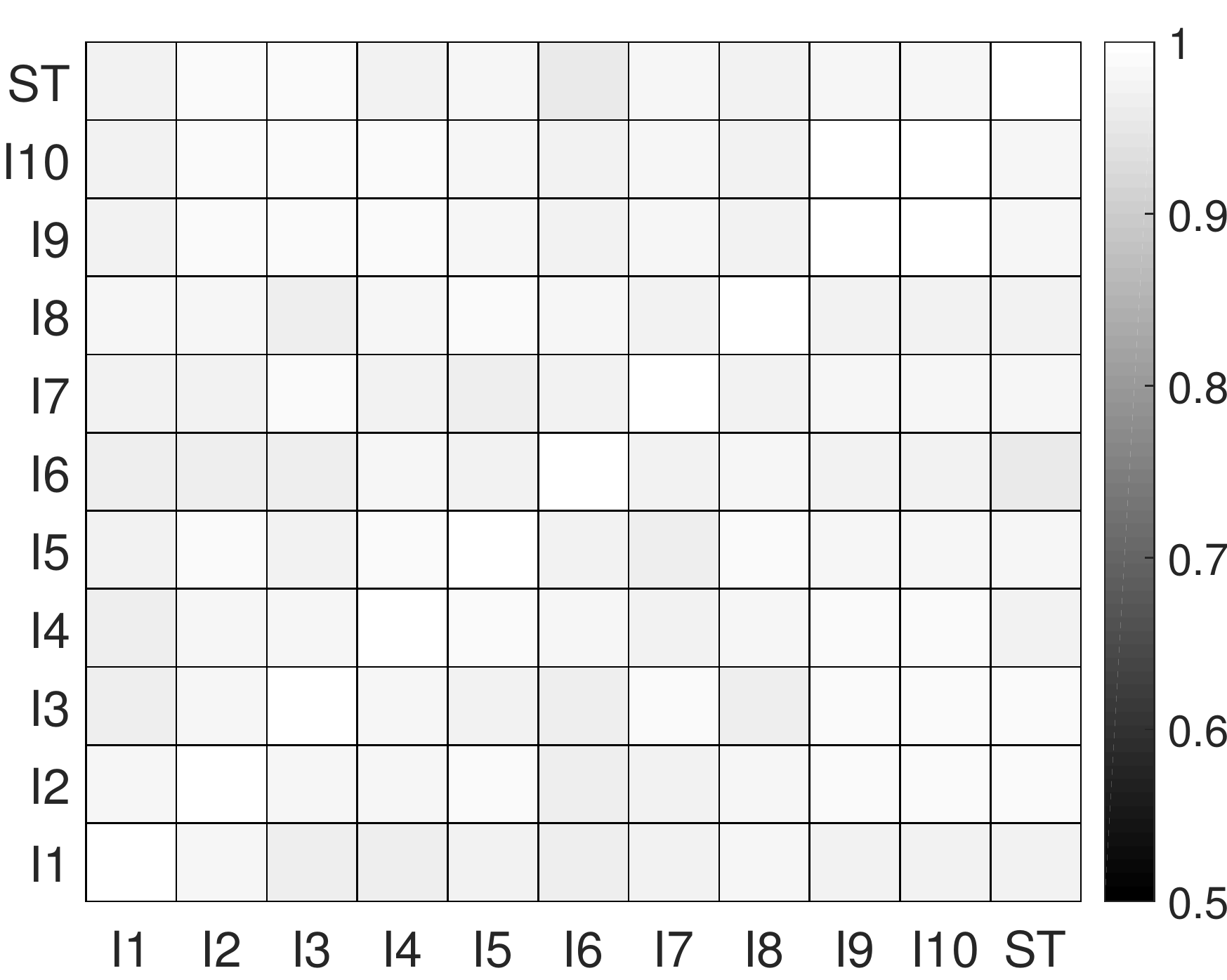}
    }
    \caption{Similarity among results in different instances.}
    \label{fig:effectiveness}
\end{figure*}

\subsubsection{Varying $\epsilon$}
For the $2$ datasets with (pseudo) ground truth, we also set the error parameter $\epsilon$ different values and report results of the Top-K tracking task. Due to limit of space, we omit results of the threshold-based tracking task and results of varying $k$ because they are all similar. In all cases the recall is always 100\%, we report maximum errors of nodes returned by our algorithm in different settings of $\epsilon$ in Fig.~\ref{fig:eps}. The maximum error is constantly smaller than the theoretical value, and it increases roughly linearly as $\epsilon$ increases. Moreover, the theoretical value increases faster than the maximum error.

\begin{figure}[h]
  \centering
  \caption{$\epsilon$ v.s. Maximum Error.}\label{fig:eps}
    \subfigure[wiki-Vote]{
      \includegraphics[width=.22\textwidth]{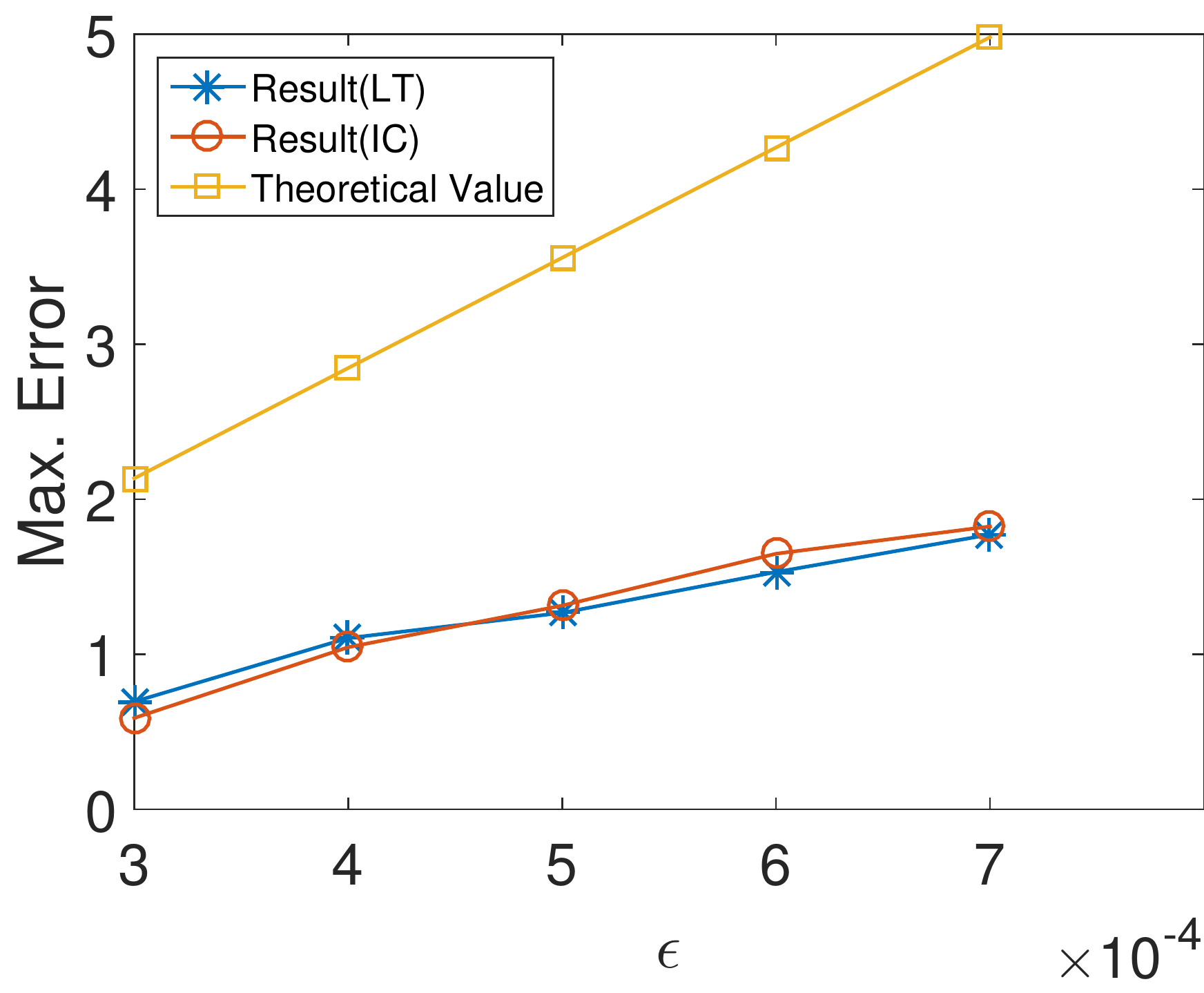}
    }
    \subfigure[Flixster]{
      \includegraphics[width=.22\textwidth]{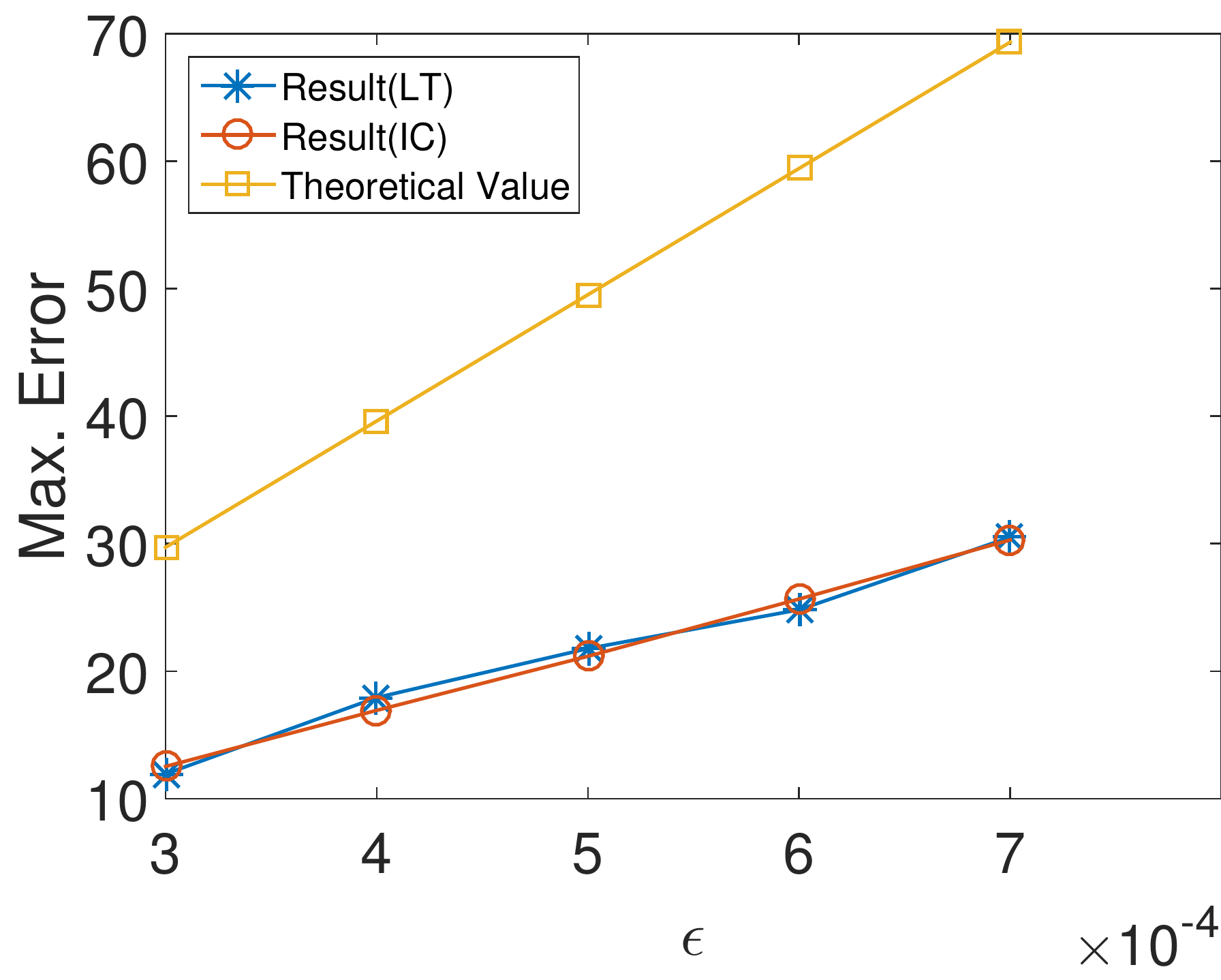}
    }
\end{figure} 

\subsubsection{Comparing with Simple Heuristics}

We also compare our algorithms with two simple heuristics, degree and PageRank, which simply return top ranked nodes by degree or PageRank values as influential nodes. The reason we choose these two heuristics is that they both can be efficiently implemented in the setting of dynamic networks. Note that these two heuristics cannot solve the threshold based influential nodes mining problem because they do not know the influence spread of each node.

\begin{figure}[h]
  \centering
    \subfigure[\small{wiki-Vote (LT)}]{
      \includegraphics[width=.22\textwidth]{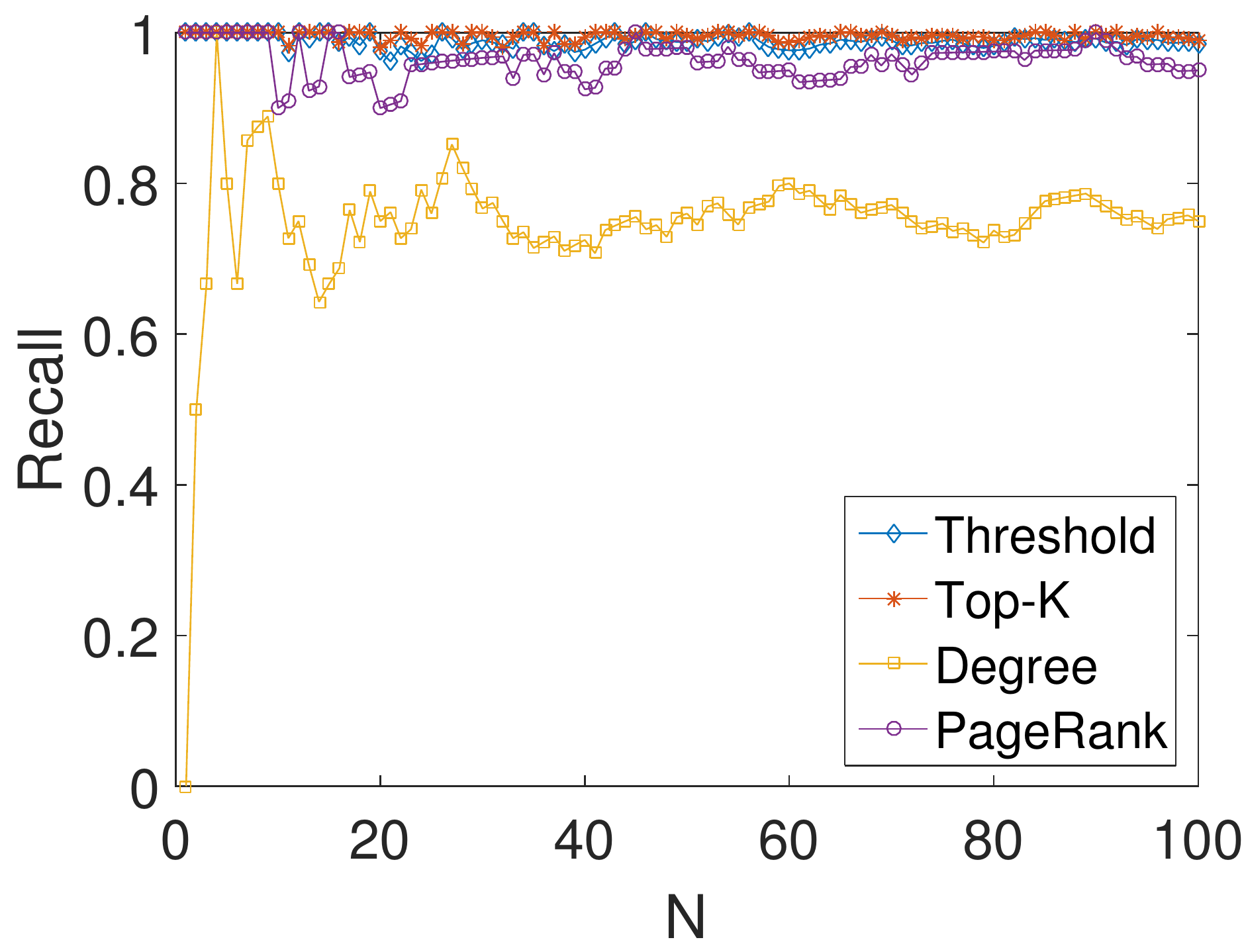}
    }
    \subfigure[\small{wiki-Vote (IC)}]{
      \includegraphics[width=.22\textwidth]{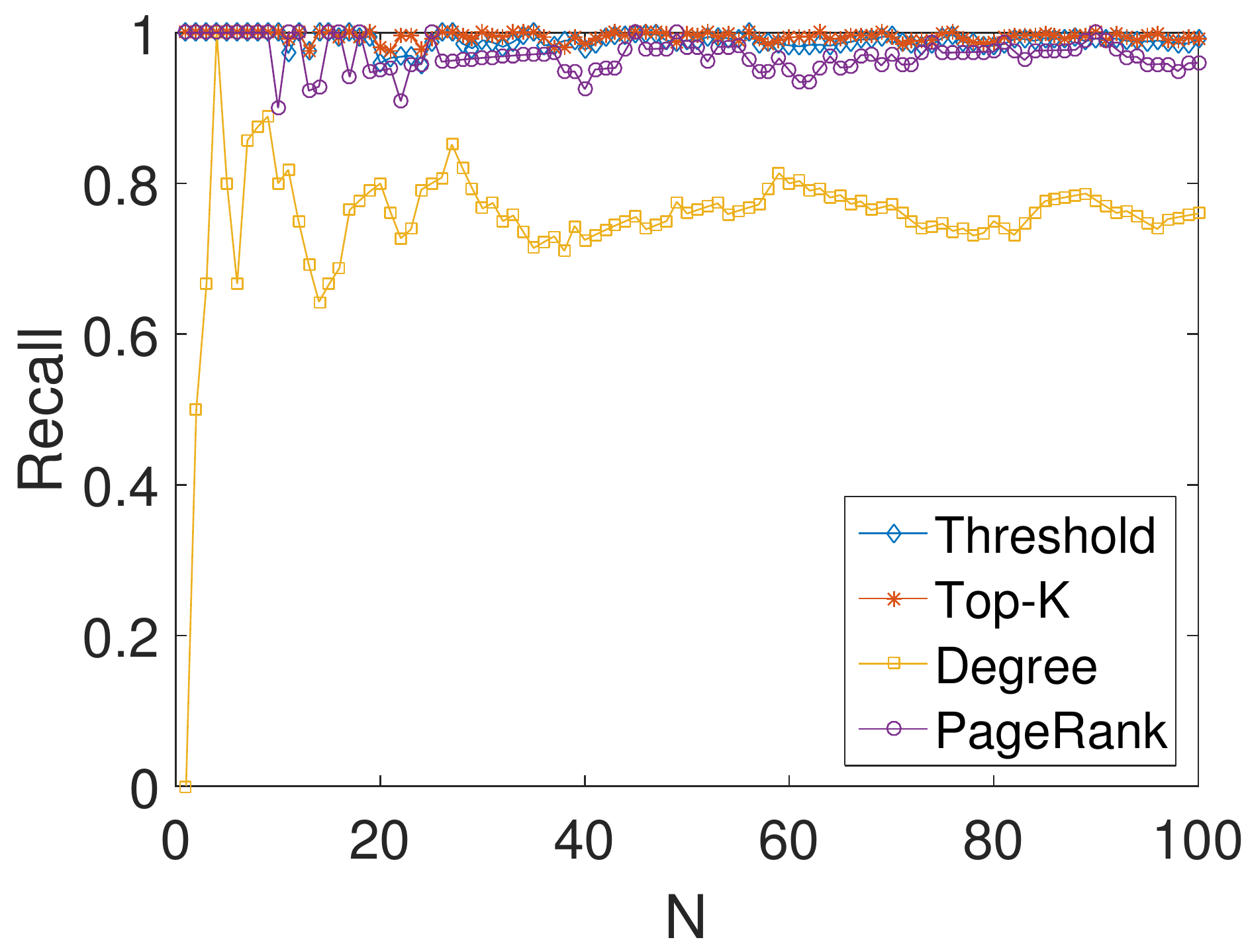}
    }
    
    \subfigure[\small{Flixster (LT)}]{
      \includegraphics[width=.22\textwidth]{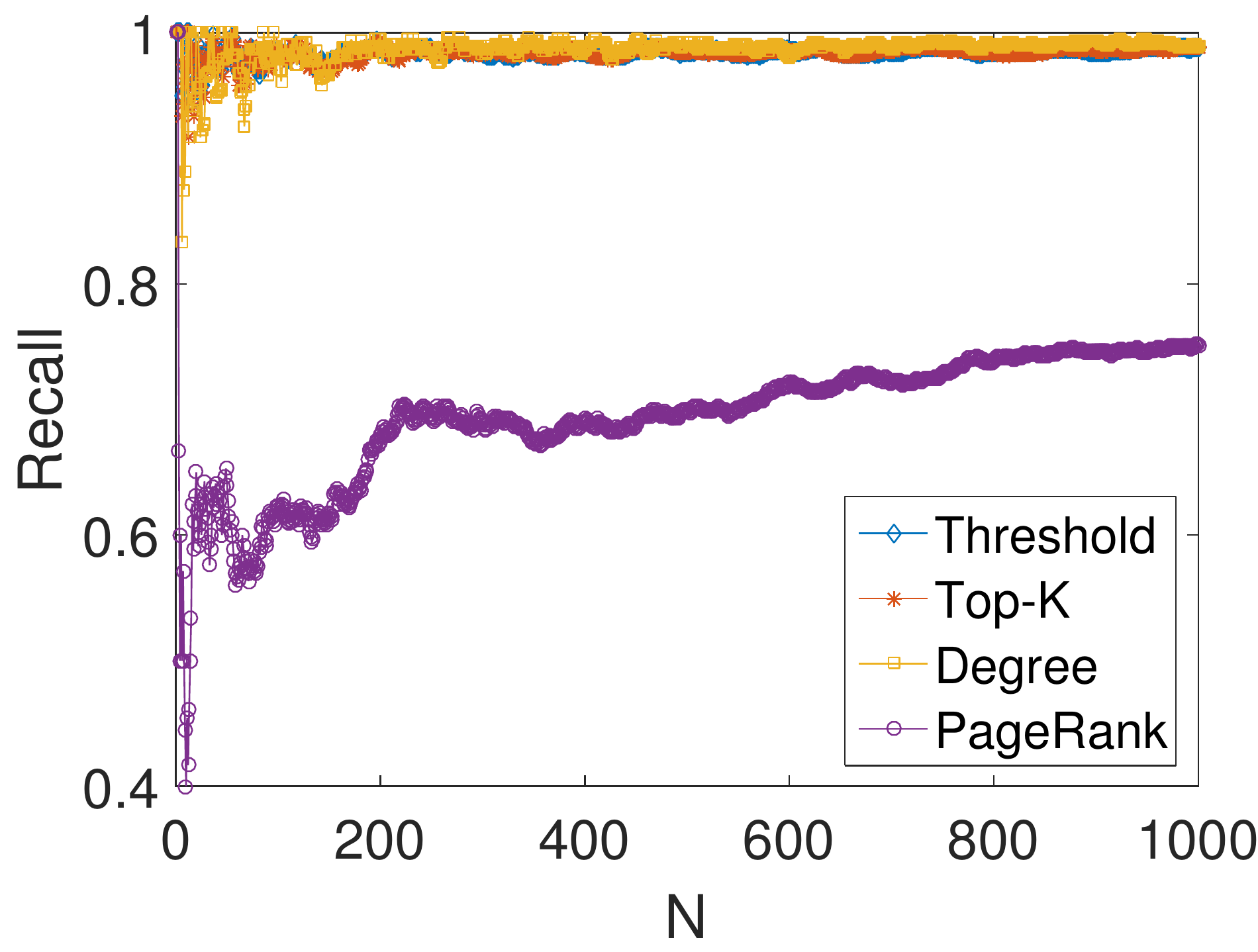}
    }
    \subfigure[\small{Flixster (IC)}]{
      \includegraphics[width=.22\textwidth]{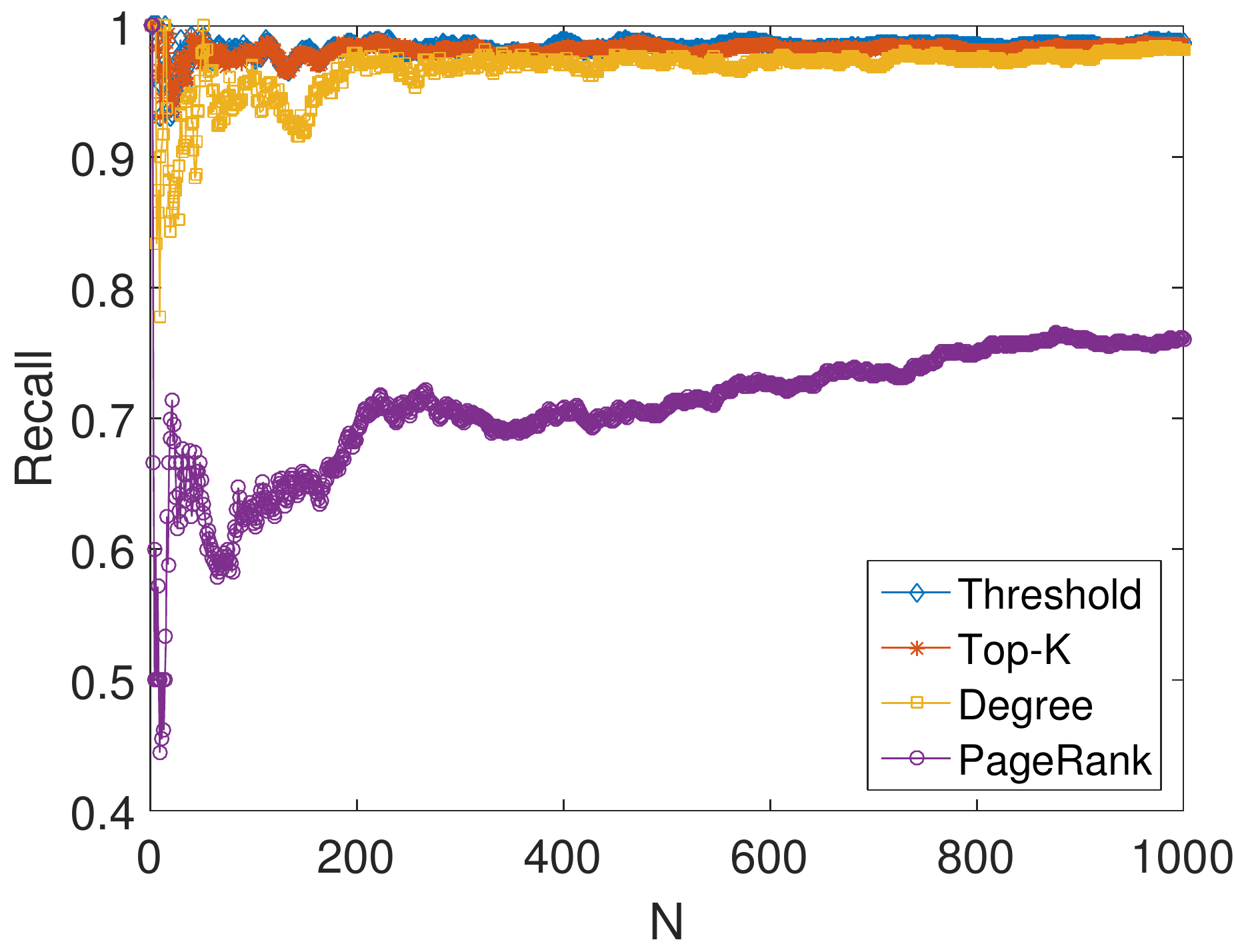}
    }
    \caption{$Recall@N$}
    \label{fig:recall}
\end{figure} 

To compare our algorithms with degree and PageRank heuristics, we report the recall of the top ranked nodes obtained by each method on wiki-Vote and Flixster data sets in Fig~\ref{fig:recall}. Nodes ranking by 2,000,000 times Monte Carlo simulations is regarded as the (pseudo) ground truth. The measure $Recall@N$ is calculated by $\frac{TP_N}{N}$, where $TP_N$ is the number of nodes ranked top-$N$ by both our algorithms and the ground truth. The results show that the rankings of the top nodes generated by our algorithms constantly have very good quality, while the two heuristics sometimes perform well but sometimes return really poor rankings. Moreover, performance of a heuristic algorithm is not predictable. 

\subsection{Scalability}

\subsubsection{Running Time with respect to Number of Updates}
We also tested the scalability of our algorithms. Fig.~\ref{fig:efficiency} shows the average running time with respect to the number of updates processed. The average is taken on the running times of the 10 instances. The time spent when the number of updates is 0 reflects the computational cost of running the sampling algorithm on the base network. In table~\ref{tab:time_static}, we also report running time of algorithms on the static final snapshot of each dataset. Clearly, the non-incremental algorithm (rerunning the sampling algorithm from scratch when the network changes) is not competent at all because the running time of processing the base network and the update stream is only several times larger than the running time of processing the whole network, and the number of updates is huge, tens of thousands or even hundreds of millions. This result shows that our incremental algorithm outperforms rerunning the sampling algorithm from scratch by several orders of magnitude. 

\begin{figure}[h]
  \centering
    \subfigure[$I_{max}$]{
      \includegraphics[width=.22\textwidth]{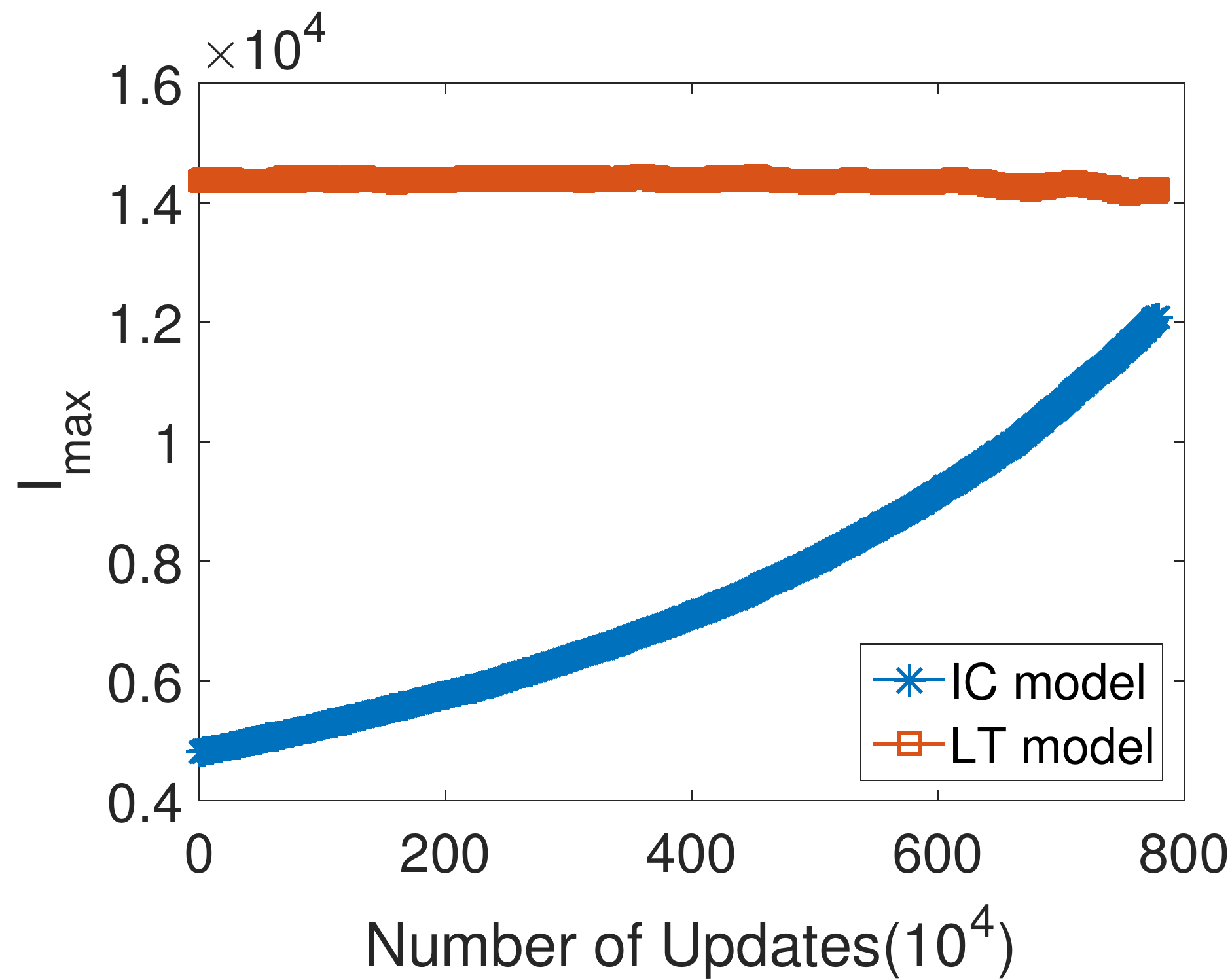}
    }
    \subfigure[Sample size $M$]{
      \includegraphics[width=.22\textwidth]{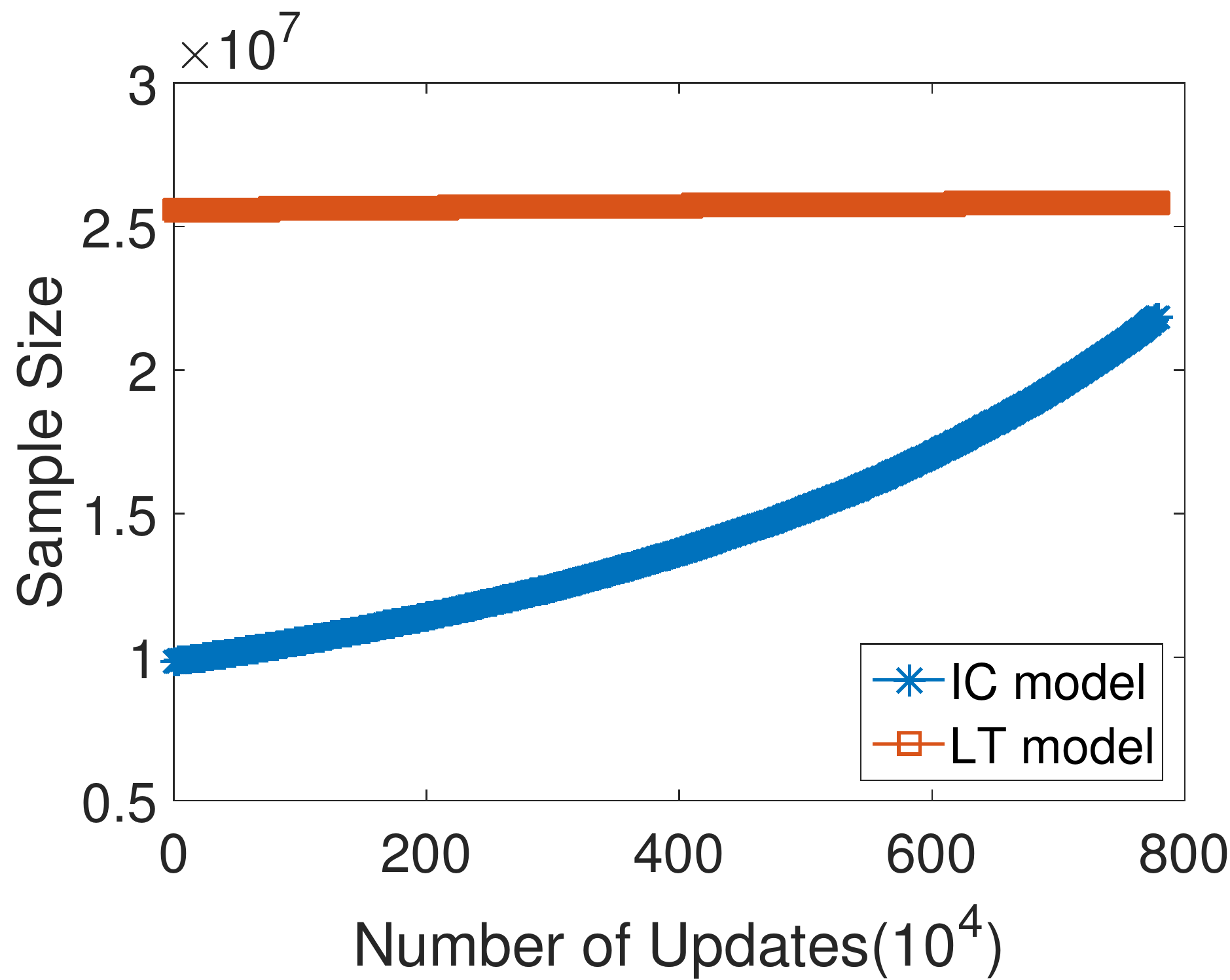}
    }
    \caption{$I_{max}$ and $M$ change over time of flickr-growth data.}
    \label{fig:change}
\end{figure} 

\begin{figure}[h]
  \centering
    \subfigure[\small{Graph Size}]{
      \includegraphics[width=.2\textwidth]{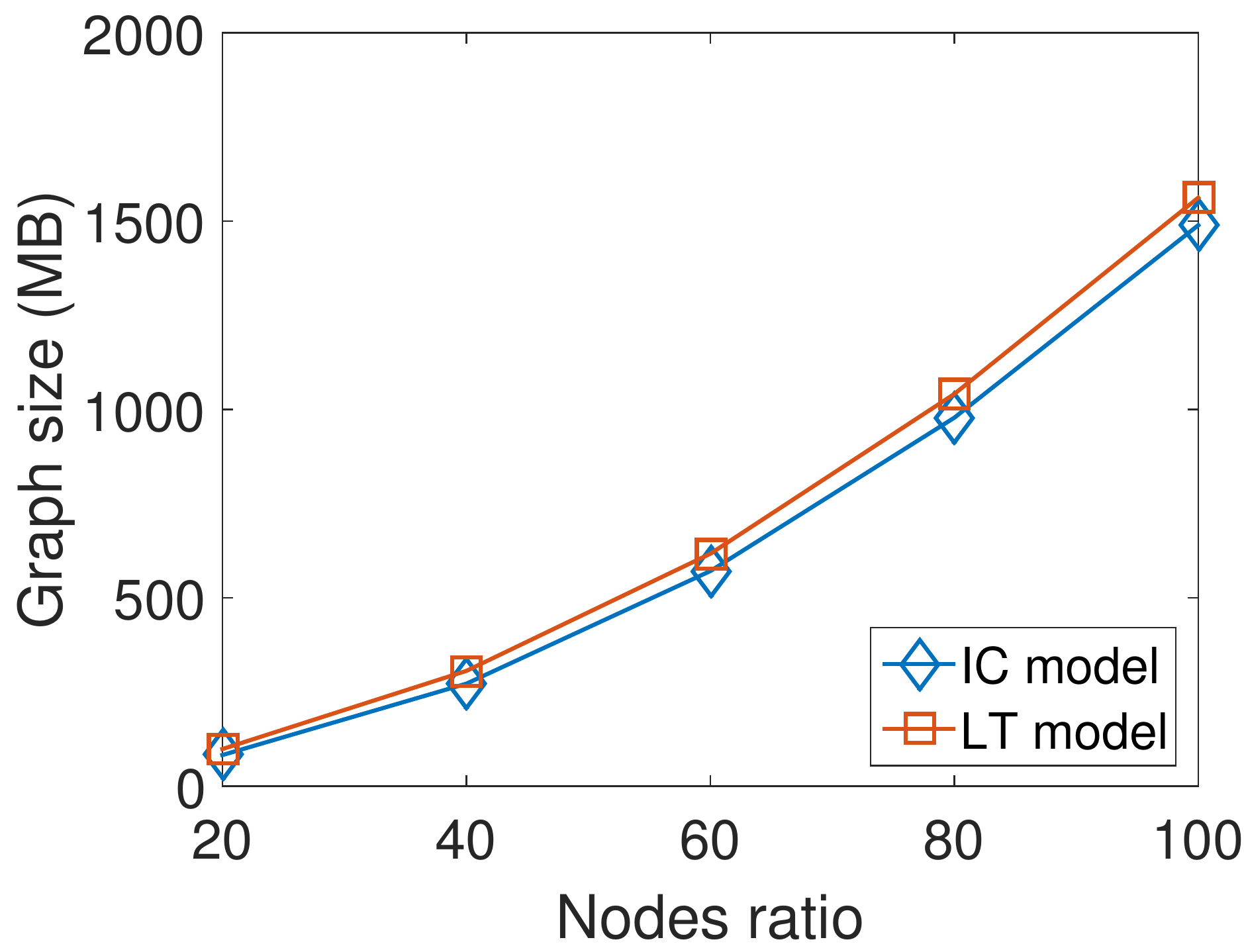}
    }
    \subfigure[\small{RR sets Size}]{
      \includegraphics[width=.2\textwidth]{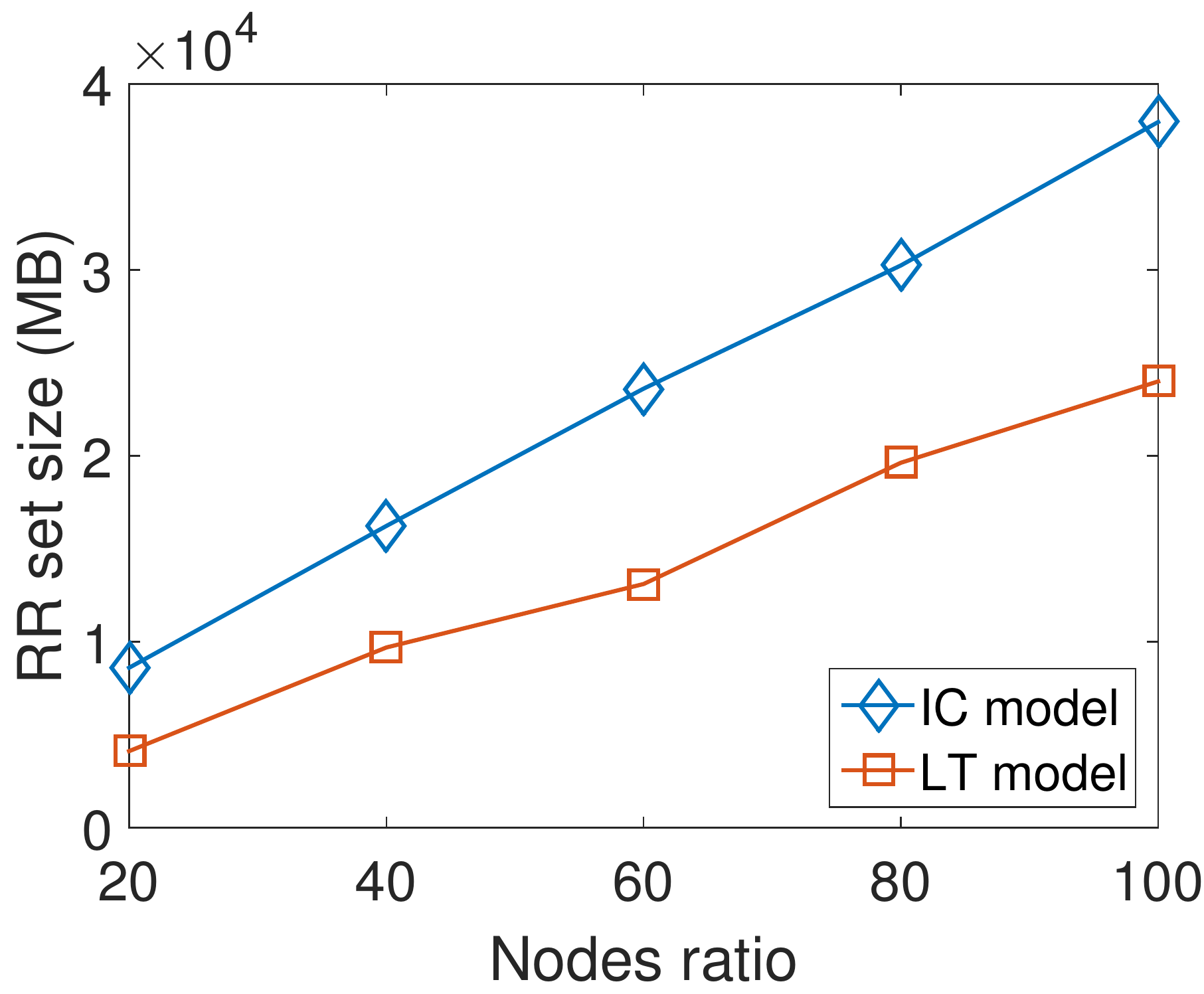}
    }  
    \subfigure[\small{RR w.r.t Graph}]{
      \includegraphics[width=.2\textwidth]{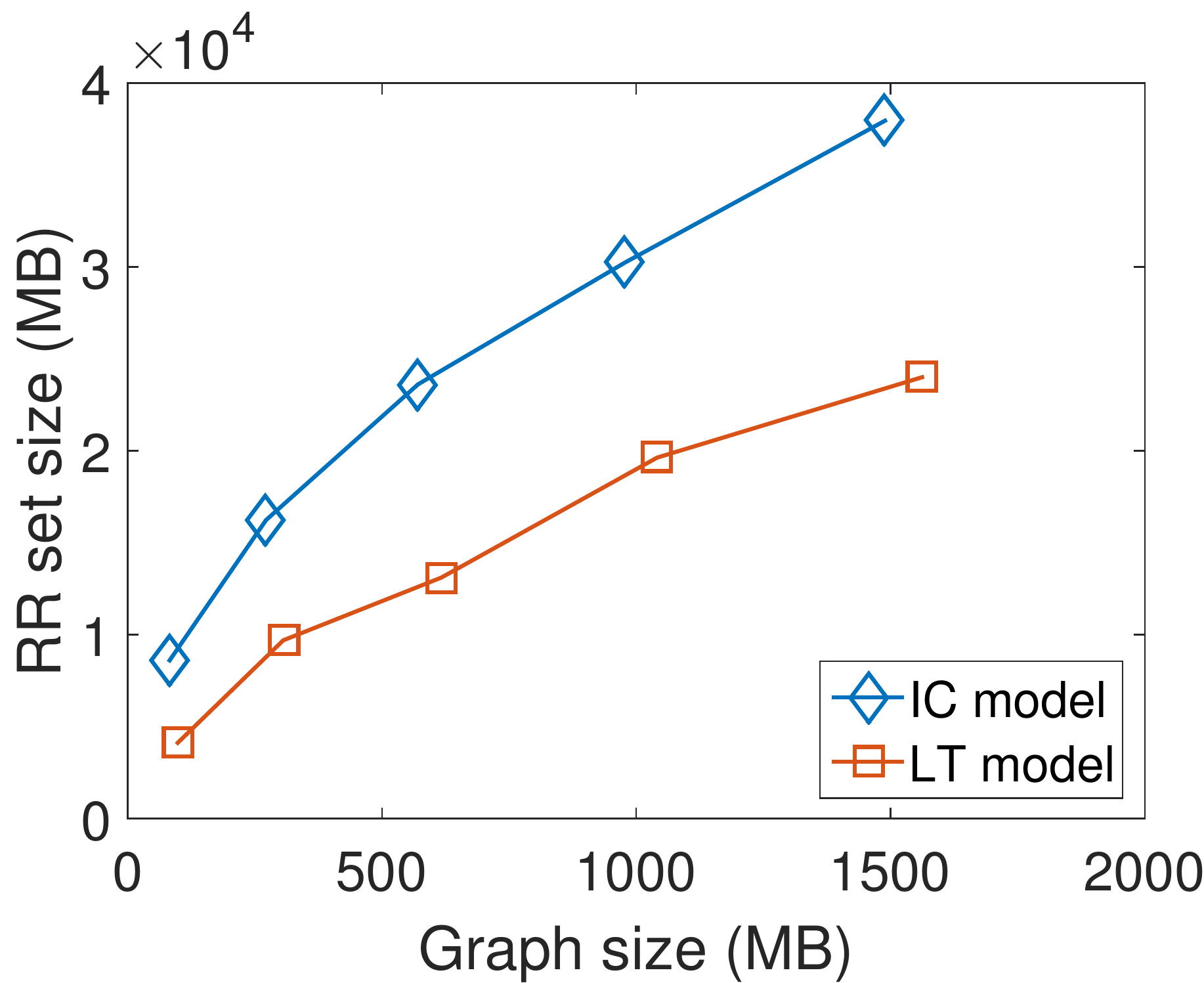}
    }
    \caption{Memory Usage}
    \label{fig:memory}
\end{figure}

\begin{figure}[h]
  \centering
    \subfigure[wiki-Vote (Threshold)]{
      \includegraphics[width=.22\textwidth]{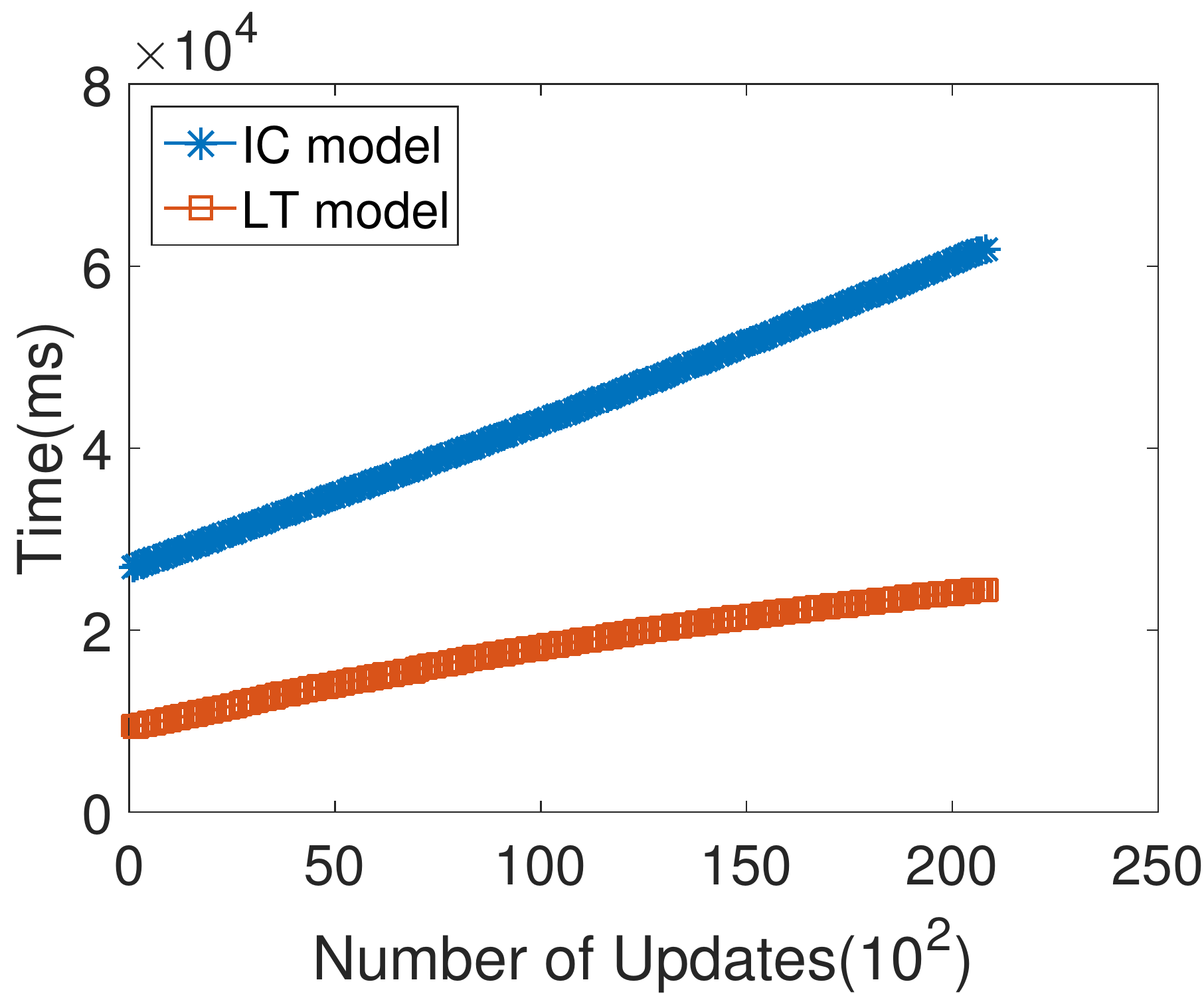}
    }
    \subfigure[wiki-Vote (Top-K)]{
      \includegraphics[width=.22\textwidth]{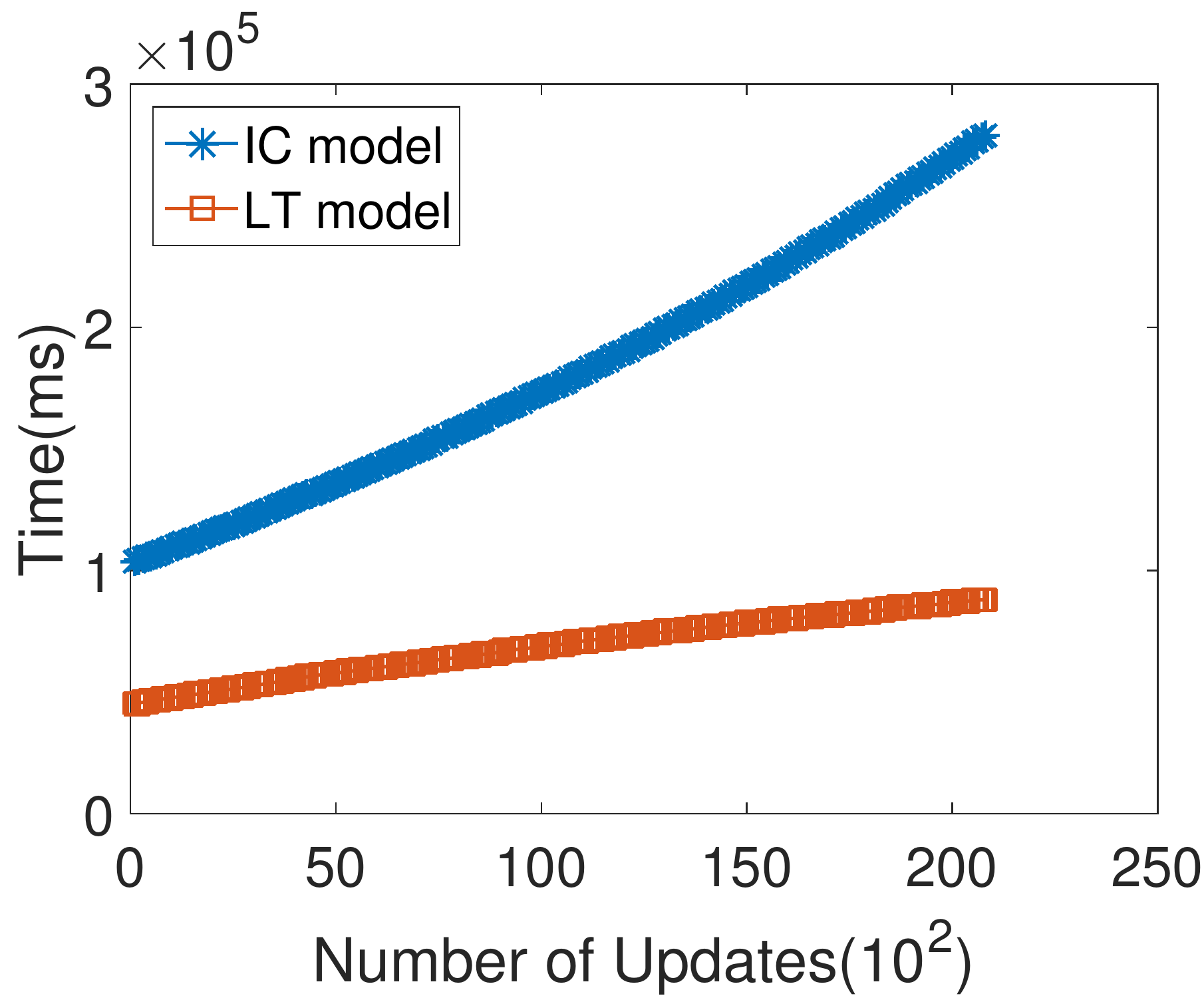}
    }
    \subfigure[Flixster (Threshold)]{
      \includegraphics[width=.22\textwidth]{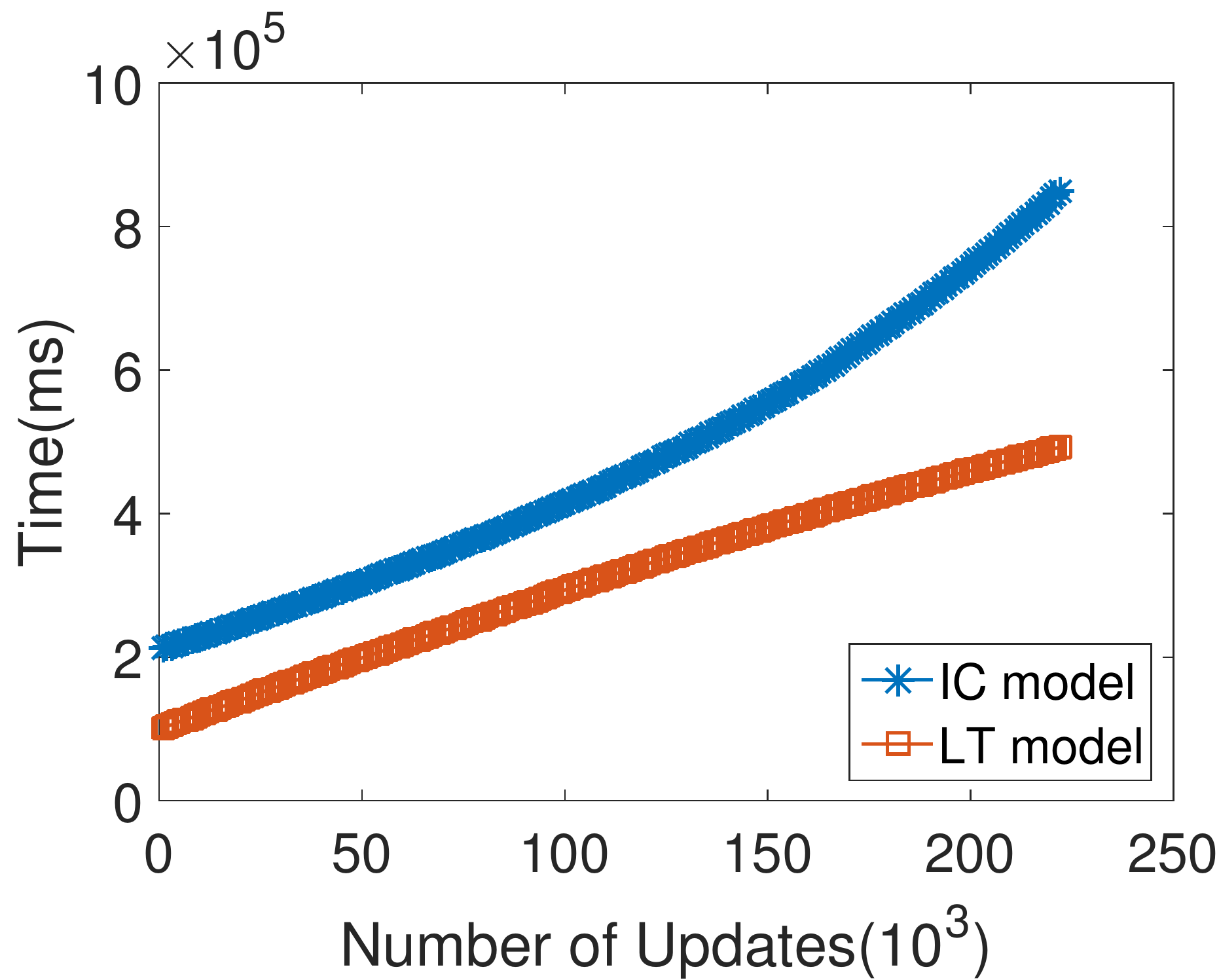}
    }
    \subfigure[Flixster (Top-K)]{
      \includegraphics[width=.22\textwidth]{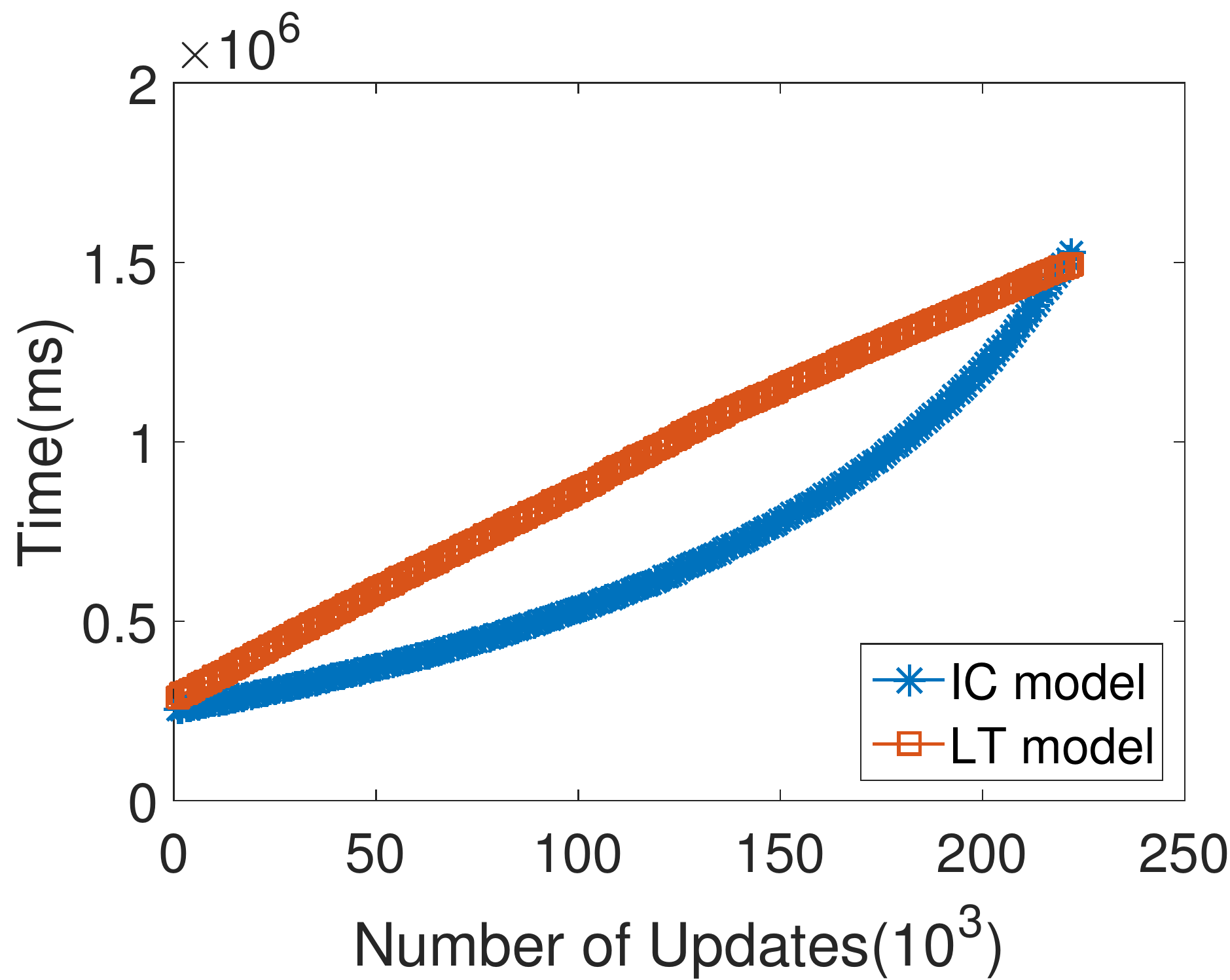}
    }
    \subfigure[soc-Pokec (Threshold)]{
      \includegraphics[width=.22\textwidth]{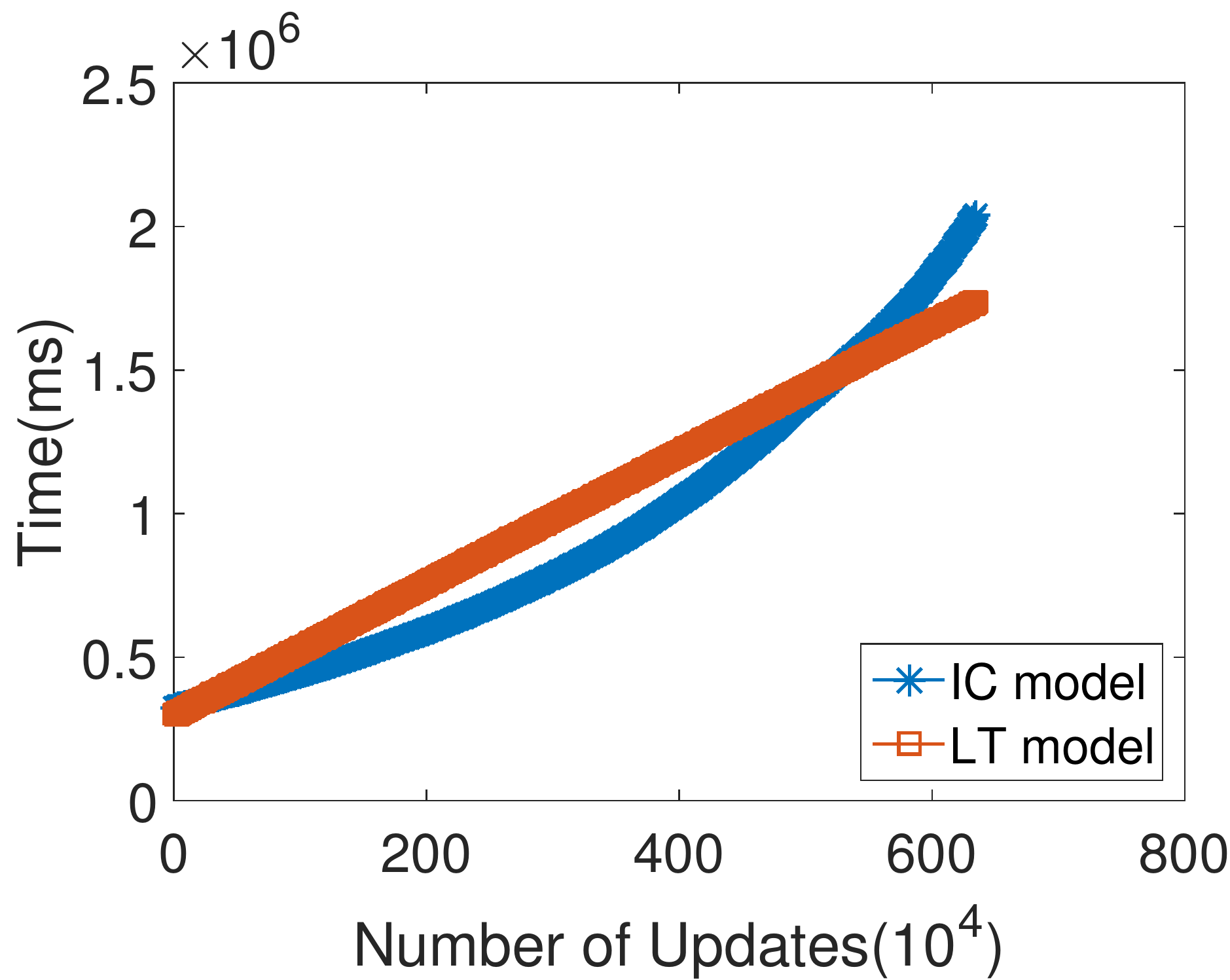}
    }
    \subfigure[soc-Pokec (Top-K)]{
      \includegraphics[width=.22\textwidth]{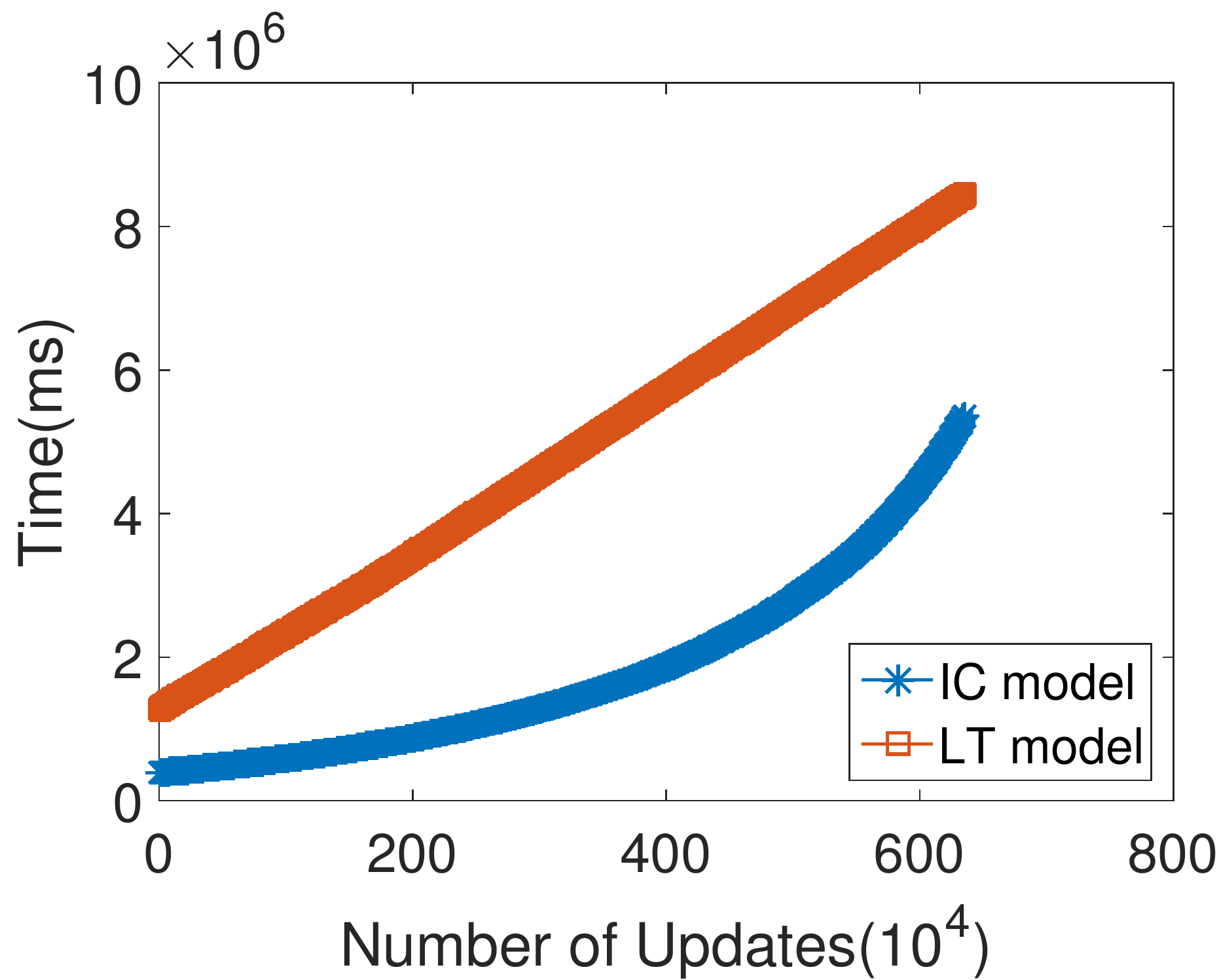}
    }
    \subfigure[flickr-growth (Threshold)]{
      \includegraphics[width=.22\textwidth]{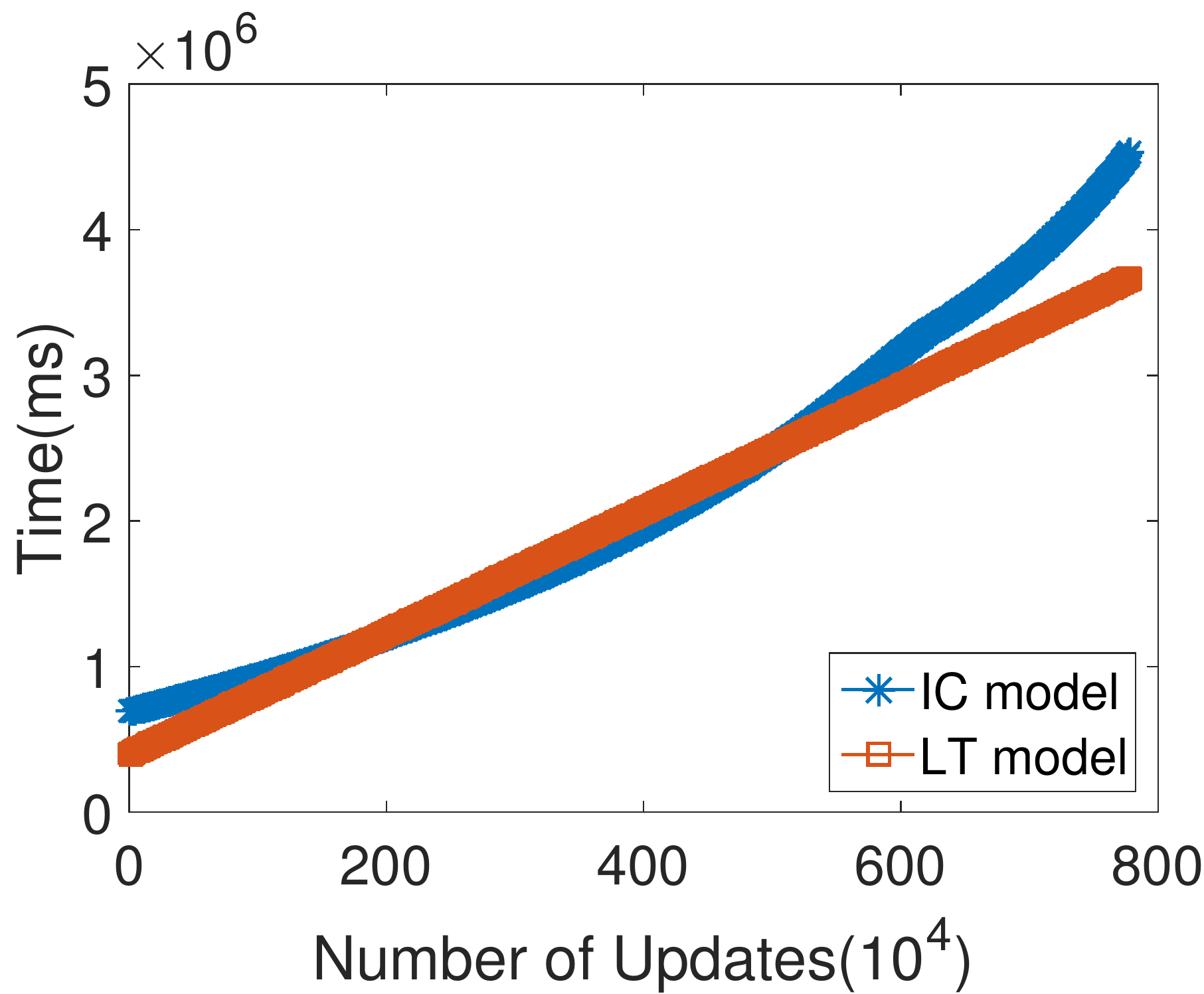}
    }
    \subfigure[flickr-growth (Top-K)]{
      \includegraphics[width=.22\textwidth]{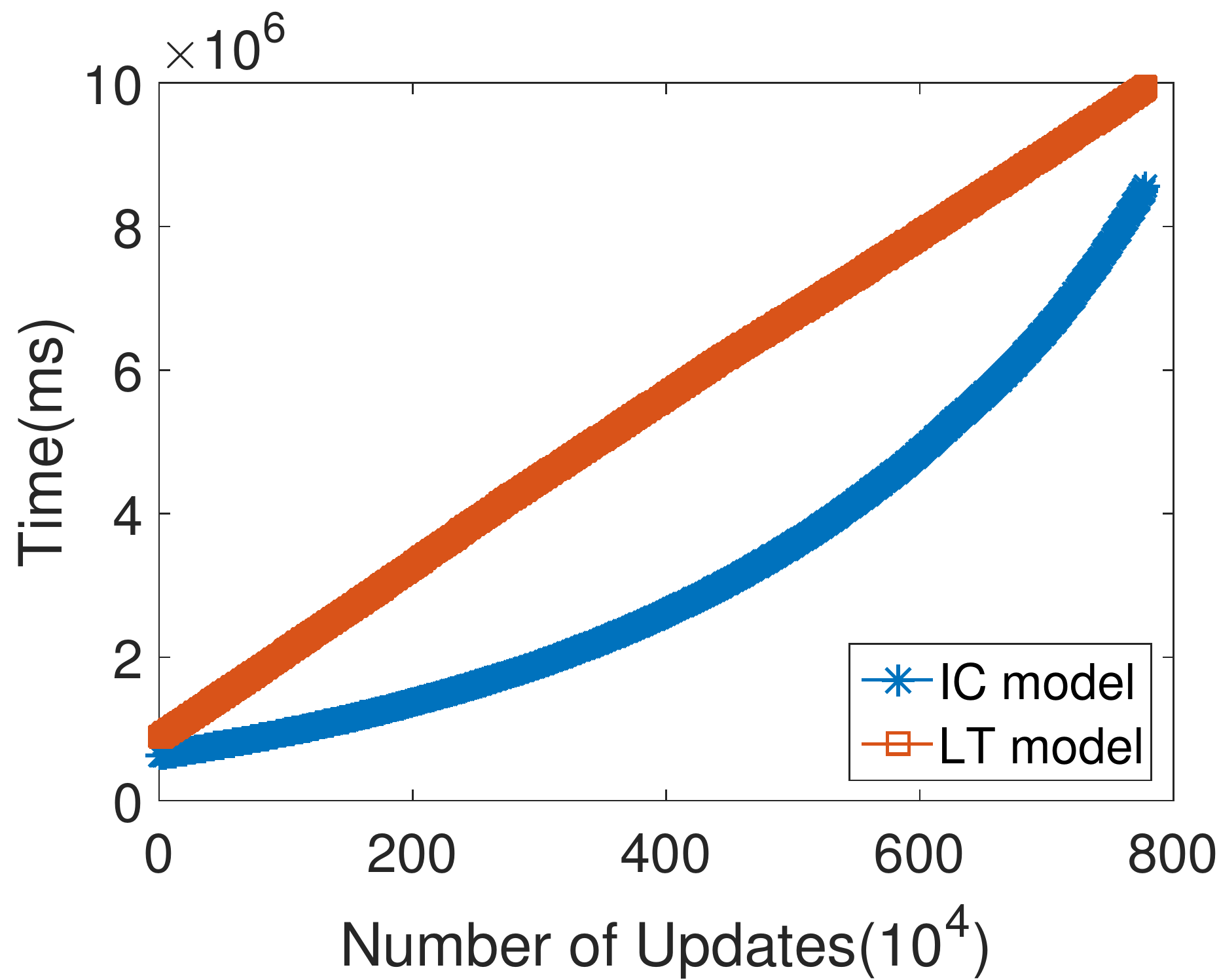}
    }
    \subfigure[Twitter (Threshold)]{
      \includegraphics[width=.22\textwidth]{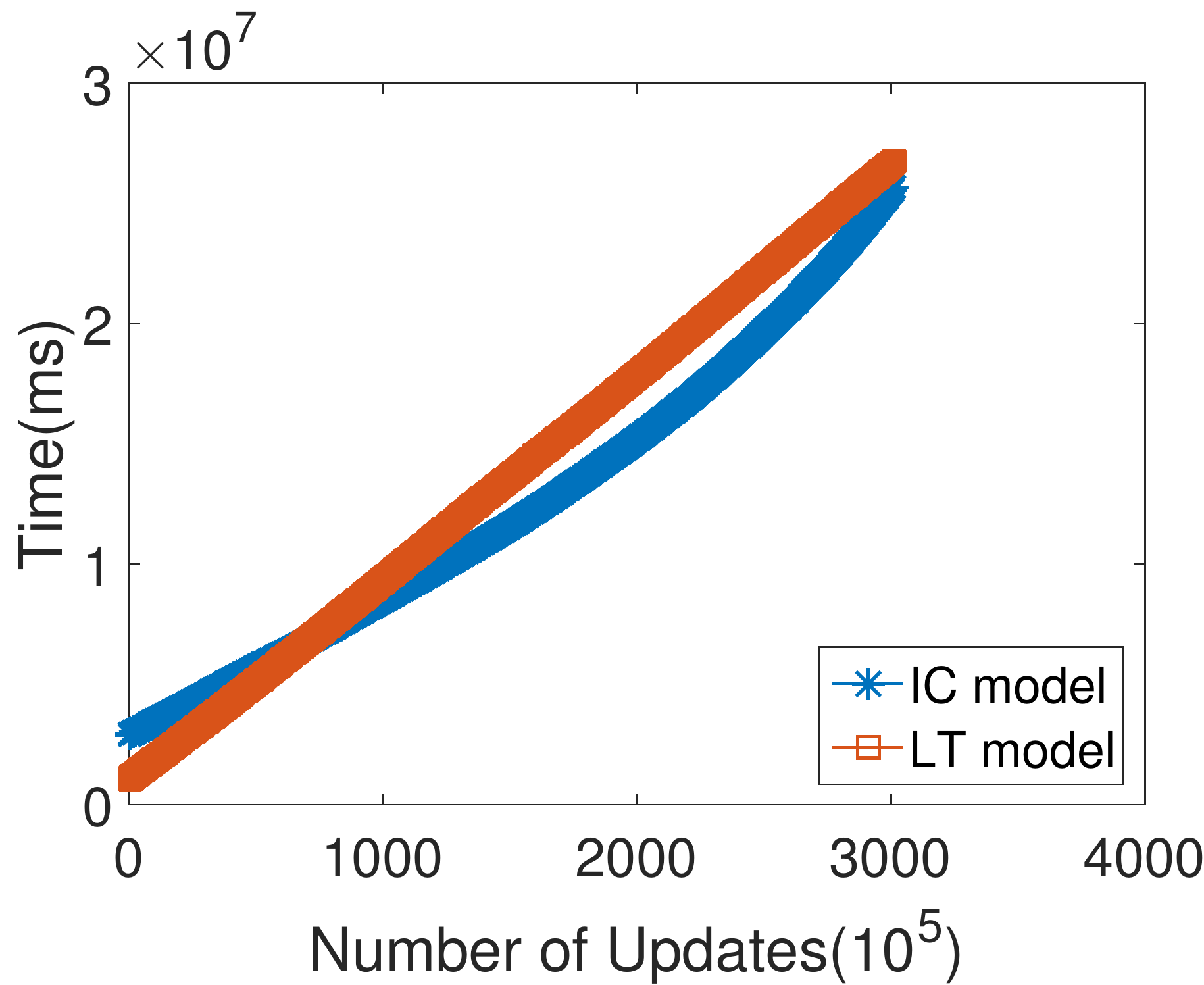}
    }
    \subfigure[Twitter (Top-K)]{
      \includegraphics[width=.22\textwidth]{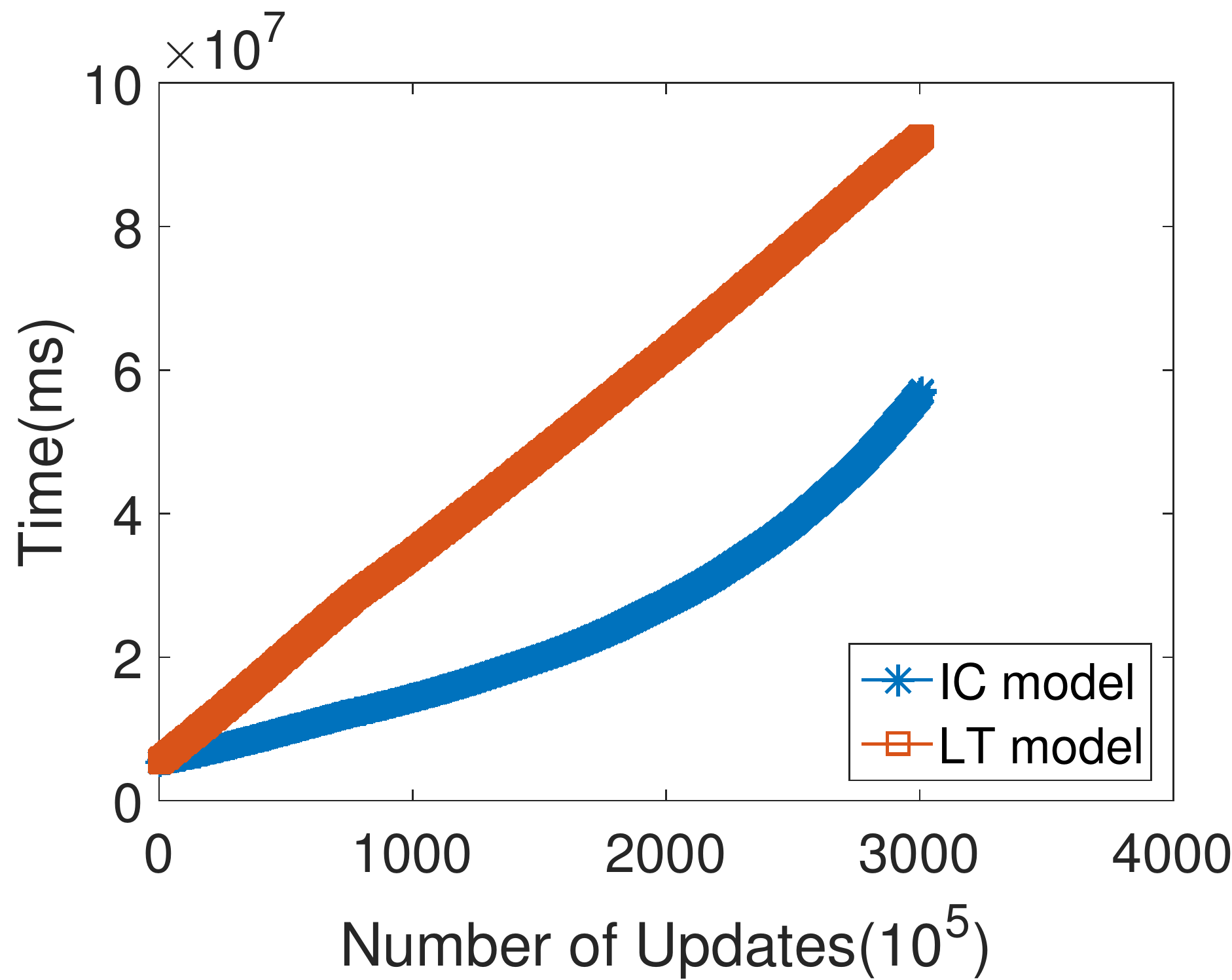}
    }
    \caption{Scalability. }
    \label{fig:efficiency}
\end{figure}

\begin{table}
\centering \caption{Running time (s) on static networks.}
\label{tab:time_static}
\begin{tabular}{*{5}{|c}|}
    \hline
    	\multirow{2}{*}{Dataset} & \multicolumn{2}{c|}{Threshold} & \multicolumn{2}{c|}{Top-K} \\ \cline{2-5}
			&	LT	&	IC	&	LT	&	IC	\\ \hline
	wiki-Vote	&  5.45 & 23.66 & 26.1 & 108.7 \\ \hline
	Flixster	& 79.5 & 380 & 174 & 950 \\ \hline
	soc-Pokec & 246 & 1095 & 2222 & 2685 \\ \hline 
	flickr-growth & 401 & 1977 & 1524 & 5355 \\ \hline
	Twitter & 974 & 8263 & 4828 & 29997 \\ \hline
\end{tabular}
\end{table}

For the LT model, our algorithm scales up roughly linearly. For the IC model, the running time increases more than linear. This is due to our experimental settings. For the LT model, the sum of propagation probabilities from all in-neighbors of a node is always 1, while in the IC model, at the beginning the sum of propagation probabilities from all in-neighbors is roughly 0.9 but becomes 1 finally. Thus, the spreads of nodes change more dramatically in the IC model than in the LT model. According to our analysis in Section~\ref{subsec:update_IC}, the cost of updating the RR sets is proportional to $I_v^{t-1}$, the influence of $v$ at time $t-1$, and $M$, the sample size. In the top-k task, $M$ is decided by $I_{max}$. So the running time curves of the IC model are not linear.

Fig.~\ref{fig:change} shows how $I_{max}$ and the sample size $M$ in the top-k task changes over time in the flickr-growth dataset. We do not report results on other datasets because they are all similar. 

\subsubsection{Memory Usage with respect to Input Size}

We also report the memory usage of our algorithm against the increase of the input graph size. Since the memory needed in Top-K influential nodes mining is usually much higher than the threshold-based mining, we only report results of the Top-K influential nodes mining algorithm. We used the second largest data set, flickr-growth network, to generate some smaller networks. Specifically, we sampled 20\%, 40\%, 60\% and 80\% nodes and extract the induced subgraphs. For each sample rate, we sampled 10 subgraphs and for each subgraph we generated a base network and an update stream as we described in Section~\ref{sec:exp_setting}. We ran the Top-K influential nodes mining algorithm on those generated data. Fig.~\ref{fig:memory} reports the average memory storing the input graph and the average peak memory usage of the RR sets against the sample rate. The results show that the size of sampled graph increases super-linearly while the memory of RR sets increases roughly linearly as the sample rate increases. Fig.~\ref{fig:memory} also shows that the average peak memory used by the RR sets increases sub-linearly as the input graph size increases.

\section{Conclusions and Future Work}\label{sec:con}

In this paper, we proposed novel, effective and efficient polling-based algorithms for tracking influential individual nodes in dynamic networks under the Linear Threshold model and the Independent Cascade model. We modeled dynamics in a network as a stream of edge weight updates. We devised an efficient incremental algorithm for updating random RR sets against network changes. For two interesting settings of influential node tracking, namely, tracking nodes with influence above a given threshold and tracking top-k influential nodes, we derived the number of random RR sets we need to approximate the exact set of influential nodes. We reported a series of experiments on $5$ real networks and demonstrated the effectiveness and efficiency of our algorithms.

There are a few interesting directions for future work. For example, can we apply similar techniques to other influence models such as the Continuous Time Diffusion Model~\cite{du2013scalable}? Since the Continuous Time Diffusion model has an implicit time constraint, how to efficiently update RR sets according to the time constraint is a critical challenge. \nop{Another challenge is the influence maximization problem on dynamic networks with provable quality guarantees.}

\bibliographystyle{abbrv}
\bibliography{DINF}

\end{sloppy}
\end{document}